\documentclass[ecta,nameyear,final]{econsocart}

\RequirePackage[colorlinks,citecolor=blue,linkcolor=blue,urlcolor=blue]{hyperref}

\startlocaldefs

\usepackage{graphics}
\usepackage{amsmath, amssymb, amsthm}
\usepackage{mathpazo}
\usepackage{color}
\usepackage{mathrsfs}
\usepackage{bbm}
\usepackage[lined, ruled]{algorithm2e}
\SetAlFnt{\small\sf}
\usepackage{subfig}
\usepackage{booktabs}
\usepackage{tikz}

\DeclareMathOperator{\Var}{var}

\newcommand{\setntn}[2]{ \{ #1 : #2 \} }
\newcommand{\fore}{\therefore \quad}

\newcommand{\tod}{\stackrel { \mathscr D } {\to} }
\newcommand{\toprob}{\stackrel { p } {\to} }

\newcommand{\iidsim}{\stackrel {\textrm{ {\sc iid }}} {\sim} }
\newcommand{\1}{\mathbbm 1}
\newcommand{\la}{\langle}
\newcommand{\ra}{\rangle}

\newcommand{\given}{\, | \,}

\newcommand{\hH}{\mathscr H}

\newcommand{\eE}{\mathcal E}
\newcommand{\fF}{\mathscr F}

\newcommand{\xX}{\mathcal X}

\newcommand{\RR}{\mathbbm R}
\newcommand{\NN}{\mathbbm N}

\newcommand{\XX}{\mathbbm X}

\newcommand{\bP}{\mathbf P}
\newcommand{\bQ}{\mathbf Q}
\newcommand{\bE}{\mathbf E}

\theoremstyle{plain}

\newtheorem{theorem}{Theorem}[section]

\newtheorem{lemma}{Lemma}[section]

\theoremstyle{definition}

\newtheorem{example}{Example}[section]
\newtheorem{remark}{Remark}[section]

\newtheorem{assumption}{Assumption}[section]

\endlocaldefs

\begin{document}

\begin{frontmatter}

\title{A New Approach to Goodness of Fit for Ergodic \\ Markov Processes}
\runtitle{Goodness of Fit for Ergodic Markov Processes}

\begin{aug}
\author[add1]{\fnms{Vance}~\snm{Martin}\ead[label=e1]{vance@unimelb.edu.au}}
\author[add2]{\fnms{Yoshihiko}~\snm{Nishiyama}\ead[label=e2]{nishiyama@kier.kyoto-u.ac.jp}}
\author[add3]{\fnms{John}~\snm{Stachurski}\ead[label=e3]{j-stachurski@grips.ac.jp}}
\author[add4]{\fnms{Yiran}~\snm{Xie}\ead[label=e4]{yiran.xie@sydney.edu.au}}

\address[add1]{%
\orgdiv{Department of Economics},
\orgname{The University of Melbourne}}

\address[add2]{%
\orgdiv{Institute of Economic Research},
\orgname{Kyoto University}}

\address[add3]{%
\orgname{National Graduate Institute for Policy Studies (GRIPS)}}

\address[add4]{%
\orgdiv{School of Economics},
\orgname{University of Sydney}}
\end{aug}

\begin{funding}
The authors are grateful to Tim Kane, Peter Glynn,
Yoichi Nishiyama, participants at the 2009 NCER conference at Princeton
and the 2011 Australasian Meetings of the Econometric Society, and seminar
participants at the Empirical Micro Research seminar at Tokyo University, the
Nakanoshima Workshop at University of Osaka, and the 2011 Workshop on Statistical Analysis and Related
Topics at Tokyo University.
Our research was supported in part by Australian Research Council Grants DP120100321
and DP0987589, and by Japan Society for the Promotion of
Science Grants-in-Aid 22330067.
\end{funding}

\begin{abstract}
    We introduce a new density-based goodness of fit test for
    ergodic Markov processes.  Our test compares the data against the 
    class of models specified in the null hypothesis, and rejects if no model
    in the class yields a stationary density that matches with the data.  No
    alternative needs to be specified in order to implement the test.
    Although our test compares densities, estimation of smoothing parameters
    is not required, and the test has nontrivial power against $1/\sqrt{n}$
    local alternatives.  The test provides new perspectives on some
    existing problems in econometric and financial modeling.
\end{abstract}

\vspace{0.5em}

\begin{keyword}
    Specification test, goodness of fit, Markov processes
\end{keyword}

\end{frontmatter}

\section{Introduction}

For a dynamic stochastic model used in a given application,
an overriding concern is whether or not the dynamics of the model
are consistent with the time series being modeled.  To give one of many
possible examples, valuations of interest rate derivative securities
depend on the underlying model used to represent the interest rate.  If the
model fit is poor, in the sense that probabilities implied by the model are
inconsistent with actual interest rate dynamics, then the resulting valuation
will be unreliable.

Model evaluation through classical tests oftens require specification of the
alternative hypothesis.  One problem here is that, in many settings,
the theory says little about the set of possible alternatives.  In this case,
a natural approach is to use goodness of fit tests, where the alternative is
unspecified.  A variety of goodness of fit tests have been proposed
in the literature.  These include the nonparametric test of \cite{as96}, the Pearson, Kolmogorov-Smirnov and Cram\'er-von Mises tests and their
various extensions (see, for example, \cite{kp00,cr28,vm31,s36,d55,dur73,po84,an97,dbdg07}), and other
well-known tests such as those proposed by \cite{cw99,bai03,fz03,hl05,cgt08,gc08,ne08,asfp09,k11}.  

For goodness of fit tests of dynamic stochastic models that put little or no
structure on the alternative, one potential problem is that stochastic
processes are relatively complex objects, described by high-dimensional joint
distributions.  As a result, the power of a given test can be dispersed over a
huge space of possible alternatives.  Faced with this reality, there is often
a desire to maintain power in certain directions that are judged to be
important within the context of a given application,  while simultaneously
avoiding strict assumptions on the set of possible alternatives.

One way to achieve this goal is to select a particular implication of the
model that is considered to be important, and then test for nonspecific
departures from this restriction.  An example of this approach is the Hansen
J-test \cite{h82}, which compares the data against theoretical moment
restrictions.  Another is the test proposed by \cite{as96}, where a
nonparametric kernel density estimate of the stationary distribution is
compared to the stationary density of the model.  A\"it-Sahalia's strategy is
appropriate when correct modeling of stationary equilibrium outcomes is
important for the problem at hand, as the test concentrates power against
alternatives with stationary distributions that differ from the stationary
distribution under the null.\footnote{The Hansen J-test can also be regarded
as a test of steady state implications in the time-series setting.  The
potential benefits of matching steady state implications are discussed in
\cite{ashs}.}

Our paper also follows this approach. We propose a novel goodness of fit test
for stationary dynamic models with the Markov property.  As in \cite{as96}, our test compares the data against the set of stationary densities
corresponding to the set of models contained in the null hypothesis, and
rejects if no model in this class yields a stationary density that matches
with the data.  No alternative needs to be specified in order to implement the
test.  Unlike A\"it-Sahalia's test, however, our test is formulated for Markov
processes of arbitrary dimension, involves no smoothing parameters, and is
powerful against $1/\sqrt{n}$ local alternatives regardless of the dimension
of the state space.\footnote{For our test the state space is an
arbitrary measure space, so that, from a theoretical perspective, the dimension of
the space is completely unrestricted.
A\"it-Sahalia's test can be extended to the multivariate case, but
the performance will be compromised because nonparametric kernel density
estimators degrade rapidly as the dimension of the state space increases.}
Another difference from A\"it-Sahalia's test is that the asymptotic
distribution of our test statistic depends explicitly on the time series
properties of the process under the null hypothesis.\footnote{\cite{p98},
p.~455, pointed out that the asymptotic distribution of A\"it-Sahalia's test
statistic depends only on the stationary distribution of the process under the null,
and that this property can bias the size of the test in finite
samples.\label{fn:pr}}


\subsection{The LAE Test with a Simple Null Hypothesis}

\label{ss:sn}

In what follows, we refer to our test as the LAE test.\footnote{Here LAE
stands for ``look-ahead estimator.''  The connection between our test and the
look-ahead estimator is described in footnote~\ref{fn:lae}.}
To describe the LAE test in an abstract setting, we begin with a goodness of fit test
for a single model $p$, where $p$ is a Markov transition density kernel.
Heuristically, $p$ represents a data generating process where $p(x, y)dy$ is
the probability of transitioning from state $x \in \XX$ to state $y \in \XX$
over one unit of time.  Suppose that $p$ is ergodic with a unique stationary
density $\psi$. By definition, $\psi$ satisfies
\begin{equation}
    \label{eq:statden00}
    \int p(x, y) \psi(x) dx = \psi(y)
    \qquad (y \in \XX)
\end{equation}
Our interest is in testing whether some given $\XX$-valued time series
$\{X_t\}_{t=1}^n$ is generated by $p$.  To
test the validity of this null hypothesis, consider the deviation
\begin{equation}
    \label{eq:dev0}
    \left|
        \frac{1}{n} \sum_{t=1}^n p(X_t, y) - \psi(y)
    \right|
\end{equation}
When the null holds, the sequence
$\{X_t\}_{t=1}^n$ is stationary and ergodic with common density $\psi$, and
hence, for large $n$,
\begin{equation*}
    \frac{1}{n} \sum_{t=1}^n p(X_t, y) - \psi(y)
    \approx \bE p(X_t, y) - \psi(y)
    = \int p(x, y) \psi(x) dx - \psi(y)
\end{equation*}
By \eqref{eq:statden00} this evaluates to zero, so, under the null,
the deviation in \eqref{eq:dev0} should be small for large
$n$.  Moreover, since this argument is valid for any given $y$, we can adopt a
functional perspective, regarding 
\begin{equation}
    \label{eq:fal}
    \frac{1}{n} \sum_{t=1}^n p(X_t, \cdot) - \psi(\cdot)  
\end{equation}
as a random element taking values in the function space $L_2$,
and rejecting the null when its norm is large---that is, when its
realization lies outside a sphere $B(r_n, 0)$ centered on the origin of $L_2$.  The radius
$r_n$ of the sphere is computed from an $L_2$ central limit theorem to produce
a test of given size.

\begin{figure}
   \begin{center}
        \begin{tikzpicture}[scale=1.2]

            \node at (0.1,0) [above] {$\psi$};
            \draw[gray] (0,0) circle (1.414cm);
            \draw[fill=black] (0,0) circle (0.06cm);
            \draw[fill=black] (2.2,0.5) circle (0.06cm);
            \node at (2.7,0.5) [above] {$\frac{1}{n} \sum_{t=1}^n p(X_t, \cdot)$};
            \draw[<->] (0,0) -- (-1.3,-0.5);
            \node at (-0.8, -0.5) [right] {$r_n = O(n^{-1/2})$};

        \end{tikzpicture}
        \caption{\label{f:circles} Reject if $\frac{1}{n} \sum_{t=1}^n p(X_t, \cdot)
            \notin B(r_n, \psi)$}
   \end{center}
\end{figure}

The statement that $\frac{1}{n} \sum_{t=1}^n p(X_t, \cdot) - \psi(\cdot)$ lies
outside $B(r_n,0)$ is equivalent to the statement that $\frac{1}{n}
\sum_{t=1}^n p(X_t, \cdot)$ lies outside $B(r_n, \psi)$, the sphere of radius $r_n$ centered on
$\psi$.  This perspective is illustrated in figure~\ref{f:circles}.  The test
compares the theoretical stationary distribution $\psi$ against an estimate
$\frac{1}{n} \sum_{t=1}^n p(X_t, \cdot)$ that is $\sqrt{n}$-consistent for
$\psi$ under the null, and rejects if the deviation is greater than $r_n$.  The radius $r_n$ is shown to be of the form $c /
\sqrt{n}$, where $c$ depends on the size of the test.  The fact that the
radius is $O(1/\sqrt{n})$ suggests that the test will have nontrivial power
against $1/\sqrt{n}$ local alternatives.  This intuition is confirmed in
section~\ref{ss:la} by applying Le Cam's notion of continguity.\footnote{The
consistency of $\frac{1}{n} \sum_{t=1}^n p(X_t, \cdot)$ for $\psi$ has also
been studied in a computational (rather than statistical) setting.  The idea
is that if $\psi$ is intractable but $p$ is known and $\{X_t\}$ can be
simulated from $p$, then $\frac{1}{n} \sum_{t=1}^n p(X_t, \cdot)$ can be used
as an approximation of $\psi$.  In this setting, $\frac{1}{n} \sum_{t=1}^n
p(X_t, \cdot)$ is called the look-ahead estimator of $\psi$.  See
\cite{gh98}.\label{fn:lae}}

\subsection{The LAE Test with Estimated Parameters}

\label{ss:epd}

The goodness of fit test described above is mainly of theoretical interest.
In practical situations, models usually contain unknown parameters,
and we wish to test whether our parametric class of models can represent
the data.  We extend to this setting by taking a parametric family of Markov
models, indexed by a vector $\theta \in \Theta$.  In particular, for each
$\theta$, we take $p(\theta,x,y)$ to be a density transition kernel, and let
$\psi(\theta,y)$ be the corresponding stationary density.  The null hypothesis
is that $\{X_t\}_{t=1}^n$ is generated by $p(\theta,x,y)$ for some $\theta
\in \Theta$.  Taking $\{\hat \theta_n\}$ to be a $\sqrt{n}$-consistent
estimator of the true parameter vector under the null, we propose a test statistic
based on the $L_2$ norm of
\begin{equation*}
    \frac{1}{n} \sum_{t=1}^n p(\hat \theta_n, X_t, \cdot) - \psi(\hat \theta_n, \cdot)
\end{equation*}
This expression is the parametric analogue of \eqref{eq:fal}, and similar
intuition applies.  However, the asymptotic theory must be adjusted to take
into account randomness in parameter estimates.  See section~\ref{s:ep} for
details.

\subsection{Other Tests}

\label{ss:ot}

The LAE test we propose in this paper is closely related to three important tests.
One of these is the Cram\'er-von Mises test \cite{cr28,vm31},
which was generalized to the time series setting by \cite{as93,cb}.  To illustrate the connection
between the Cram\'er-von Mises test and the LAE test, consider a simple case
without estimated parameters, and where the state space $\XX$ is the $[0,1]$
inverval.  (More general cases change little in what follows.) Let
$\{X_t\}_{t=1}^n$ be an $\XX$-valued process, and let $\Psi_n(y) :=
n^{-1} \sum_{t=1}^n \1\{X_t \leq y\}$ be the empirical distribution of the
data.  Take as the null hypothesis the statement that $\{X_t\}_{t=1}^n$ is
stationary and geometrically ergodic with common cumulative distribution
function $\Psi$.  Under the null, it can be shown that 
\begin{equation}
    \label{eq:cvmt}
    n \| \Psi_n - \Psi \|^2
    \; \tod \;
    \sum_{\ell = 1}^{\infty} \sigma_{\ell} Z_{\ell}^2 
    \qquad \text{as} \quad
    n \to \infty
\end{equation}
where $\| \cdot \|$ is the $L_2$ norm on $\XX$, $\{Z_{\ell}\}$ is an {\sc iid} sequence
of scalar standard normal random variables and $\{\sigma_{\ell}\}_{\ell \geq
1}$ are the eigenvalues of a certain covariance operator $\Sigma$ (cf., e.g., \cite{dbdg07}).  Taking
  $c_{\alpha}$ to be the $1-\alpha$ quantile of the random variable on the
  right-hand side of \eqref{eq:cvmt}, a test rejecting the null hypothesis
  whenever $n \| \Psi_n - \Psi \|^2 > c_{\alpha}$ is asymptotically of size
  $\alpha$.

As can be seen by referring back to section~\ref{ss:sn}, the LAE test has a
very similar structure to the Cram\'er-von Mises test, with the primary
difference being that the LAE test statistic looks at $L_2$ deviations between
densities rather than distribution functions.  In particular, in the LAE test,
\begin{enumerate}
    \item the cdf $\Psi$ is replaced by the corresponding density $\psi$,
    \item the cdf estimate $\Psi_n$ is replaced by the density estimate 
        $n^{-1} \sum_{t=1}^n p(X_t, \cdot)$ on the
        left-hand side of \eqref{eq:fal}, and
    \item the covariance operator $\Sigma$ is altered by these modifications,
        leading to different eigenvalues and hence a different critical value
        $c_{\alpha}$.
\end{enumerate}

One advantage of the LAE test over the Cram\'er-von Mises test is that in the 
Cram\'er-von Mises test, the rejection
criterion is based on the $L_2$ deviation between estimated and hypothesized
cdfs. As a measure of deviation between two distributions, $L_2$ deviation
between the cdfs is relatively difficult to motivate in an econometric
setting.  On the other hand, the LAE test is concerned with $L_2$ deviation
between \emph{densities}.  This measure is straightforward to motivate.  For
example, consider a setting where an agent chooses an optimal action $a^*$ by
minimizing expected loss $\ell(a, \psi) := \int L(a, y) \psi(y) dy$.  Here
$\psi$ is the density of a vector of relevant state variables, and $L(a,y)$ is
the loss from choosing action $a$ when the realized state is $y$ (see, e.g.,
\cite{dgl98}).  Suppose we have a model that implies some
density $\psi_m$ for the state.  The true and unknown density for the state we
denote by $\psi_0$.  In this setting, we wish to know whether $\min_a \ell(a,
\psi_m)$ is close to the true minimum $\min_a \ell(a, \psi_0)$.  Bounding the
deviation between these minima requires a bound on $\sup_a |\ell(a,\psi_m) -
\ell(a, \psi_0)|$.  Such a bound can be obtained via the Cauchy-Schwartz
inequality, which yields
\begin{equation*}
    \sup_a |\ell(a, \psi_m) - \ell(a, \psi_0)| 
        \leq  \sup_a \left[ \int L(a,y)^2 dy \right]^{1/2}
        \left[ \int (\psi_m(y) - \psi_0(y))^2 dy \right]^{1/2}
\end{equation*}
The term on the far right is the $L_2$ deviation between the densities $\psi_m$ and $\psi_0$.

The LAE test is not the first to consider a test
statistic based on $L_2$ deviation between densities.  A well-known test based
on this deviation was constructed by \cite{as96}. The main difference
between the LAE test and the one proposed by A\"it-Sahalia is that in the
latter, the $L_2$ deviation is between the hypothesized density and a
nonparametric kernel density estimate constructed from the data.  The
relationship between the LAE test and A\"it-Sahalia's test was discussed
earlier in the introduction.  Further discussion of the relationship between
the tests is given in section~\ref{s:d}.

Finally, we mention a third test that is closely connected to the one proposed
in this paper.  Recall that Hansen's \cite{h82} J-test begins with a moment
restriction of the form $\bE g(X_t, \theta) = 0$ for some function $g$.  The
null hypothesis of the test is 
\begin{equation}
    \label{eq:gmmnull}
    H_0 \colon \; \exists \, \theta \in \Theta \; 
        \text{ such that } \; \bE g(X_t, \theta) = 0
\end{equation}
The null hypothesis is rejected if 
\begin{equation}
    \label{eq:hsts}
    n \left\| \frac{1}{n} \sum_{t=1}^n g(X_t, \hat \theta_n) \right\|^2_W
\end{equation}
is large relative to a particular $\chi^2$ distribution, where $\| \cdot \|_W$
is a weighted euclidean norm.  

To formulate the LAE test in a parallel manner, let
\begin{equation}
    \label{eq:pbardef}
    \bar p(\theta, x, y) := p(\theta, x, y) - \psi(\theta, y) 
    \qquad (\theta \in \Theta, \;\; (x,y) \in \XX \times \XX)
\end{equation}
As in section~\ref{ss:epd} above, the
null hypothesis is that $\{X_t\}_{t=1}^n$ is generated by $p(\theta,x,y)$ for
some $\theta \in \Theta$.  Under the null, then, there exists a $\theta$ with
$X_t \sim \psi(\theta,\cdot)$ for all $t$, and hence \eqref{eq:statden00} implies that 
\begin{equation*}
    \bE \bar p(\theta,X_t,y)
    = \bE p(\theta,X_t,y) - \psi(\theta, y) 
    = 0  
\end{equation*}
Treating all $y$ simultaneously, we can write this restriction as 
\begin{equation}
    \label{eq:gmmfh}
    \exists \, \theta \in \Theta \; \text{ such that } \; 
    \eE \bar p(\theta, X_t, \cdot) = 0
\end{equation}
where $\eE$ is a functional expectation for random elements of $L_2$, and the
zero on the right-hand side is the origin of $L_2$.  This is an
infinite-dimensional version of \eqref{eq:gmmnull}, and the LAE test statistic is
analogous to \eqref{eq:hsts} when $g$ is replaced by $\bar p$ and the norm in \eqref{eq:hsts} is replaced
with the $L_2$ norm.\footnote{An infinite-dimensional Hansen J-test was
considered in the elegant paper of \cite{cf00}.
Their focus is mainly on estimation with {\sc iid} observations.  (Ours is on
testing with dependent
observations.) Their theoretical results on the Hansen J-test cannot be
used here, since we permit the state space for the Markov process to be
multidimensional (it may in fact be infinite-dimensional), we permit the
parameters to be estimated by any $\sqrt{n}$-consistent estimator, and, in
addition, our data are explicitly Markovian under the null.  Since not all
ergodic Markov processes satisfy central limit theorems, a careful asymptotic
theory is provided.}

\section{Preliminaries}

\label{s:prel}

We begin by recalling some elementary facts regarding random variables in a separable Hilbert
space $\hH$ with inner product $\la \cdot, \cdot \ra$.  An \emph{$\hH$-valued
random variable} $F$ on probability space $(\Omega, \fF, \bP)$ is a measurable
map from $(\Omega, \fF)$ into $\hH$ paired with its Borel sets.  If $\bE \|F\|
< \infty$, where $\bE$ is the ordinary scalar expectation, then the
\emph{vector (or Pettis) expectation} $\eE F$ of $F$ is the unique element of
$\hH$ satisfying $\la \eE F, h \ra = \bE \la F, h \ra$ for all $h \in
\hH$. In most of what follows, $\hH$ will be a 
function space, and the vector expectation of a random function can be
obtained by taking scalar expectations at each point: If $F$
is a random function of the form $F(y) = f(X,y)$ where $X$ is some random
variable and $f$ is real-valued, then $\eE F$ is given by the
function $m_f(y) := \bE f(X,y)$.\footnote{For more
details on the Pettis expectation, see, for example, \cite{b00}, chapter~1.} 

If $\bE \| F \|^2 < \infty$,
then the \emph{covariance operator} $C$ of $F$ is the linear operator defined
by
\begin{equation}
    \label{eq:defco}
    \la g, C h \ra = \bE \la g, F - \eE F \ra \la h, F - \eE F \ra
    \qquad \forall \; g, h \in \hH
\end{equation}
An $\hH$-valued random variable $G$ is called \emph{Gaussian} if $\la h, G \ra$ is
normally distributed on $\RR$ for every $h \in \hH$.   We write $G \sim
N(m, C)$ if $G$ is Gaussian on $\hH$ with mean $m = \eE G$ and covariance
operator $C$.  Letting $\{Z_{\ell}\}_{\ell \geq 1}$ be an {\sc iid} sequence
of standard normal random variables and $\{\lambda_{\ell}\}_{\ell \geq 1}$ be
the eigenvalues of $C$, we can characterize the distribution of $\|G\|^2$ as
follows:

\begin{lemma}
    \label{l:dsn}
    If $G \sim N(0, C)$ on $\hH$, then $\| G \|^2$ has the same distribution as 
        $\sum_{\ell = 1}^{\infty} \lambda_{\ell} Z_{\ell}^2$.
\end{lemma}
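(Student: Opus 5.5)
The plan is to diagonalize the covariance operator and expand $G$ in its eigenbasis (Karhunen--Lo\`eve). First I would record the basic properties of $C$. Since $\bE\|G\|^2<\infty$ (needed for $C$ to be defined by \eqref{eq:defco}), $C$ is self-adjoint by the symmetry of \eqref{eq:defco}. It is positive, since $\la h, Ch\ra = \bE \la h, G\ra^2 \geq 0$. It is also trace class: for any orthonormal basis $\{e_\ell\}$, monotone convergence and Parseval give $\sum_\ell \la e_\ell, C e_\ell\ra = \bE \sum_\ell \la e_\ell, G\ra^2 = \bE\|G\|^2 < \infty$. The spectral theorem for compact self-adjoint operators then supplies an orthonormal basis $\{e_\ell\}$ of $\hH$ with $C e_\ell = \lambda_\ell e_\ell$ and $\lambda_\ell \geq 0$. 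Separability of $\hH$ guarantees that this basis is countable; we complete the eigenvectors of nonzero eigenvalues with a basis of $\ker C$.

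Next I would define the coordinates $\xi_\ell := \la e_\ell, G\ra$. By Parseval, $\|G\|^2 = \sum_\ell \xi_\ell^2$ holds pointwise on $\Omega$. The key step is to show that the $\xi_\ell$ are independent with $\xi_\ell \sim N(0,\lambda_\ell)$. For any finite set of indices and any reals $a_1,\dots,a_k$, the combination $\sum_j a_j \xi_{\ell_j} = \la \sum_j a_j e_{\ell_j}, G\ra$ is normal by the definition of Gaussian. Hence $(\xi_{\ell_1},\dots,\xi_{\ell_k})$ is jointly Gaussian as a vector in $\RR^k$. Its mean is zero because $\bE\la h,G\ra = \la h, \eE G\ra = 0$. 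Its covariances are $\bE \xi_\ell\xi_m = \la e_\ell, C e_m\ra = \lambda_m \delta_{\ell m}$. Jointly Gaussian variables that are uncorrelated are independent, so every finite subfamily is independent, and therefore the whole sequence is independent.

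Finally, write $\xi_\ell = \sqrt{\lambda_\ell}\, Z_\ell$, where the $Z_\ell$ are iid standard normal. When $\lambda_\ell=0$ we have $\xi_\ell = 0$ a.s., and we take $Z_\ell$ to be an auxiliary independent standard normal, which may require enlarging the probability space. This does not matter, because only the law is at stake. Then $\|G\|^2 = \sum_\ell \lambda_\ell Z_\ell^2$ a.s.

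The main obstacle is passing from equality of finite-dimensional laws to equality in law of the infinite sums. Here it helps that both sides are constructed on the same space as almost sure limits of the partial sums $\sum_{\ell\le L}\xi_\ell^2$. These partial sums converge because $\sum_\ell \lambda_\ell = \bE\|G\|^2 < \infty$, which gives $L^1$ convergence as well as monotone a.s. convergence. So equality in distribution follows directly. Equivalently, we can compare the laws of the two limits of identically distributed, monotonically increasing partial sums: their distribution functions are monotone limits of equal functions. The one point needing care is the trace-class property, which gives $\sum\lambda_\ell<\infty$ and makes the right-hand side well defined.
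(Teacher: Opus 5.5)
Your proof is correct, and it takes a different route from the paper's. The paper never expands $G$ itself. It builds an auxiliary element $G_0 := \sum_{\ell} \lambda_{\ell}^{1/2} Z_{\ell} v_{\ell}$ from an independent iid normal sequence and checks that $G_0 \sim N(0,C)$. It then uses the fact that a Gaussian law on $\hH$ is determined by its mean and covariance (the characteristic-functional characterization \eqref{eq:ecf}) to conclude that $G$ and $G_0$ have the same law, and finally applies Parseval to $G_0$.

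You instead apply Parseval to $G$ directly. You then use only finite-dimensional facts (Cram\'er--Wold for $(\xi_{\ell_1},\dots,\xi_{\ell_k})$, and ``uncorrelated jointly Gaussian implies independent'') to show that the coordinates $\xi_\ell = \la e_\ell, G\ra$ are independent $N(0,\lambda_\ell)$. This gains you several things:
\begin{itemize}
\item you need no uniqueness theorem for Gaussian measures on $\hH$;
\item you need not check that the random series defining $G_0$ converges in $\hH$, which the paper leaves implicit;
\item you derive the trace-class property from $\bE\|G\|^2<\infty$ rather than citing it;
\item you get an almost sure identity $\|G\|^2 = \sum_\ell \lambda_\ell Z_\ell^2$, not just equality in law.
\end{itemize}
The cost is a possible enlargement of the probability space to supply $Z_\ell$ for zero eigenvalues. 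Also, your argument starts from a given $G$ rather than constructing one. The paper's construction additionally yields existence of $N(0,C)$ and the series representation \eqref{eq:crg}, which it reuses in the proof of theorem~\ref{t:dms}; your Karhunen--Lo\`eve expansion of the limit supplied by theorem~\ref{t:bk2} would serve equally well there.

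Your final paragraph is unnecessary. Once $\xi_\ell = \sqrt{\lambda_\ell}\,Z_\ell$ almost surely for every $\ell$, the identity holds almost surely and equality in law is immediate.
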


\subsection{Set Up}

\label{ss:su}

We consider stochastic processes taking values in an arbitrary state space $\XX$, with
countably generated $\sigma$-algebra $\xX$ and $\sigma$-finite measure $\mu
\colon \xX \to \RR_+$. To simplify notation, we use symbols such as $dx$ and
$dy$ to indicate integration with respect to $\mu$, rather than $\mu(dx)$ and
$\mu(dy)$.  Two common settings are where
\begin{enumerate}
    \item $\XX$ is a Borel subset of $\RR^k$ and $\mu$ is Lebesgue measure.
    \item $\XX$ is discrete and $\mu$ is the counting measure.
\end{enumerate}
A density on $\XX$ is any $\xX$-measurable $f \colon
\XX \to \RR_+$ with $\int f(x) \, dx = 1$.
A \emph{density kernel} on $\XX$ is an $\xX \otimes \xX$-measurable
function $p \colon \XX \times \XX \to \RR_+$ such that $p(x,\cdot)$ is a
density on $\XX$ for all $x \in \XX$.  An $\XX$-valued process
$\{X_t\}_{t \in \NN}$ will be called \emph{$p$-Markov} if it is stationary, 
Markov, and $p(X_t, \cdot)$
is the conditional density of $X_{t+1}$ given $X_t$ for all $t$.

\begin{example}
    \label{eg:v}
    Let $\XX = \RR$, let $\xX$ be the Borel sets and let $\mu$ be Lebesgue
    measure.  Under the \cite{v77} model, the rate of interest $X_t$ follows
    \begin{equation}
        \label{eq:v}
        d X_t = \kappa (b - X_t) d t + \sigma d W_t
    \end{equation}
    where $\kappa$, $b$ and $\sigma$ are parameters, and $W_t$ is Brownian
    motion.  The transition probability function associated with this process
    is 
    \begin{equation}
      \label{eq:vk}
      q(t, x, y) := 
      \{2\pi v(t)\}^{-1/2}
      \exp\left\{ \frac{- (y - m(t, x))^2 }{2 v(t)}
                  \right\}
    \end{equation}
    where $v(t) := \sigma^2 (1 - e^{-2\kappa t}) / (2 \kappa)$ and 
    $m(t, x) := b + (x - b) e^{-\kappa t}$.  If a unit of time
    corresponds to one year and $\{X_t\}_{t=1}^n$ is a sequence of
    monthly observations from the process (\ref{eq:v}), then $\{X_t\}_{t=1}^n$
    is $p$-Markov for $p(x,y) := q(1/12, x, y)$.
\end{example}

\begin{example}
    \label{eg:ar1}
    Let $\XX = \RR^k$, let $\xX$ be the Borel sets, and let $\mu$ be Lebesgue measure. 
    Consider a stationary nonlinear AR(1) process
    \begin{equation}
        \label{eq:srm}
        X_{t+1} = g(X_t) + W_{t+1} 
        \qquad (W_t)_{t\geq 1} \iidsim \phi
    \end{equation}
    where $\phi$ is a density on $\RR^k$ and $g$ is a measurable function from
    $\RR^k$ to itself.  The sequence $\{X_t\}$ in (\ref{eq:srm}) is $p$-Markov for
    \begin{equation}
        \label{eq:srmk}
        p(x,y) := \phi(y - g(x))
        \qquad ( (x,y) \in \RR^k \times \RR^k)
    \end{equation}
\end{example}

\begin{example}
    \label{eg:disc}
    Let $\XX = \{1,\ldots,N\}$, let $p$ be a stochastic $N
    \times N$ matrix,\footnote{That is, $p(x,y) \geq 0$ for each $(x,y) \in
    \XX \times \XX$, and $\sum_{y\in \XX} p(x,y) =1$ for each $x \in \XX$.}
    and let $\{X_t\}$ be a stationary Markov chain on $\XX$ satisfying 
    \begin{equation*}
        \bP\{X_{t+1} = y \,|\, X_t = x\} = p(x, y)   
        \qquad ((x,y) \in \XX \times \XX)
    \end{equation*}
    If $\xX := \setntn{B}{B \subset \XX}$ and $\mu$ is the counting measure,
    then $p$ is a density kernel on $\XX$, and $\{X_t\}$ is
    $p$-Markov.
\end{example}

Returning to the general case, let a density kernel $p$ be given, and consider a
$p$-Markov process $\{X_t\}$ on $\XX$.  The conditional distribution of $X_t$
given $X_0=x$ is represented by the $t$-th order density $p^t(x,\cdot)$, where
$p^1 := p$ and
\begin{equation*}
    p^t(x,y) := \int p(x,z) p^{t-1} (z,y) dz
    \qquad ( (x,y) \in \XX \times \XX)
\end{equation*}
A density $\psi$ on $\XX$ is called \emph{stationary} with respect to $p$ if 
\eqref{eq:statden00} holds.  In all cases we consider, $p$ will have a unique
stationary density $\psi$.  If $\{X_t\}$ is stationary and $p$-Markov, then
$X_t \sim \psi$ for all $t \geq 0$.  To simplify notation, in what follows we
let
\begin{equation}
    \label{eq:pbardef2}
    \bar p(x,y) := p(x,y) - \psi(y)
    \qquad ((x,y) \in \XX \times \XX)
\end{equation}
Let $L_2 := L_2(\XX, \xX, \mu)$ denote the set of all $\xX$-measurable
functions $h$ mapping $\XX$ to $\RR$ such that $\int h(x)^2 dx := \int h(x)^2
\mu(dx)$ is finite.  As usual, elements of $L_2$ equal $\mu$-almost everywhere
are identified.  The inner product and norm on $L_2$ are defined by $\la g, h
\ra := \int g(x) h(x) dx$ and $\|h\| := \la h, h \ra^{1/2}$.  Since we have
assumed that $\xX$ is countably generated, the Hilbert space $(L_2, \| \cdot \|)$ is
separable.

\subsection{Ergodicity Assumptions}

\label{ss:ea}

Our asymptotic theory relies on a central limit theorem, which in turn depends on the
properties of the underlying Markov process under the null hypothesis.  We
will assume that the process
is geometrically ergodic. For a given density kernel $p$, ergodicity requires
that $p$ has a unique stationary density $\psi$, and any $p$-Markov process
satisfies the strong law of large nameyear (see \cite{mt09}, theorem
17.1.7, and \cite{l02}, theorem~21.12 for the many equivalent
definitions of ergodicity).  Geometric ergodicity requires that, in
addition, there exist positive constants $\lambda < 1$ and $L < \infty$ and 
a weight function $V \colon \XX \to \RR_+$ such that
\begin{equation}
    \label{eq:vuedef}
    \int V(x) \psi(x) dx < \infty
    \quad \text{and} \quad
    \left| \int_B p^t(x,y) dy - \int_B \psi(y)dy \right|
    \leq \lambda^t L V(x)
\end{equation}
for all $B \in \xX$, $x \in \XX$ and $t \in \NN$.\footnote{\cite{k06}
gives geometric ergodicity conditions for a number of
popular time-series models, including (nonlinear) ARMA, bilinear, GARCH and
random coefficient models.  \cite{ns04} demonstrate
geometric ergodicity of the one-sector stochastic optimal growth model under
the classical assumptions.  \cite{mt09}, chapter~15, provide a
general treatment.} In what follows, we will say that $p$ is \emph{$V$-mixing} if
there exists a function $V \colon \XX \to \RR_+$ such that $p$ is
geometrically ergodic with weight function $V$, and, in addition, there are
nonnegative constants $c_0$, $c_1$ and $\gamma$ with $\gamma < 1$ and
\begin{equation}
    \label{eq:bop}
    \int p(x, y)^2 dy \leq c_0 + c_1 V(x)^{\gamma} \qquad \forall \, x \in \XX
\end{equation}
Together, geometric ergodicity and \eqref{eq:bop} provide the mixing and moment
conditions necessary for our asymptotic theory to hold.

\begin{example}
    The Vasicek density kernel $p$ in example~\ref{eg:v} is $V$-mixing
    with $V(x) = |x|$ whenever $\kappa > 0$.  The unique stationary density
    $\psi$ is $N(b, \sigma^2/(2\kappa))$.  
\end{example}

\begin{example}
    Consider $p$ in example~\ref{eg:ar1}. Let $\|\cdot\|_{E}$ be the euclidean
    norm on $\RR^k$.  If $g$ and $\phi$ are both continuous, $\phi$ is
    strictly positive on $\RR^k$, $\int
    \phi(z)^2 dz < \infty$, and there exist constants $\alpha \in [0,1)$
    and $\beta \in \RR_+$ such that $\|g(x)\|_E \leq \alpha \|x\|_E + \beta$
    for all $x \in \RR^k$, then $p$ is $V$-mixing with $V(x) := \|x\|_E$.\footnote{By \cite{mt09},
    prop~6.1.5, thm.~6.2.9, and thm.~16.1.2, $p$
    is geometrically ergodic with weight function $V(x) := \|x\|_E$.
    Moreover, \eqref{eq:bop} is satisfied with $c_0 := \int
    \phi(y)^2 dy$ and $c_1 = 0$.}
\end{example}

\begin{example}
    If the discrete density kernel $p$ in example~\ref{eg:disc} is irreducible and
    aperiodic, then $p$ is $V$-mixing with $V \equiv 1$.
\end{example}


\section{Goodness of Fit for Markov Processes}
\label{s:npgft}

We begin discussion of the LAE test in this section by looking at a simple null
hypothesis, corresponding to the statement that the data are generated by a
particular density kernel $p$.  (The composite null case is treated from
section~\ref{s:ep} on.)  Our asymptotic theory for the simple null case will
make use of the following result, which is similar to theorem~1 of \cite{sm08}.  In the statement of the result, the symbol $N(0,\Lambda)$ represents a
centered Gaussian distribution on $L_2$ with covariance operator $\Lambda$,
convergence in distribution has the usual meaning,\footnote{If $E$ is a metric space, then a sequence of
$E$-valued random variables $\{Y_n\}$ converges in distribution to an
$E$-valued random variable $Y$ if $\bE g(Y_n) \to g(Y)$ for every
continuous bounded $g \colon E \to \RR$.}
and $\bar p$ is defined in \eqref{eq:pbardef2}.

\begin{theorem}
    \label{t:bk2}
    If $p$ is $V$-mixing and $\{X_t\}$ is $p$-Markov,
    then
    \begin{equation}
        \label{eq:far}
        n^{-1/2} \sum_{t=1}^n \bar p(X_t, \cdot) \tod N(0,\Lambda)
        \qquad (n \to \infty)
    \end{equation}
    for the covariance operator $\Lambda$ satisfying 
    \begin{equation}
        \label{eq:defc0}
        \la h, \Lambda h \ra 
        = \bE \la \bar p(X_1, \cdot), h \ra^2
        + 2 \sum_{t=2}^{\infty} \bE  \la \bar p(X_1, \cdot), h \ra
                \la \bar p(X_t, \cdot), h \ra
                \qquad (h \in L_2)
    \end{equation}
\end{theorem}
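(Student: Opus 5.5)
The plan is to treat $S_n := n^{-1/2}\sum_{t=1}^n \bar p(X_t,\cdot)$ as a normalized partial sum of a stationary, geometrically mixing sequence of $L_2$-valued random elements $F_t := \bar p(X_t,\cdot)$, and to invoke a Hilbert-space CLT for mixing sequences. The standard route splits weak convergence in $L_2$ into two pieces: finite-dimensional convergence and tightness. The finite-dimensional piece is: for every $h \in L_2$, $\la S_n, h\ra \tod N(0, \la h, \Lambda h\ra)$, which by Cram\'er--Wold also gives joint convergence of finitely many projections. Tightness in $L_2$ is then obtained through a uniform control on the tails of the coordinates in an orthonormal basis. Alternatively, one can directly cite a Hilbert-space CLT for strongly mixing or $\beta$-mixing stationary sequences (e.g.\ Dehling's or Merlev\`ede--Peligrad--Utev's results). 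Such results require only $\bE\|F_1\|^{2+\delta}<\infty$ and a summable mixing-rate condition.

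First, I establish the moments and centering. By \eqref{eq:statden00}, $\eE F_t = 0$. By \eqref{eq:bop}, $\|F_t\|^2 \le 2\int p(X_t,y)^2dy + 2\|\psi\|^2 \le C(1+V(X_t)^\gamma)$, and $\|\psi\|^2 \le \int\!\int p(x,y)^2 dy\,\psi(x)dx<\infty$ by Jensen. Since $\bE V(X_t)<\infty$ by \eqref{eq:vuedef} and $\gamma<1$, we get $\bE\|F_t\|^{2+\delta} \le C\,\bE(1+V(X_t))^{\gamma(2+\delta)/2}<\infty$ for $\delta := 2/\gamma - 2>0$ (or any $\delta>0$ if $\gamma=0$).

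Next, I handle mixing. Geometric ergodicity \eqref{eq:vuedef} with $\int V\psi<\infty$ implies that the stationary chain is $\beta$-mixing with $\beta(k) \le \lambda^k L\int V\psi$, hence strongly mixing at a geometric rate. This is a standard fact (Davydov's bound; \cite{mt09}), since $\beta(k) = \int \|p^k(x,\cdot)-\psi\|_{TV}\psi(x)dx$. Because $F_t$ is a measurable function of $X_t$ alone, $\{F_t\}$ inherits the same mixing coefficients. Combined with the $2+\delta$ moment, geometric mixing gives $\sum_k \alpha(k)^{\delta/(2+\delta)}<\infty$.

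Then I apply the CLT. For scalar projections, $\la F_t,h\ra$ is a stationary, strongly mixing, centered sequence with $2+\delta$ moments, so Ibragimov's CLT gives convergence to $N(0,\sigma_h^2)$. Here $\sigma_h^2$ is exactly the series in \eqref{eq:defc0}, which converges absolutely by Davydov's covariance inequality. For tightness, fix an orthonormal basis $\{e_j\}$. The same covariance inequality gives
\[
\bE\sum_{j>N}\la S_n,e_j\ra^2 \le C\sum_k \alpha(k)^{\delta/(2+\delta)}\,\Big(\bE\big(\textstyle\sum_{j>N}\la F_1,e_j\ra^2\big)^{(2+\delta)/2}\Big)^{2/(2+\delta)},
\]
and the right-hand side tends to $0$ as $N\to\infty$ uniformly in $n$, by dominated convergence. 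This flat-concentration condition together with finite-dimensional convergence yields \eqref{eq:far} (Prohorov's criterion in Hilbert space). The same bound shows $\Lambda$ is trace class, so $N(0,\Lambda)$ is well defined. The main obstacle is the tightness step: to apply Davydov's inequality to the tail sums, I need to handle them as Hilbert-valued covariances, bounding $|\bE\la P_N^\perp F_1,P_N^\perp F_{1+k}\ra|$ rather than working coordinatewise. I would use the Hilbert-space version of Davydov's inequality (Dehling--Philipp) to get this bound.
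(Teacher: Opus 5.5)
Your proof is correct, but it takes a different route from the paper.

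\emph{The paper's route.} The proof is a verification step. It invokes Theorem~\ref{t:hclt}, a Hilbert-space CLT for geometrically ergodic Markov chains taken from \cite{s06}. That theorem's hypothesis $\|F_0(x)\|^2 \le c_0 + c_1 V(x)^{\gamma}$ is literally condition \eqref{eq:bop}. Lemmas~\ref{l:moco} and \ref{l:ptbk2} supply measurability of $x \mapsto p(x,\cdot)$ and the centering $\eE p(X_t,\cdot) = \psi$, and it only remains to match \eqref{eq:adefc} with \eqref{eq:defc0}.

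\emph{Your route.} You use the Markov structure only to extract two facts:
\begin{itemize}
\item geometric $\beta$-mixing, hence $\alpha$-mixing, of the stationary chain;
\item a $(2+\delta)$-th moment of $\|\bar p(X_t,\cdot)\|$, with $\delta = 2/\gamma - 2$.
\end{itemize}
You then prove the CLT with classical mixing tools: Ibragimov's CLT on projections plus Cram\'er--Wold, and tightness via Prohorov's flat-concentration criterion. The tail bound for tightness comes from the Hilbert-valued covariance inequality of Dehling--Philipp. You are right that this bound must be taken on $\la P_N^\perp F_1, P_N^\perp F_{1+k}\ra$ directly. Coordinatewise Davydov bounds would produce $\sum_{j>N} (\bE|\la F_1, e_j\ra|^{2+\delta})^{2/(2+\delta)}$. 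By Minkowski's inequality in $L^{(2+\delta)/2}$, this dominates $(\bE\|P_N^\perp F_1\|^{2+\delta})^{2/(2+\delta)}$ rather than being dominated by it, and it need not even be finite.

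\emph{What each buys.} The paper's route is short, because the $V$-mixing definition was designed so that the cited theorem applies verbatim. Yours rests only on widely available results rather than a specialized Markov Hilbert CLT. It also shows that Markovianity enters only through the mixing rate. So the conclusion holds for any stationary $L_2$-valued sequence with $2+\delta$ moments and $\sum_k \alpha(k)^{\delta/(2+\delta)} < \infty$. Your route also makes explicit why $\gamma<1$ is needed: it converts $\int V\psi < \infty$ into the extra $\delta$ moments. Finally, it directly yields absolute convergence of the series in \eqref{eq:defc0} and the trace-class property of $\Lambda$, which the paper inherits from the cited theorem.
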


The proof of theorem~\ref{t:bk2} can be found in section~\ref{s:p}.  
From theorem~\ref{t:bk2}, lemma~\ref{l:dsn} and the
continuous mapping theorem, it follows that if $\{Z_{\ell}\}_{\ell \geq 1}$ is
an {\sc iid} sequence of scalar standard normal random variables and
$\{\lambda_{\ell}\}_{\ell \geq 1}$ are the eigenvalues of $\Lambda$, then
\begin{equation}
    \label{eq:adts}
    T_n :=
    \frac{1}{n} 
    \int \left\{ \sum_{t=1}^n \bar p(X_t, y) \right\}^2 dy 
    \; \tod \;
    \sum_{\ell = 1}^{\infty} \lambda_{\ell} Z_{\ell}^2 
    \qquad (n \to \infty)
\end{equation}
Take $p$ to be a fixed $V$-mixing density kernel, and
suppose that we have $n$ observations of an $\XX$-valued
stochastic process $\{X_t\}$.  Under the null hypothesis
that $\{X_t\}_{t=1}^n$ is $p$-Markov,    
the convergence in \eqref{eq:adts} is valid, so if $\alpha \in
(0,1)$ and $c^{\Lambda}_{\alpha}$ is the $1-\alpha$ quantile of $\sum_{\ell}
\lambda_{\ell} Z_{\ell}^2$, then the test
\begin{equation}
    \label{eq:test}
    \text{reject $H_0$  if $T_n > c^{\Lambda}_{\alpha}$}
\end{equation}
is asymptotically of size $\alpha$.  The integral in the definition of $T_n$
can be computed numerically.  Computation of $c^{\Lambda}_{\alpha}$ is
discussed in section~\ref{ss:ccv1}.

\begin{remark}
    In essence, \eqref{eq:test} is a test of stationary outcomes, and power is
    concentrated against alternatives with stationary distributions that
    differ significantly from $\psi$.  The converse to this property is that
    the test is not consistent against alternatives $p_1 \not= p$ that share
    the same stationary density as the null.  On the other hand, if we
    rephrase our null hypothesis in line with the infinite dimensional J-test
    interpretation (see section~\ref{ss:ot}), consistency is
    recovered. A statement of this result in a more general setting is provided in
    section~\ref{ss:ct}.
\end{remark}


\begin{remark}
    Unlike the Cram\'er-von Mises test, \eqref{eq:test} has no obvious
    equivalent in the {\sc iid} case.  Although the test remains 
    well-defined with an {\sc iid} null, such a null implies that $p(x,\cdot)
    = \psi(\cdot)$ for all $x$, or $\bar p = 0$. When $\bar p = 0$, the test
    statistic in \eqref{eq:test} is identically zero.
\end{remark}

%
%

\subsection{Computing Critical Values}

\label{ss:ccv1}

As shown in \eqref{eq:adts}, the asymptotic distribution of the test
statistic $T_n$ under $H_0$ depends on $\{\lambda_{\ell}\}_{\ell \geq 1}$, the eigenvalues of
the covariance operator $\Lambda$ in \eqref{eq:defc0}.  It can be shown that
$\Lambda$ has the integral representation
    $\Lambda h(y') := \int \zeta (y,y') h(y) dy$,
where the \emph{covariance function} $\zeta$ is given by\footnote{In
\eqref{eq:srvcf}, the expression $\bar p^t$ is defined by $\bar p^t(x, y) :=
p^t(x, y) - \psi(y)$.}
\begin{multline}
    \label{eq:srvcf}
  \zeta(y,y') 
  := \int \bar p(x,y)\bar p(x,y') \psi(x) dx  
  \\
       + \sum_{t = 2}^{\infty} 
       \left\{
        \int \bar p(x,y)\bar p^t(x,y') \psi(x) dx 
        + 
        \int \bar p(x,y')\bar p^t(x,y) \psi(x) dx 
        \right\}
\end{multline}
A plot of $\zeta$ is given in figure~\ref{f:cplot}, corresponding
to the Vasicek density kernel 
\begin{equation}
    \label{eq:blv}
    p(x,y) := q(1/12, x, y), 
    \quad \kappa = 0.85837, \; 
    b = 0.089102, \; 
    \sigma^2 = 0.0021854
\end{equation}
where $q$ is defined in \eqref{eq:vk}.  These parameter values are estimated
from US short rate data by \cite{as96}. 

\begin{figure}
  \begin{center}
    \rotatebox{0}{\scalebox{0.8}{\includegraphics{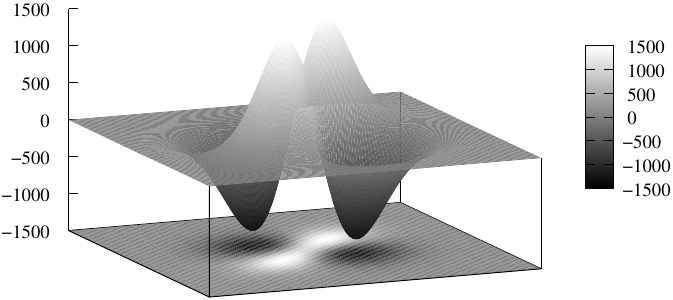}}}
  \end{center}
  \caption{\label{f:cplot} The function $\zeta(y,y')$ for the Vasicek model}
\end{figure}

To compute the critical value $c^{\Lambda}_{\alpha}$, one possibility is to approximate
the eigenvalues $\{\lambda_{\ell}\}_{\ell \geq 1}$ of $\Lambda$ using the
covariance function $\zeta$ paired with a
numerical technique such as Galerkin projection, and then take the $1-\alpha$
quantile of $\sum_\ell \lambda_{\ell} Z_{\ell}^2$.
Alternatively, $c^{\Lambda}_{\alpha}$ can be approximated by simulating the test
statistic $T_n$ under the null, as in algorithm~\ref{al:cc}.
A size-adjusted test can be produced by setting $N$ in
algorithm~\ref{al:cc} equal to the data size $n$.  In this case the
observations $\{T^m_n\}_{m=1}^M$ produced by the algorithm are {\sc iid} draws from the
distribution of $T_n$ under the null hypothesis, and hence
the $1-\alpha$ empirical quantile of these observations converges in
probability to the $1-\alpha$ quantile of the distribution of $T_n$ as $M \to \infty$.

\begin{algorithm}
    \vspace{0.6em}
    fix integers $M$, $N$\;
    \For{$m \in \{1,\dots,M\}$} { 
      \vspace{.2em}
      simulate a $p$-Markov time series $X_1^*,\ldots,X_N^*$ \;
      set $T^m_N \leftarrow N^{-1}
      \int \{ \sum_{t=1}^N \bar p(X_t^*, y) \}^2 dy$ \; 
    }
    {\bf return} the $1 - \alpha$ quantile of $T^1_N,\ldots,T^M_N$ \;
    \vspace{0.6em}
    \caption{\label{al:cc} Approximates $c^{\Lambda}_{\alpha}$ corresponding to given
    size $\alpha$ and kernel $p$}
\end{algorithm}



\subsection{Local Alternatives}

\label{ss:la}

In this section we investigate the power of the LAE test \eqref{eq:test}
against $1 / \sqrt{n}$ local alternatives.  In particular, the test we
consider is
\begin{equation*}
    H_0 \; \colon \{X_t\}_{t=1}^n \text{ is $p$-Markov}    
    \qquad \text{vs} \qquad
    H_1 \; \colon \{X_t\}_{t=1}^n \text{ is $p_n$-Markov for all } n
\end{equation*}
where $p$ is $V$-mixing, and $\{ p_n \}$ is the sequence of kernels 
defined by
$p_n(x,y) := p(x,y) + k(x,y) / \sqrt n$
for some given $k \colon \XX \times \XX \to \RR$.  To ensure that $p_n$ is
a density kernel, we require $\int k(x,y) dy = 0$ for all $x$.  We set 
\begin{equation*}
    Y_n(y) := n^{-1/2} \sum_{t=1}^n \bar p(X_t, y)
    \qquad (y \in \XX)
\end{equation*}
Thus, $Y_n$ is the random element of $L_2$ in \eqref{eq:far}, and $T_n =
\|Y_n\|^2$, where $T_n$ is the test statistic in \eqref{eq:test}.
Also, let $\tau$ be the element of $L_2$ defined by 
\begin{equation*}
    \tau(y) := \sum_{t=1}^{\infty}
    \bE \left\{ 
    \bar p(X_{t+1}, y) 
        \frac{k(X_1, X_2)}{p(X_1, X_2)}
        \right\}
\end{equation*}
where the expectation is taken under $H_0$.

\begin{assumption}
    \label{a:laa}
    Together, $k$ and $p$ satisfy the third moment condition
    \begin{equation*}
        \bE \,
        \sup_{\delta \in [0,1]} 
        \frac{ |k(X_1, X_2)|^3 }{ |p(X_1, X_2) + \delta \, k(X_1, X_2)|^3 }
        < \infty
    \end{equation*}
\end{assumption}


\begin{theorem}
    \label{t:la}
    If $H_1$ and assumption~\ref{a:laa} both hold, then $Y_n \tod N(\tau,
    \Lambda)$.
\end{theorem}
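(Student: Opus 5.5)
We will use Le Cam's third lemma, working in the Hilbert space $L_2$. Write $P^n$ for the law of $(X_1,\dots,X_n)$ under $H_0$ and $P^n_n$ for its law under $H_1$. Both are started from their respective stationary densities, namely $\psi$ and the stationary density $\psi_n$ of $p_n$. The log-likelihood ratio is
\begin{equation*}
    \Lambda_n := \log \frac{dP^n_n}{dP^n}
    = \log \frac{\psi_n(X_1)}{\psi(X_1)} + \sum_{t=1}^{n-1} \log\Bigl(1 + \frac{1}{\sqrt n}\, r(X_t,X_{t+1})\Bigr),
    \qquad r := k/p .
\end{equation*}
The plan has three steps. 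First, show that $\Lambda_n$ is asymptotically normal $N(-\sigma^2/2,\sigma^2)$ under $H_0$, with $\sigma^2 := \bE\, r(X_1,X_2)^2$. Second, show that $(Y_n,\Lambda_n)$ converge jointly to a Gaussian limit on $L_2\times\RR$. Third, apply the Hilbert-space form of Le Cam's third lemma: the limit of $Y_n$ under $H_1$ is the $H_0$ limit $N(0,\Lambda)$ shifted by the asymptotic cross-covariance between $Y_n$ and $\Lambda_n$. That shift should be identified with $\tau$.

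\textbf{Expansion of $\Lambda_n$ (LAN).} Since $\int k(x,y)\,dy=0$, we have $\bE[r(X_t,X_{t+1})\mid X_t]=0$ under $H_0$. So $M_n := n^{-1/2}\sum_{t=1}^{n-1} r(X_t,X_{t+1})$ is a martingale with stationary ergodic increments. The martingale CLT gives $M_n \tod N(0,\sigma^2)$, and the ergodic theorem gives $n^{-1}\sum_t r(X_t,X_{t+1})^2 \to \sigma^2$. A second-order Taylor expansion of $\log(1+u)$ with integral remainder has third-order term bounded by
\begin{equation*}
    n^{-3/2}\sum_t \sup_{\delta\in[0,1]} \frac{|k|^3}{|p+\delta k|^3}(X_t,X_{t+1}) = o_P(1),
\end{equation*}
which is exactly where assumption~\ref{a:laa} enters, since $\delta = 1/\sqrt n \in [0,1]$. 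Hence $\Lambda_n = M_n - \sigma^2/2 + o_P(1)$, provided the initial-density term $\log(\psi_n/\psi)(X_1)$ is $o_P(1)$. To justify that, I would argue that $\psi_n \to \psi$ using geometric ergodicity: perturbation of a $V$-uniformly ergodic kernel by an $O(n^{-1/2})$ term changes the invariant law by $O(n^{-1/2})$. Alternatively, and more simply, one can drop $X_1$: conditioning on $X_1$ changes neither $Y_n$ asymptotically nor the argument. Contiguity of $P^n_n$ to $P^n$ then follows from Le Cam's first lemma.

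\textbf{Joint convergence and the shift.} Consider the pair $(Y_n, M_n)$ in $L_2\times\RR$. Marginal tightness of $Y_n$ follows from theorem~\ref{t:bk2}, and tightness of $M_n$ from the martingale CLT, so the pair is tight. Finite-dimensional convergence reduces to a scalar CLT for $\la h,Y_n\ra + a M_n$. This is an additive functional of the bivariate chain $(X_t,X_{t+1})$, which inherits geometric ergodicity. The moment conditions are supplied by \eqref{eq:bop} for the $\bar p$ part and by assumption~\ref{a:laa} for the $r$ part, so the same CLT machinery used for theorem~\ref{t:bk2} applies. The limiting cross-covariance is
\begin{equation*}
    \lim_n \bE\,\la h,Y_n\ra M_n = \sum_{s,t}\ldots
\end{equation*}
Because $r(X_s,X_{s+1})$ is a martingale difference, only the terms with $t\ge s+1$ survive. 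Stationarity then collapses the double sum to $\sum_{j\ge1}\bE\bigl[\la h,\bar p(X_{j+1},\cdot)\ra\, r(X_1,X_2)\bigr] = \la h,\tau\ra$. Summability of this series, and the legitimacy of exchanging limits, follow from the geometric decay \eqref{eq:vuedef} applied to $\bE[\bar p(X_{j+1},y)\mid X_2]=\int \bar p^{\,j}(X_2,z)\ldots$, together with Cauchy--Schwarz.

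\textbf{Conclusion and main obstacle.} Le Cam's third lemma, applied to each linear functional $\la h,Y_n\ra$ and combined with tightness under $P^n_n$ (which follows from contiguity), gives $Y_n \tod N(\tau,\Lambda)$. I expect the main obstacle to be the joint functional CLT. It requires verifying the mixing and moment conditions for the pair chain, and controlling the cross terms uniformly in $h$ so that the series defining $\tau$ converges in $L_2$ and not merely pointwise. I would first try to bound $\|\bE[\bar p(X_{j+1},\cdot)\mid X_2]\|$ by $\lambda^{j}$ times a $V$-moment, using \eqref{eq:bop}.
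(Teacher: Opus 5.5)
Your proposal is correct and follows essentially the paper's route. You expand the log-likelihood ratio to second order with the cubic remainder controlled by assumption~\ref{a:laa}, get joint Gaussian convergence of $(Y_n,\ell_n)$ in $L_2\times\RR$ from tightness plus a scalar CLT for the geometrically ergodic pair chain $(X_{t-1},X_t)$, identify the cross-covariance with $\la h,\tau\ra$, and apply Le Cam's third lemma; the differences are minor (the paper uses the abstract Hilbert-space third lemma of van der Vaart--Wellner, theorem~3.10.7, and identifies the limit by a moment generating function computation rather than functional by functional with tightness transferred by contiguity, and it silently omits the initial-density term $\log(\psi_n/\psi)(X_1)$ that you rightly flag).
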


While theorem~\ref{t:bk2} showed that under $H_0$ the sequence
$\{Y_n\}$ converges in distribution to $N(0, \Lambda)$, theorem~\ref{t:la} tells us
that under $H_1$ it converges instead to $N(\tau, \Lambda)$.
Theorem~\ref{t:la} implies non-trivial power for the test \eqref{eq:test}
whenever $\tau \not= 0$, since the test statistic $T_n$ is equal to
the squared norm of $Y_n$.

In assumption~\ref{a:laa} and in the definition of $\tau$ in
theorem~\ref{t:la}, the expectation is taken under $H_0$.
The exact meaning of the claim in theorem~\ref{t:la} can be clarified as
follows:  Let $(\Omega_n, \fF_n)$ be the product space $\XX^n :=
\times_{t=1}^n \XX$ with its product $\sigma$-algebra, let $X_t \colon
\Omega_n \to \XX$ be the projection $X_t(x_1,\ldots,x_n) = x_t$, let $\bP_n$
be the distribution of $(X_1, \ldots, X_n)$ over $\XX^n$ constructed from $p$
in $H_0$, and let $\bQ_n$ be the distribution on $\XX^n$ constructed from the
local alternative $p_n$.  (Construction of $\bP_n$ and $\bQ_n$ from their
respective kernels is via the standard definition---see, e.g., \cite{mt09}, ch.~3.) The claim in theorem~\ref{t:la} is that, for all
continuous bounded $g \colon L_2 \to \RR$, we have $\int g(Y_n) d \bQ_n \to
\int g d \nu$ as $n \to \infty$, where $\nu$ is the $L_2$ Gaussian $N(\tau,
\Lambda)$.  Our proof of theorem~\ref{t:la} uses a contiguity argument, based on an
Hilbert space extension of Le Cam's third lemma.  Details are given in
section~\ref{s:p}.

\subsection{Simulation of $\psi$}

\label{ss:sop}

In applications, the stationary density $\psi$ that forms part of the test
statistic (\ref{eq:adts}) may be intractable.  In this case, one possibility
is to approximate $\psi$ via simulation.  To implement this idea, consider
again the setting of theorem~\ref{t:bk2}.  Fix $k \in \NN$, and let
$\{X_t^*\}_{t=1}^{kn}$ be a simulated $p$-Markov sequence that is independent
of the data $\{X_t\}_{t=1}^n$.  For each $k \in \NN$ we have the following
result: 

\begin{theorem}
    \label{t:dms}
    Let $\{Z_{\ell}\}_{\ell \geq 1}$ be an {\sc iid} sequence of standard normal
    random variables.  If the conditions of theorem~\ref{t:bk2} hold, then,
    as $n \to \infty$,
    \begin{equation}
        \label{eq:dmsn}
        \frac{1}{n} \int \left\{ 
        \sum_{t=1}^n p(X_t,y) - \frac{1}{k} \sum_{t=1}^{nk} p(X_t^*,y)  
                         \right\}^2 dy 
                         \tod
        (1 + 1/k) \sum_{\ell=1}^{\infty} \lambda_{\ell} Z_{\ell}^2 
    \end{equation}
\end{theorem}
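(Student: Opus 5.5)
The plan is to rewrite the statistic as the squared $L_2$ norm of a difference of two independent centered sums. Since $p(x,y) = \bar p(x,y) + \psi(y)$, the $\psi$ terms cancel: there are $n$ copies of $\psi$ in the first sum and $nk/k = n$ copies in the second. Hence
\begin{equation*}
    n^{-1/2}\Bigl\{ \sum_{t=1}^n p(X_t,\cdot) - \frac{1}{k}\sum_{t=1}^{nk} p(X_t^*,\cdot) \Bigr\}
    = Y_n - k^{-1/2} Y^*_{nk},
\end{equation*}
where $Y_n := n^{-1/2}\sum_{t=1}^n \bar p(X_t,\cdot)$ and $Y^*_{m} := m^{-1/2}\sum_{t=1}^{m} \bar p(X^*_t,\cdot)$. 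The left-hand side of \eqref{eq:dmsn} is therefore exactly $\|Y_n - k^{-1/2} Y^*_{nk}\|^2$.

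Next, apply theorem~\ref{t:bk2} to each piece separately. The data satisfy $Y_n \tod N(0,\Lambda)$. The simulated sequence is also $p$-Markov, and $\{Y^*_{nk}\}_n$ is a subsequence of $\{Y^*_m\}_m$, so $Y^*_{nk} \tod N(0,\Lambda)$ as well. Because $\{X_t^*\}$ is independent of $\{X_t\}$, the pair $(Y_n, Y^*_{nk})$ converges jointly in distribution on $L_2 \times L_2$ to $(G, G^*)$, with $G, G^*$ independent copies of $N(0,\Lambda)$. This step uses the standard fact that marginal weak convergence of independent components in separable metric spaces implies weak convergence of the product measures; I would cite Billingsley's result on product measures, which needs separability of $L_2$ (true here since $\xX$ is countably generated).

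By the continuous mapping theorem, $(f,g)\mapsto \|f - k^{-1/2} g\|^2$ yields convergence to $\|G - k^{-1/2}G^*\|^2$. It remains to identify the limit law. The variable $W := G - k^{-1/2}G^*$ is Gaussian on $L_2$, because $\la h, W\ra$ is a sum of independent scalar normals. It has mean zero, and by independence its covariance operator is $\Lambda + k^{-1}\Lambda = (1+1/k)\Lambda$. The eigenvalues of this operator are $(1+1/k)\lambda_\ell$, so lemma~\ref{l:dsn} gives $\|W\|^2 \overset{d}{=} (1+1/k)\sum_\ell \lambda_\ell Z_\ell^2$, which is \eqref{eq:dmsn}.

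The only point needing care is the joint convergence step, i.e.\ passing from marginal to joint weak convergence under independence in an infinite-dimensional space. One route is to check that the product of bounded Lipschitz functions determines convergence on the product space. A cleaner route is to note that tightness of each marginal gives tightness of the pair, and that characteristic functionals $\bE \exp(i\la h_1, Y_n\ra + i\la h_2, Y^*_{nk}\ra)$ factor by independence and converge to those of $(G,G^*)$. Everything else is algebra and earlier results.
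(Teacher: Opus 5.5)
Your proposal is correct and follows the paper's proof essentially step by step. Both arguments cancel $\psi$ to get $Y_n - k^{-1/2}Y^*_{nk}$, apply theorem~\ref{t:bk2} to each of the two independent pieces, obtain joint convergence from independence, and finish with the continuous mapping theorem. The only real difference is in identifying the limit. You use the scaled covariance operator $(1+1/k)\Lambda$ together with lemma~\ref{l:dsn}, whereas the paper writes the limit as $\sum_{\ell}\lambda_{\ell}^{1/2}(U_{\ell}-k^{-1/2}U_{\ell}')v_{\ell}$ and applies Parseval. Your handling of the joint-convergence step is more careful than the paper's, and so is your normalization of the simulated average: the paper's definition of $\psi'_{kn}$ carries $1/n$ where $1/(kn)$ is meant.
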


Comparing \eqref{eq:dmsn} with \eqref{eq:adts}, the limit $(1 + 1/k) \sum_{\ell=1}^{\infty} \lambda_{\ell}
Z_{\ell}^2$ of the simulation-based test statistic converges almost surely to
that of the original test statistic $T_n$ as $k \to \infty$.


\section{A Specification Test for Parametric Classes}

\label{s:ep}

The LAE test in \eqref{eq:test} corresponds to the simple null that $\{X_t\}$ is
$p$-Markov, and represents a goodness of fit test for individual models.  A
more practical setting is where we have a parametric class of models, and 
aim to test the hypothesis that the data are generated by one of the models in this
class.  In this case we need to augment our asymptotic theory to accommodate
estimated parameters.

\subsection{The Test with Estimated Parameters}

Let $\Theta$ be an open convex subset of $\RR^M$, and let
$\{p_{\theta}\}_{\theta \in \Theta}$ be a parametric family of density kernels
such that $p_{\theta}$ is $V_{\theta}$-mixing for each $\theta \in \Theta$.
Let $\psi_{\theta}$ be the unique stationary density corresponding to
$p_{\theta}$.  When convenient, we write $p(\theta, x, y)$ instead of
$p_{\theta}(x,y)$, and $\psi(\theta, y)$ in place of $\psi_{\theta}(y)$.  In
addition, let $\bar p(\theta, x, y)$ be defined by \eqref{eq:pbardef}.
We begin with a limit theorem for the distribution of the test statistic in
the estimated parameter case.  In the assumptions below, $\|\cdot\|_E$ denotes
the Euclidean norm in $\RR^M$, as opposed to $\|\cdot\|$, the norm in
$L_2$.\footnote{While the following assumptions are imposed over the whole parameter space
$\Theta$, for the asymptotic theory below it would suffice that they hold over
an open neighborhood of the true parameter vector.}

We suppose the existence of an asymptotically linear and
$\sqrt{n}$-consistent sequence of estimators $\{\hat \theta_n\}$ for the
parameter vector $\theta$. 
In particular, we assume the following.

\begin{assumption}
    \label{a:le}
    There exists an $r \in \NN$ and influence function
    $g_{\theta} \colon \RR^{r+1} \to \RR^M$ such that if 
    $\{X_t\}$ is $p_{\theta}$-Markov, then
    \begin{enumerate}
        \item $\bE g_{\theta}(X_t, \ldots, X_{t+r}) = 0$
        \item $\sqrt n (\hat \theta_n - \theta) = n^{-1/2} \sum_{t=1}^{n-r}
            g_{\theta}(X_t, \ldots, X_{t+r}) + o_P(1)$.
        \item $\| g_{\theta}(x_0, \ldots, x_r) \|_E^{2+\delta} \leq 
            \sum_{k=0}^r V_{\theta} (x_k)$ on $\XX^{r+1}$
            for some $\delta > 0$
    \end{enumerate}
\end{assumption}

These assumptions imply that if $\{X_t\}$ is
$p_{\theta}$-Markov, then $\sqrt{n} (\hat \theta_n - \theta) = O_P(1)$.

\begin{assumption}
    \label{a:coid0}
    The vector $D \bar p(\theta, x, y)$ of partial derivatives $\frac{\partial}{\partial
    \theta_m} \bar p(\theta, x, y)$ exists and satisfies
    \begin{equation*}
        \int \left\{ 
            \frac{\partial}{\partial \theta_m} \bar p(\theta, x, y)
             \right\}^2
        dy \leq V_{\theta}(x)^{1/2}
        \quad
        \text{ for all }
        (x,y) \in \XX \times \XX \text{ and } \theta \in \Theta
    \end{equation*}
\end{assumption}

\begin{assumption}
    \label{a:coidp2}
    There exists a constant $\alpha > 0$ and function $K_2 \colon \XX \times \XX \to
    \RR$ such that $\int \int K_2(x, y)^2 dy \psi(\theta, x)dx < \infty$ and
    \begin{equation*}
        \| D \bar p(\theta, x, y) - D \bar p(\theta', x, y) \|_E
        \leq K_2(x,y)  \| \theta - \theta' \|^{\alpha}_E
        \quad
        \text{ for all }
        (x,y) \in \XX \times \XX \text{ and } \theta \in \Theta
    \end{equation*}
\end{assumption}

For each $\theta \in \Theta$, the pair $(p_{\theta}, g_{\theta})$
defines a covariance operator $\Sigma_{\theta}$ on $L_2$, the expression for
which is presented in \eqref{eq:defc1} below.

\begin{theorem}
    \label{t:bk3}
    If assumptions~\ref{a:le}--\ref{a:coidp2} hold and $\{X_t\}$ is
    $p_{\theta}$-Markov, then 
    \begin{equation*}
        \label{eq:far1}
        n^{-1/2} \sum_{t=1}^n \bar p(\hat \theta_n, X_t, \cdot) \tod
        N(0,\Sigma_{\theta})
        \qquad (n \to \infty)
    \end{equation*}
    and, as a consequence,
    \begin{equation}
        \label{eq:adts3}
         \hat T_n :=
          \frac{1}{n}
          \int 
            \left\{ 
                \sum_{t=1}^n \bar p(\hat \theta_n, X_t, y)  
            \right\}^2 dy 
         \; \tod \;
        \sum_{\ell = 1}^{\infty} \sigma_{\ell}^{\theta} Z_{\ell}^2 
        \qquad (n \to \infty)
    \end{equation}
    where $\{Z_{\ell}\}_{\ell \geq 1}$ is an {\sc iid} sequence of standard
    normal random variables and $\{\sigma_{\ell}^{\theta}\}_{\ell
    \geq 1}$ are the eigenvalues of $\Sigma_{\theta}$.
\end{theorem}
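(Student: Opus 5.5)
The approach is a Taylor expansion in $\theta$ around the true parameter, followed by a joint CLT for the linearized statistic. Write $\theta$ for the true parameter and $Y_n(\vartheta) := n^{-1/2}\sum_{t=1}^n \bar p(\vartheta, X_t, \cdot)$ as an $L_2$-valued map. By the mean value theorem in integral form, applied pointwise in $(x,y)$ along the segment joining $\theta$ to $\hat\theta_n$ (which stays in $\Theta$ since $\Theta$ is convex),
\begin{equation*}
    Y_n(\hat\theta_n) = Y_n(\theta) + \Big\{ \frac1n \sum_{t=1}^n D\bar p(\theta, X_t, \cdot) \Big\}' \sqrt n(\hat\theta_n - \theta) + R_n ,
\end{equation*}
where $R_n := n^{-1/2}\sum_t \int_0^1 \{D\bar p(\theta + s(\hat\theta_n-\theta), X_t,\cdot) - D\bar p(\theta,X_t,\cdot)\}'(\hat\theta_n-\theta)\,ds$.

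I bound the remainder using assumption~\ref{a:coidp2} and Minkowski's inequality in $L_2$:
\begin{equation*}
    \|R_n\| \le \Big\{\frac1n\sum_t \|K_2(X_t,\cdot)\|\Big\}\, \|\hat\theta_n-\theta\|_E^{\alpha}\, \sqrt n\|\hat\theta_n-\theta\|_E .
\end{equation*}
The bracket is $O_P(1)$ by the ergodic law of large numbers, since $\bE\|K_2(X_1,\cdot)\|\le (\int\int K_2^2\,dy\,\psi\,dx)^{1/2}<\infty$. The factor $\|\hat\theta_n-\theta\|^\alpha_E$ is $o_P(1)$ and the last factor is $O_P(1)$ by assumption~\ref{a:le}, so $R_n = o_P(1)$.

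Next I treat the derivative average $\frac1n\sum_t D\bar p(\theta,X_t,\cdot)$. Each component is an $L_2$-valued ergodic average. Assumption~\ref{a:coid0} gives $\bE\|\partial_m \bar p(\theta,X_1,\cdot)\| \le \bE V_\theta^{1/4} <\infty$. The strong law of large numbers in the separable Hilbert space $L_2$ then yields convergence in probability to $\Gamma_m := \eE\, \partial_m\bar p(\theta, X_1,\cdot)$; I will obtain this from the ergodic theorem via approximation by finite-dimensional projections. Since $\sqrt n(\hat\theta_n-\theta)=O_P(1)$, this gives
\begin{equation*}
    Y_n(\hat\theta_n) = Y_n(\theta) + \Gamma' \, n^{-1/2}\sum_{t=1}^{n-r} g_\theta(X_t,\ldots,X_{t+r}) + o_P(1).
\end{equation*}
This is a continuous linear map $(f,v)\mapsto f+\Gamma' v$ applied to the pair $\big(Y_n(\theta),\, n^{-1/2}\sum g_\theta\big)$, which lives in $L_2\times\RR^M$.

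The main obstacle is a joint CLT for this pair in the Hilbert space $L_2\times \RR^M$. My plan is to mimic the proof of theorem~\ref{t:bk2} for the extended Markov chain $W_t := (X_t,\ldots,X_{t+r})$. This chain inherits geometric ergodicity with weight $\sum_k V_\theta(x_k)$. The moment condition $\|g_\theta\|_E^{2+\delta}\le \sum V_\theta(x_k)$ from assumption~\ref{a:le} together with \eqref{eq:bop} supplies the moments needed for the mixing CLT.

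The CLT itself has two parts. For finite-dimensional convergence, I take any $h\in L_2$ and $a\in\RR^M$ and apply a scalar CLT for geometrically ergodic chains (Cramér–Wold) to $\la h,\bar p(\theta,X_t,\cdot)\ra + a'g_\theta(W_t)$. For tightness in $L_2$, I reuse the argument from theorem~\ref{t:bk2}, namely uniform control of tail sums over an orthonormal basis via the summable covariances of $V$-mixing chains. The $\RR^M$ coordinate adds nothing to the tightness argument.

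The limit of the pair is then a Gaussian $(G_1,G_2)$, and the continuous mapping theorem gives $Y_n(\hat\theta_n)\tod G_1+\Gamma'G_2\sim N(0,\Sigma_\theta)$. Here $\Sigma_\theta$ collects the long-run covariances of $\bar p$ with itself, the cross-covariances of $\bar p$ with $g_\theta$, and the covariance of $g_\theta$; this is the expression recorded in \eqref{eq:defc1}. Finally, \eqref{eq:adts3} follows from the continuous mapping theorem applied to $f\mapsto\|f\|^2$ together with lemma~\ref{l:dsn}.
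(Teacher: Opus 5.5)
Your proof is correct. The first half, Taylor expansion along the segment, remainder controlled by assumption~\ref{a:coidp2} and the ergodic law of large numbers, and replacement of the derivative average by its mean $\Gamma=\rho$, matches the paper's decomposition. The differences are in how you handle the derivative term and the final central limit theorem.

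\textbf{Derivative term.} The paper subtracts $\rho$ and applies the Hilbert-space CLT (theorem~\ref{t:hclt}) to $D_m\bar p(\theta,X_t,\cdot)$. This gives an $O_P(1)$ normalized sum, which it multiplies by $\hat\theta_n-\theta=o_P(1)$. You instead use an $L_2$-valued ergodic law of large numbers times $\sqrt n(\hat\theta_n-\theta)=O_P(1)$. Your version needs only $\bE\|D_m\bar p(\theta,X_1,\cdot)\|<\infty$.

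\textbf{The limit.} The paper never forms the pair. It substitutes the influence-function expansion into the linear term and applies theorem~\ref{t:hclt} once, to the single $L_2$-valued function $F(M_t)=\bar p(\theta,X_t,\cdot)+\rho^{\top}g_\theta(M_t)$ of the lagged chain $M_t=(X_t,\ldots,X_{t+r})$. Lemma~\ref{l:msl} gives geometric ergodicity of $M_t$, and the paper checks the growth bound \eqref{eq:bntac} from \eqref{eq:bop} and assumption~\ref{a:le}(3). $\Sigma_\theta$ then appears directly as the long-run covariance of $P_t+Q_t$.

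Your route, a joint CLT in $L_2\times\RR^M$ followed by the map $(f,v)\mapsto f+\Gamma^{\top}v$, yields more. It gives joint convergence with $\sqrt n(\hat\theta_n-\theta)$ and isolates the estimation effect as the separate Gaussian component $\Gamma^{\top}G_2$. The cost is an extra product-space step.

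That step is cheaper than you plan. You do not need to redo an orthonormal-basis tail-sum tightness argument; the paper's proof of theorem~\ref{t:bk2} is itself just an application of theorem~\ref{t:hclt}. Tightness of $Y_n(\theta)$ in $L_2$ follows from the convergence in theorem~\ref{t:bk2}. Tightness of the $\RR^M$ coordinate follows from its scalar CLT. Identifying the limit through all linear functionals $(h,a)$ is exactly the argument of lemma~\ref{l:acih}, with $\RR$ replaced by $\RR^M$.
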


Let $c_{\alpha}^{\Sigma} (\theta)$ denote the $1-\alpha$ quantile of the
random variable $\sum_{\ell = 1}^{\infty} \sigma_{\ell}^{\theta} Z_{\ell}^2$.
A method for computing $c_{\alpha}^{\Sigma} (\theta)$ is given in
section~\ref{ss:epcv} below. Consider the null hypothesis
\begin{equation}
    \label{eq:h0p}
    H_0 \colon \text{ the data } \{X_t\}_{t=1}^n 
    \text{ is $p_{\theta}$-Markov for some }  \theta \in \Theta
\end{equation}
When the null is assumed to hold, we let $\theta_0 \in \Theta$ denote the true
value of $\theta$.  In view of \eqref{eq:adts3}, under the null hypothesis
\eqref{eq:h0p}, a test rejecting $H_0$ when $\hat T_n$ exceeds
$c_{\alpha}^{\Sigma}(\theta_0)$ is asymptotically of size $\alpha$.  Since
$\theta_0$ is not observable and $c_{\alpha}^{\Sigma}(\theta_0)$ cannot be
evaluated, we approximate it with $c_{\alpha}^{\Sigma}(\hat \theta_n)$.  This
gives the test
\begin{equation}
    \label{eq:test3}
    \text{ reject $H_0$  if $\hat T_n > c_{\alpha}^{\Sigma}(\hat \theta_n)$ }
\end{equation}

\begin{theorem}
    \label{t:cbk3}
    If the conditions of theorem~\ref{t:bk3} hold and $c_{\alpha}^{\Sigma}$ is
    continuous at $\theta_0$, then the test {\rm (\ref{eq:test3})} is
    asymptotically of size $\alpha$.
\end{theorem}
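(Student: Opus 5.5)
We must show $\bP\{\hat T_n > c_{\alpha}^{\Sigma}(\hat \theta_n)\} \to \alpha$ under $H_0$ with true parameter $\theta_0$. The plan is a Slutsky-type argument that combines the weak convergence in theorem~\ref{t:bk3} with consistency of $\hat \theta_n$. Write $S := \sum_{\ell} \sigma_{\ell}^{\theta_0} Z_{\ell}^2$ and $c_0 := c_{\alpha}^{\Sigma}(\theta_0)$.

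First, assumption~\ref{a:le} gives $\sqrt n(\hat\theta_n - \theta_0) = O_P(1)$, hence $\hat \theta_n \toprob \theta_0$. Since $c_{\alpha}^{\Sigma}$ is continuous at $\theta_0$, the continuous mapping theorem yields $c_{\alpha}^{\Sigma}(\hat\theta_n) \toprob c_0$. By theorem~\ref{t:bk3}, $\hat T_n \tod S$. Slutsky's lemma then gives $\hat T_n - c_{\alpha}^{\Sigma}(\hat\theta_n) \tod S - c_0$.

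Second, we need $\bP\{S - c_0 > 0\} = \alpha$ and $\bP\{S = c_0\} = 0$, so that $0$ is a continuity point of the limit law. The portmanteau theorem will then give
\begin{equation*}
\bP\{\hat T_n > c_{\alpha}^{\Sigma}(\hat\theta_n)\} \to \bP\{S > c_0\} = \alpha .
\end{equation*}
The main obstacle is verifying that the distribution of $S$ is continuous at its quantile. My approach is as follows. If $\Sigma_{\theta_0} \neq 0$, some $\sigma_1^{\theta_0} > 0$. Conditional on $(Z_\ell)_{\ell \ge 2}$, $S$ equals $\sigma_1^{\theta_0} Z_1^2$ plus an independent nonnegative, almost surely finite constant (finite because $\sum_\ell \sigma_\ell^{\theta_0} = \operatorname{tr}\Sigma_{\theta_0} < \infty$ for a covariance operator). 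Since $\sigma_1^{\theta_0}Z_1^2$ has a density, $S$ has no atoms, so its cdf is continuous. It follows that $\bP\{S > c_0\} = \alpha$ exactly, with $c_0$ defined as the $1-\alpha$ quantile.

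The degenerate case $\Sigma_{\theta_0} = 0$ gives $S \equiv 0$, and the size claim would fail. I would note that the theorem implicitly presumes $\Sigma_{\theta_0} \neq 0$, and I would state this nondegeneracy explicitly as part of what is meant by a well-defined quantile.
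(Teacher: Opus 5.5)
Your argument is correct and is essentially the paper's proof: consistency of $\hat\theta_n$ and continuity of $c^{\Sigma}_{\alpha}$ at $\theta_0$ give $c^{\Sigma}_{\alpha}(\hat\theta_n) \toprob c^{\Sigma}_{\alpha}(\theta_0)$, and Slutsky's theorem combined with theorem~\ref{t:bk3} then gives the limiting rejection probability. Two of your additions go beyond the paper and are well taken: you show that $\sum_{\ell}\sigma^{\theta_0}_{\ell}Z_{\ell}^2$ has no atom at $c^{\Sigma}_{\alpha}(\theta_0)$ when $\Sigma_{\theta_0}\neq 0$, a step the paper uses only tacitly when passing to the limit, and you point out that the degenerate case $\Sigma_{\theta_0}=0$ must be excluded.
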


\subsection{Critical values}

\label{ss:epcv}

To implement the LAE test in \eqref{eq:test3}, we need a means of evaluating
$c_{\alpha}^{\Sigma}(\theta)$ for given $\theta$.  One possibility is to compute
the eigenvalues $\{\sigma_{\ell}^{\theta}\}_{\ell \geq 1}$ of $\Sigma_{\theta}$,
and then the $1-\alpha$ quantile of $\sum_{\ell = 1}^{\infty}
\sigma_{\ell}^{\theta} Z_{\ell}^2$.  A simpler method is to use simulation of the
test statistic as in algorithm~\ref{al:cc2}, which returns an approximation
to $c_{\alpha}^{\Sigma}(\theta)$.  To evaluate the critical value
$c_{\alpha}^{\Sigma}(\hat \theta_n)$ on the right-hand side of \eqref{eq:test3},
the parameter vector $\theta$ in algorithm~\ref{al:cc2} can be replaced with
the vector $\hat \theta_n$ estimated from the data.  On standard hardware and
using compiled C code based on the GNU Scientific Library, a typical calculation
with $M=2000$ and $N=1000$ completes in several seconds.\footnote{This
computation refers to a Gaussian AR(1) null hypothesis in one-dimension.  Numerical
integration was based on a 60 point fixed-order Gauss-Legendre integration
routine.} With interpreted languages the same calculation takes around 20 seconds.

\begin{algorithm}
    \vspace{0.6em}
    fix integers $M$, $N$\;
    \For{$m \in \{1,\dots,M\}$} { 
      \vspace{.2em}
      simulate a $p_{\theta}$-Markov time series $X_1^*,\ldots,X_N^*$ \;
      fit $\hat \theta_n^*$ using the simulated data set $X_1^*,\ldots,X_N^*$\;
      set $\hat T^m_N \leftarrow N^{-1}
      \int \{ \sum_{t=1}^N \bar p(\hat \theta_n^*, X_t^*, y) \}^2 dy$ \; 
    }
    {\bf return} the $1 - \alpha$ quantile of $\hat T^1_N,\ldots,\hat T^M_N$ \;
    \vspace{0.6em}
    \caption{\label{al:cc2} Approximates $c_{\alpha}^{\Sigma}(\theta)$
    corresponding to given size $\alpha$ and kernel $p_{\theta}$}
\end{algorithm}

\subsection{Consistency of the Test}

\label{ss:ct}

The LAE test in \eqref{eq:test3} is not consistent against all alternatives
in the negation of the null hypothesis specified in \eqref{eq:h0p}.  In essence, the test
compares Markov models by their stationary distribution, and models
with identical stationary distributions cannot be distinguished.  However, if
we consider the LAE test as an infinite dimension Hansen test, with null
hypothesis given in \eqref{eq:gmmfh} and alternative by
\begin{equation*}
    H_1 \colon \inf_{\theta \in \Theta}
        \| \eE \bar p(\theta, X_t, \cdot) \| > 0
\end{equation*}
then the test becomes consistent whenever the following assumptions hold:

\begin{assumption}
    \label{a:cons1}
    Under $H_1$, the sequence $\{X_t\}$ is stationary and ergodic. In
    particular, the sample mean $\frac{1}{n} \sum_{t=1}^n h(X_t)$ converges in
    probability to the expectation $\eE h(X_t)$ for all measurable $h \colon
    \XX \to L_2$ such that $\eE h(X_t)$ exists.
\end{assumption}

\begin{assumption}
    \label{a:cons3}
    The vector of partial derivatives $D p(\theta, x, y)$ exists for all $x,
    y$ in $\XX$ and all $\theta \in \Theta$.  Moreover,
    there exists a function $\eta \colon \XX \times \XX \to \RR$ such that 
    $\bE \int \eta(X_t, y)^2  dy$ is finite under $H_1$ and
    $\| D p(\theta, x, y) \|_E \leq \eta(x,y)$ for all $(x,y) \in \XX
    \times \XX$ and $\theta \in \Theta$.
\end{assumption}

\begin{assumption}
    \label{a:cons4}
    The parameter space $\Theta$ is bounded, and
    $\bE \int p(\theta, X_t, y)^2 dy$ is finite under $H_1$
    for all $\theta \in \Theta$.
\end{assumption}

\begin{theorem}
    \label{t:consthm}
    If $H_1$ is valid, $\hat \theta_n$ converges in probability  and 
    assumptions~\ref{a:cons1}--\ref{a:cons4} hold, then 
    \begin{equation*}
        \lim_{n \to \infty} 
        \bP \left\{
            \hat T_n > c_{\alpha}^{\Sigma}(\hat \theta_n)
            \right\}
            = 1
    \end{equation*}
\end{theorem}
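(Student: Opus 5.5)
The strategy is to show that $\hat T_n/n$ converges in probability to a strictly positive limit, while the critical value $c_{\alpha}^{\Sigma}(\hat \theta_n)$ stays bounded in probability. Then $\hat T_n$ diverges at rate $n$ and exceeds the critical value with probability tending to one. Let $\theta^*$ denote the probability limit of $\hat \theta_n$. Note that
\begin{equation*}
    \hat T_n / n = \Bigl\| \frac{1}{n} \sum_{t=1}^n \bar p(\hat \theta_n, X_t, \cdot) \Bigr\|^2 .
\end{equation*}
The aim is to show that $\frac{1}{n}\sum_{t=1}^n \bar p(\hat\theta_n, X_t, \cdot)$ converges in probability in $L_2$ to $m(\theta^*) := \eE \bar p(\theta^*, X_t, \cdot)$. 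Under $H_1$ we have $\|m(\theta^*)\| \geq \inf_\theta \|\eE \bar p(\theta, X_t, \cdot)\| =: \delta > 0$.

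\textbf{Step 1 (pointwise LLN at $\theta^*$).} Assumption~\ref{a:cons4} gives $\bE\|p(\theta^*, X_t, \cdot)\|^2<\infty$, so the Pettis expectation of $h(x) := p(\theta^*, x, \cdot)$ exists. Assumption~\ref{a:cons1} then yields $\frac1n\sum_t p(\theta^*,X_t,\cdot) \to \eE p(\theta^*,X_t,\cdot)$ in probability. The $\psi(\theta,\cdot)$ part of $\bar p$ is deterministic, so this gives convergence of $\frac1n\sum_t \bar p(\theta^*,X_t,\cdot)$ to $m(\theta^*)$. We need $\psi(\theta^*,\cdot)\in L_2$; this follows from $\psi_\theta = \int p_\theta(x,\cdot)\psi_\theta(x)dx$ and the $V$-mixing bound \eqref{eq:bop}. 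If $\theta^*$ lies on the boundary of $\Theta$, I would argue along the sequence $\hat\theta_n$ directly instead.

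\textbf{Step 2 (replace $\theta^*$ by $\hat\theta_n$).} Apply the mean value theorem along the segment, which stays in $\Theta$ by convexity, together with Assumption~\ref{a:cons3}:
\begin{equation*}
    |p(\hat\theta_n,x,y)-p(\theta^*,x,y)| \le \eta(x,y)\,\|\hat\theta_n-\theta^*\|_E .
\end{equation*}
By Minkowski's inequality and Cauchy--Schwarz, the $L_2$ norm of the difference of sample averages is then at most
\begin{equation*}
    \|\hat\theta_n-\theta^*\|_E \cdot \frac1n\sum_t \|\eta(X_t,\cdot)\| .
\end{equation*}
The second factor is $O_P(1)$ by ergodicity, using $\bE\|\eta(X_t,\cdot)\|^2<\infty$. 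The first factor is $o_P(1)$. The term $\|\psi(\hat\theta_n,\cdot)-\psi(\theta^*,\cdot)\|$ must also vanish. I expect this to be the delicate point: it requires $L_2$-continuity of $\theta\mapsto\psi_\theta$, which I would try to extract from the stationarity equation and the derivative bounds. Alternatively, I would bypass it by working with the full map $\theta\mapsto \eE \bar p(\theta,X_t,\cdot)$ and using only the lower bound $\delta$, which holds at every $\theta$ including $\hat\theta_n$. Concretely, write
\begin{equation*}
    \Bigl\|\frac1n\sum_t\bar p(\hat\theta_n,X_t,\cdot)\Bigr\| \ge \|m(\hat\theta_n)\| - \sup_{\theta}\Bigl\|\frac1n\sum_t \bar p(\theta,X_t,\cdot)-m(\theta)\Bigr\| ,
\end{equation*}
and control the supremum by a finite net over the bounded set $\Theta$ (Assumption~\ref{a:cons4}) together with the Lipschitz bound via $\eta$. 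This gives a uniform LLN, and then $\liminf \hat T_n/n\ge\delta^2$ in probability without needing the limit $\theta^*$ at all. I would adopt this uniform route as the main argument; it is the main obstacle, since the $\psi$ term must also be handled uniformly through its Lipschitz dependence on $\theta$.

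\textbf{Step 3 (critical values).} We need $c_\alpha^\Sigma(\hat\theta_n)=O_P(1)$. By Markov's inequality, $c_\alpha^\Sigma(\theta)\le \operatorname{tr}(\Sigma_\theta)/\alpha$, since $\bE\sum_\ell\sigma^\theta_\ell Z_\ell^2 = \operatorname{tr}(\Sigma_\theta)$. So it suffices that the trace is bounded near $\theta^*$, or is continuous there, as assumed in Theorem~\ref{t:cbk3}; I would invoke that continuity. Then, for any $\epsilon > 0$,
\begin{equation*}
    \bP\{\hat T_n > c_\alpha^\Sigma(\hat\theta_n)\} \ge \bP\{ n(\delta^2-\epsilon) > C\} - o(1) \to 1 .
\end{equation*}
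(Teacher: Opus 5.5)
\textbf{The obstacle in Step~2 is not there.} Since $\psi_\theta$ is non-random, for every fixed $\theta$
\[
\frac1n\sum_{t=1}^n \bar p(\theta,X_t,\cdot)-\eE\,\bar p(\theta,X_t,\cdot)
=\frac1n\sum_{t=1}^n p(\theta,X_t,\cdot)-\eE\, p(\theta,X_t,\cdot).
\]
So $\psi$ drops out of the deviation completely. This is the identity the paper records in \eqref{eq:sfcons}.

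This matters because the theorem's hypotheses give you no $L_2$-regularity of $\theta\mapsto\psi_\theta$: Assumption~\ref{a:cons3} controls only $Dp$. An argument that really needed Lipschitz, or even continuous, dependence of $\psi_\theta$ would stall. For the same reason, drop Step~1 and the first half of Step~2. Convergence of $\frac1n\sum_t\bar p(\hat\theta_n,X_t,\cdot)$ to $m(\theta^*)$ does need continuity of $\psi_\theta$ at $\theta^*$. It is also unnecessary, since $\hat T_n/n$ only has to be bounded below, not converge.

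\textbf{Your uniform route versus the paper's.} Once you use the cancellation, your uniform route is a correct proof, and its skeleton matches the paper's. Both use the triangle inequality at the random point $\hat\theta_n$, with $\|\eE\,\bar p(\hat\theta_n,X_t,\cdot)\|$ bounded below by the infimum in $H_1$. They differ only in how the deviation at $\hat\theta_n$ is shown to be $o_P(1)$.

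The paper localizes at the probability limit $\theta_1$ and splits the deviation into three terms: (I) sample averages at $\hat\theta_n$ versus $\theta_1$; (II) the law of large numbers at the single point $\theta_1$; (III) means at $\theta_1$ versus $\hat\theta_n$. Terms (I) and (III) are controlled by the same $\eta$-Lipschitz bound you use.

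You instead prove a uniform law of large numbers from a finite net of points in $\Theta$. Boundedness of $\Theta$ gives the net, and convexity keeps the mean-value segments inside $\Theta$. The $\eta$ bound then transfers from the net points to all of $\Theta$, for both the sample and the population terms.

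The paper's version needs the law of large numbers at only one parameter value. Yours needs it at finitely many, but in exchange the divergence of $\hat T_n$ uses neither convergence of $\hat\theta_n$ nor membership of its limit in the open set $\Theta$. The paper's argument tacitly requires the latter when it applies Assumptions~\ref{a:cons3} and \ref{a:cons4} at $\theta_1$; this is the boundary issue you noticed.

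\textbf{Step~3.} Here you make explicit something the paper's proof omits. The paper stops once $\hat T_n\ge n(\epsilon-o_P(1))^2$, implicitly treating $c^{\Sigma}_{\alpha}(\hat\theta_n)$ as $O_P(1)$. Your bound $c^{\Sigma}_{\alpha}(\theta)\le\operatorname{tr}(\Sigma_\theta)/\alpha$ is correct.

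However, local boundedness of the trace, or continuity of $c^{\Sigma}_{\alpha}$, at the limit of $\hat\theta_n$ is not among the theorem's hypotheses. The continuity in Theorem~\ref{t:cbk3} is at the null value $\theta_0$, not at a pseudo-true value under $H_1$, so you cannot borrow it. State it as an additional assumption; what is really needed is $c^{\Sigma}_{\alpha}(\hat\theta_n)=o_P(n)$. The paper's proof needs this just as much as yours.
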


Note
that the conditions of the theorem are sufficient but not necessary
for consistency.  While assumption~\ref{a:cons1} requires a stationary and
ergodic alternative, intuition suggests that for some choices of $H_0$ and
nonstationary alternatives, the test is likely to reject with high probability
when the sample size is large.  For example, let $H_0$ be that $\{X_t\}$ is
generated by the fixed kernel $p$ given by \eqref{eq:blv}, and consider a
random walk alternative.  (In particular, we take the same same model as $H_0$
but with $\kappa = 0$.)  The rejection probabilities for data sizes between 50
and 200 are shown in figure~\ref{f:nsc}.  By $n=200$ the rejection probability
is one.\footnote{Rejection probabilities were calculated by averaging over
1,000 observations. Since we chose a fixed kernel for $H_0$, there are no
estimated parameters, and we used algorithm~\ref{al:cc} to compute the
critical value.}

\begin{figure}
  \begin{center}
      \rotatebox{0}{\scalebox{0.6}{\includegraphics{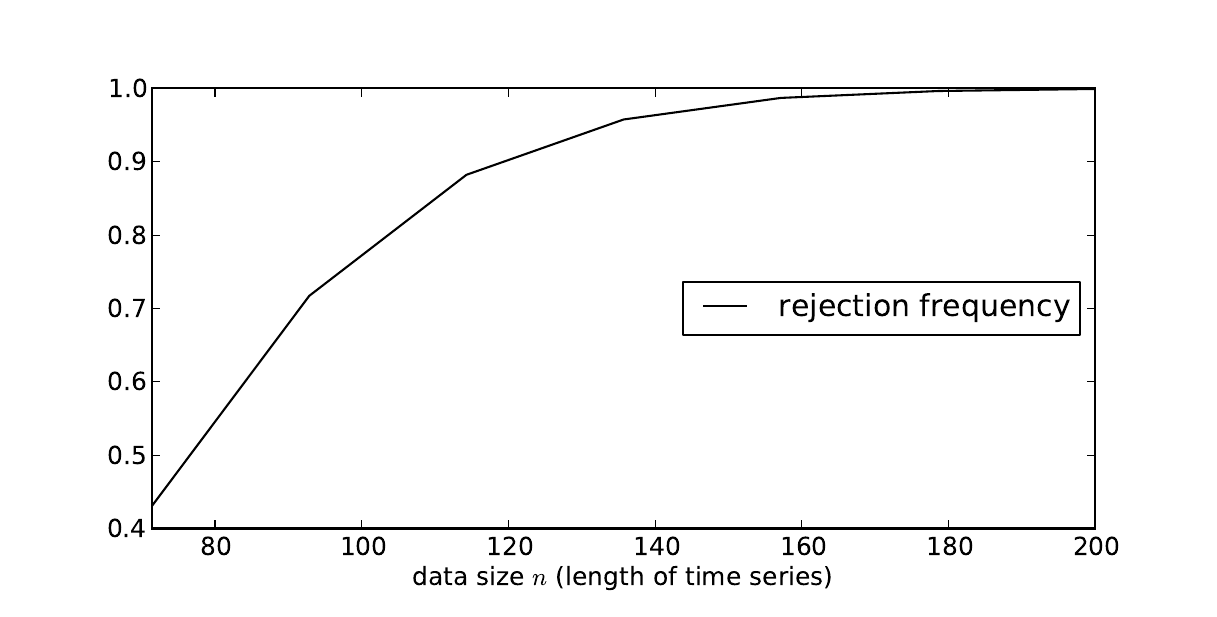}}}
  \end{center}
  \caption{Rejection frequency, nonstationary alternative \label{f:nsc}}
\end{figure}

\section{Discussion}

\label{s:d}

In this section we present applications that illustrate several features of the
test.

\subsection{Properties of the Test under $H_0$}

\label{ss:size}

In the introduction we briefly discussed the relationship between the LAE test
proposed in this paper and the test of \cite{as96}, both of which
evaluate ergodic Markov models based on $L_2$ comparison of stationary
densities.  While the LAE test compares the $L_2$ deviation between the
theoretical stationary density $\psi$ and a density estimate of the form
$\psi_n := n^{-1} \sum_{t=1}^n p(\hat \theta_n, X_t, y)$, A\"it-Sahalia's test
is based on the $L_2$ deviation between the theoretical stationary density and
a nonparametric kernel density estimate of the stationary density using the
data $X_1,\ldots, X_n$.

A\"it-Sahalia's test is a seminal contribution to the literature and his
results have initiated an important line of research.  One finding has been
that A\"it-Sahalia's test statistic might require very large data sizes to
attain its asymptotic distribution, causing excessively high rejection rates
in finite samples when the asymptotic critical value is adopted \cite{p98}.  One possible factor in slow convergence to the asymptotic distribution is
the use of nonparametric kernel density estimators in the test
statistic.\footnote{As mentioned in footnote~\ref{fn:pr}, \cite{p98} also
argues that the size of the test in finite samples is biased because the
asymptotic distribution of A\"it-Sahalia's test statistic depends only on the
stationary distribution of the process under the null.}

The test proposed here provides a new perspective on this problem.  While the
LAE test is also based on $L_2$ comparison of stationary densities, we
estimate no smoothing parameters, and use a density estimator that is
$\sqrt{n}$-consistent under the null.  These features suggest that the LAE
test might have lower size distortion in small samples.

To investigate this idea, we conduct an experiment to re-examine the
discussion of size distortion reported in \cite{p98}. Our experiment
investigates rejection rates under a true null when the sample size is
relatively small and the asymptotic critical value is used.  Following
Pritsker, the underlying model in our experiment is the Vasicek model of
interest rates.  For the DGP that generates the data $\{X_t\}$ we use the
particular Vasicek model given in \eqref{eq:blv}, while for $H_0$ we
hypothese (correctly) that the data is generated by a Vasicek model.

Beginning with A\"it-Sahalia's test, we compute the asymptotic critical value
of his test at the $5\%$ level, set $n=264$ (corresponding to 22
years of monthly observations), generate 1,000 time series of length $n$ from
the DGP, evaluate A\"it-Sahalia's test statistic for each time series, and
compare it with the asymptotic critical value.  Consistent with the results
reported in \cite{p98}, we find that A\"it-Sahalia's test rejects the
true null in over 50\% of the samples.\footnote{For details on A\"it-Sahalia's
test statistic and critical value, see \cite{as96}, p.~393.  The
bandwidth for the nonparametric kernel density estimator used in our
simulation was the optimal bandwidth for estimating the stationary density of
the Vasicek model with the true parameters.  We experimented with other
bandwidths but all choices gave a rejection rate in excess of 50\%.} On the
other hand, when we repeat the experiment with the LAE test in place of
A\"it-Sahalia's test, the LAE test rejects the true null in 4.1\% of the samples.
Thus, for this particular problem, the size distortion is largely resolved by
our test.\footnote{As with A\"it-Sahalia's test, we took $\alpha = 0.05$ and
$n=264$.  In running the experiment, we first computed the asymptotic critical
value $c^{\Sigma}_{\alpha}(\theta_0)$ for the test \eqref{eq:test3} with
$\alpha = 0.05$.  To compute this value we used algorithm~\ref{al:cc2} applied to the baseline Vasicek
density kernel with parameters given in \eqref{eq:blv} and
$M=N=5000$. Next, we simulated 1,000 times series $\{X\}_{t=1}^n$ from the
DGP, where $n=264$.  For each of these simulated time series, we used OLS to
obtain an estimate $\hat \theta_n$ for the vector of parameters of the Vasicek
model, and then used the resulting density kernel $p_{\hat \theta_n}$ to
evaluate the test statistic on the left-hand side of \eqref{eq:test3}.  Of the
1,000 time series we generated, 4.1\% of the test statistics exceeded the
asymptotic critical value.}

\subsection{Power of the Test}

\label{ss:power}

Next we investigate the power of the LAE test in finite samples.  To provide
context, we begin by re-examining a second Monte Carlo experiment of \cite{p98}, which analyzed the power of A\"it-Sahalia's test.  For
the null hypothesis he took a Vasicek model of interest rates, while for the
alternative he used the CIR model of \cite{cir85}.  He compared the
size-adjusted power of A\"it-Sahalia's test against a conditional moment-based
specification test, and found that after size adjustment, the power of
A\"it-Sahalia's test for this null-alternative pair was considerably lower than that of
the conditional moment test.  He interpreted his findings as implying that
A\"it-Sahalia's test estimates the stationary density too imprecisely to have
good power against this alternative (\cite{p98}, p.~462).

To investigate this interpretation, we now conduct a similar experiment, 
including results for the LAE test and that of the Cram\'er von Mises
test as well.  As in Pritsker, the Vasicek null associated with model
\eqref{eq:v} is paired with a discretized CIR alternative 
\begin{equation}
    \label{eq:cir}
    X_{t+1} = X_t + \kappa (b - X_t) \delta + \sigma \sqrt{X_t \delta} Z_t
    \qquad \{Z_t\} \iidsim N(0,1)
\end{equation}
As in section~\ref{ss:size}, we set $\delta=1/12$ and $n = 264$ for 22 years
of monthly observations.\footnote{In simulations we included a reflecting
barrier at zero to avoid taking the square root of negative nameyear.}  
Following Pritsker (1998, p.~460), our baseline parameter values are $\kappa =
0.89218$, $b = 0.090495$ and $\sigma =
0.180947$.  For each test we calculate the rejection frequency over 1,000
  replications.  The results of this experiment are shown in the row~1 of
  table~\ref{t:pp}.
The last four columns are the rejection frequencies at size $\alpha=0.05$ for
the Cram\'er von Mises test, a size-adjusted version of A\"it-Sahalia's test,
the LAE test with estimated parameters (test
\eqref{eq:test3}), and the conditional moment test used by Pritsker (1998,
p.~462) respectively.  As in Pritsker, for the conditional moment test
we estimate the model parameters by maximum likelihood under the Vasicek null,
and run the regression $\partial \ell(X_{t+1}, X_t)/\partial \sigma = \beta_0
+ \beta_1 X_t + u_{t+1}$, where $\ell(X_{t+1}, X_t)$ is the log likelihood of $(X_{t+1}, X_t)$ under the
null. We then conduct a two-sided test of $\beta_1 = 0$, an equality that
holds whenever the Vasicek null hypothesis is true.\footnote{\label{fn:dpc}In
the experiment, 1,000 time series of length $n$ were generated from the CIR
model with the specified parameters and monthly frequency.  In the LAE test, for
each time series, the parameters in the Vasicek null were estimated by
ordinary least squares, and the test \eqref{eq:test3} was evaluated.  In
evaluating the test, the critical value $c^{\Sigma}_{\alpha}(\hat \theta_n)$
was calculated by simulating the test statistic under the null (via
algorithm~\ref{al:cc2}, with $N=n$ and $M=500$).  Size-adjusted critical
values for the Cram\'er von Mises test and A\"it-Sahalia's test where produced
by the same method, changing only the definition of the test statistic.  Our
version of  A\"it-Sahalia's test statistic was the absolute value of the
normalized statistic (cf., e.g., Pritsker, 1998, p.~455).  The Cram\'er von
Mises test statistic was $\int (\Psi_n(y) - \Psi(\hat \theta_n, y))^2
\Psi(\hat \theta_n, dy)$ where  $\Psi(\hat \theta_n, y)$ represents the
theoretical stationary cdf under the Vasicek null at the estimated parameters,
and $\Psi_n$ denotes the empirical distribution of the data.  In the case of
A\"it-Sahalia's test, we used Silverman's rule for the bandwidth and a
standard normal Gaussian density for the kernel.}
\begin{table}
    \centering
    \begin{tabular}{cccccc}
       \toprule
       $b$ & $Z_t$ & CvM & AS & LAE & Cond m. \\
       \midrule
       0.090495 & $N(0,1)$ & 0.2487 & 0.3275 & 0.3662 & 1.0000 \\
       0.050000 & $N(0,1)$ & 0.4637 & 0.5887 & 0.6101 & 1.0000 \\
       0.025000 & $N(0,1)$ & 0.8475 & 0.9262 & 0.9375 & 1.0000 \\
       0.090495 & $t$ &  0.7825 & 0.8125 & 0.8352 & 0.5672 \\
       \bottomrule
    \end{tabular}
    \vspace{1em}
    \caption{Rejection frequency with Vasicek null and CIR alternative (test
    size 0.05)}
    \label{t:pp}
\end{table}

While Pritsker interpreted the low power of A\"it-Sahalia's test relative to
the conditional moment test as due to imprecise estimation of the stationary
density stemming from the use of nonparametric density estimators, our results
suggest that the main causes lie elsewhere.  Indeed, if we consider the first
row of table~\ref{t:pp}, we see that the LAE test offers only modest improvement
in power, despite the fact that our density estimator is $\sqrt n$-consistent.
Moreover, the empirical cdf used in the Cram\'er von Mises
test is also $\sqrt n$-consistent, and power is in fact lower than that of
A\"it-Sahalia's test.

Another possible explanation for the low power of the three tests based on the
stationary distribution (A\"it-Sahalia's test, the Cram\'er von Mises test and
the LAE test) in this setting is that the stationary distribution of the CIR
alternative with the baseline parameters can be well approximated from within
the set of Gaussian distributions available under the Vasicek null. 
In other words, once parameters are estimated, the theoretical distribution
under the null closely approximates the stationary distribution represented in
the data, relative to the natural dispersion of the test statistic
under the null.  

In order to examine this idea further, next we consider parameters in the
CIR model that produce greater skewness in the stationary distribution of the
alternative, making it more difficult to match with the Gaussian distributions
corresponding to the Vasicek null.
In rows two and three of table~\ref{t:pp}, we vary the equilibrium interest
rate $b$ from 9\% to 5\% and 2.5\% respectively, while holding $\kappa$ and
$\sigma$ fixed at the baseline values.  The skewness of the stationary density
of the CIR alternative increases as $b$ decreases.  As expected, greater
skewness leads to higher rejection rates for the three tests based on the
stationary distribution (rows 2 and 3 of table~\ref{t:pp}).

Although greater skewness improves the power of the tests based on the
stationary distribution (CvM, AS and LAE), all three remain dominated by the
conditional moment test (column 6 of table~\ref{t:pp}, rows 1--3).  However,
it is important to recall that the conditional moment test was chosen by
Pritsker precisely because of its high power against the CIR
alternative.\footnote{The conditional moment test concentrates a large amount
of its power against the CIR alternative because the expected value of the
score of the likelihood is linear under CIR.}  This is in fact the most
important reason why the power of A\"it-Sahalia's test is low relative to the
conditional moment test with a CIR alternative: This particular moment test
concentrates its power against this specific alternative, whereas
A\"it-Sahalia's test does not.  The LAE and CvM test have lower power than the
conditional moment test in this setting for exactly the same reason.

By similar reasoning, the higher power of this conditional moment test might
be fragile in practice, where the alternative is unknown.  Even if a CIR
alternative is suspected, unknown variations from the CIR alternative can
reverse the results, with the conditional moment test having lower power than
the other tests.  Row 4 of table~\ref{t:pp} illustrates this point.  Here $b$
returns to the baseline value of row 1, but the Gaussian
shock in \eqref{eq:cir} is replaced by a shock with heavy tails.\footnote{In
this simulation, the shock is $t$-distributed shock with 2.5
degrees of freedom.}  With this change, the LAE, AS and CvM tests have higher
power than the conditional moment test (table~\ref{t:pp}, row 4, columns 3--6).

\begin{table}
    \centering
    \begin{tabular}{ccccccc}
       \toprule
       $\rho$ & $\beta$ & $\gamma$ & CvM & AS & LAE & Cond m. \\
       \midrule
        0.9 & 1.0 & 0.0012 & 0.271 & 0.375 & 0.466 & 0.077 \\
        0.9 & 1.0 & 0.0034 & 0.375 & 0.653 & 0.787 & 0.123 \\
        0.9 & 1.0 & 0.0056 & 0.375 & 0.825 & 0.925 & 0.146 \\
        0.9 & 1.0 & 0.0078 & 0.348 & 0.879 & 0.958 & 0.154 \\
        0.9 & 1.0 & 0.0100 & 0.359 & 0.885 & 0.974 & 0.159 \\
       \bottomrule
    \end{tabular}
    \vspace{1em}
    \caption{Rejection frequency with Vasicek null and RSw alternative (test
    size 0.05)}
    \label{tab:rsw}
\end{table}

To further reinforce this point, next we compare the four tests with another
alternative (and the same null).  The alternative we consider is an AR(1)
model with Gaussian shocks and regime switching coefficients. The regime
switching (RSw) alternative has the form $X_{t+1} = \beta_t + \rho X_t + Z_{t+1}$,
where the shocks are standard normal, and $\{\beta_t\}$ follows a discrete
Markov process.  In particular, $\beta_0 = \beta$ where $\beta$ is a
parameter, $\beta_{t+1} = \beta_t$ with probability $1-\gamma$, and
$\beta_{t+1} = -\beta_t$ with probability $\gamma$.  Notice that when $\gamma
= 0$, this process reduces to a linear Gaussian AR(1) model.  Since the
density kernel for the linear Gaussian AR(1) shares the same parametric form
as that of the Vasicek density kernel (the density kernel $p$ in
example~\ref{eg:v}),  the null hypothesis is true when $\gamma = 0$.  Larger
values of $\gamma$ indicate greater divergence from the null.  For the other
parameters, we set $\rho = 0.9$, $\beta = 1$ and $n=500$.

The rejection frequencies for each test over 1,000 replications are shown in
table~\ref{tab:rsw} for different values of gamma.  (Other than the
alternative, the details of the calculations are the same as those described
in footnote~\ref{fn:dpc}.) Power curves for the LAE and conditional moment
tests are shown in figures~\ref{f:rsw}.  Not surprisingly, for the RSw
alternative, all three tests based on the stationary distribution (CvM, AS and
LAE) have higher power than this particular conditional moment test
(cf., table~\ref{tab:rsw}).  Of all
four, the LAE test has uniformly highest power over the values of $\gamma$ in
table~\ref{tab:rsw}.

\begin{figure}
  \begin{center}
      \rotatebox{0}{\scalebox{0.6}{\includegraphics{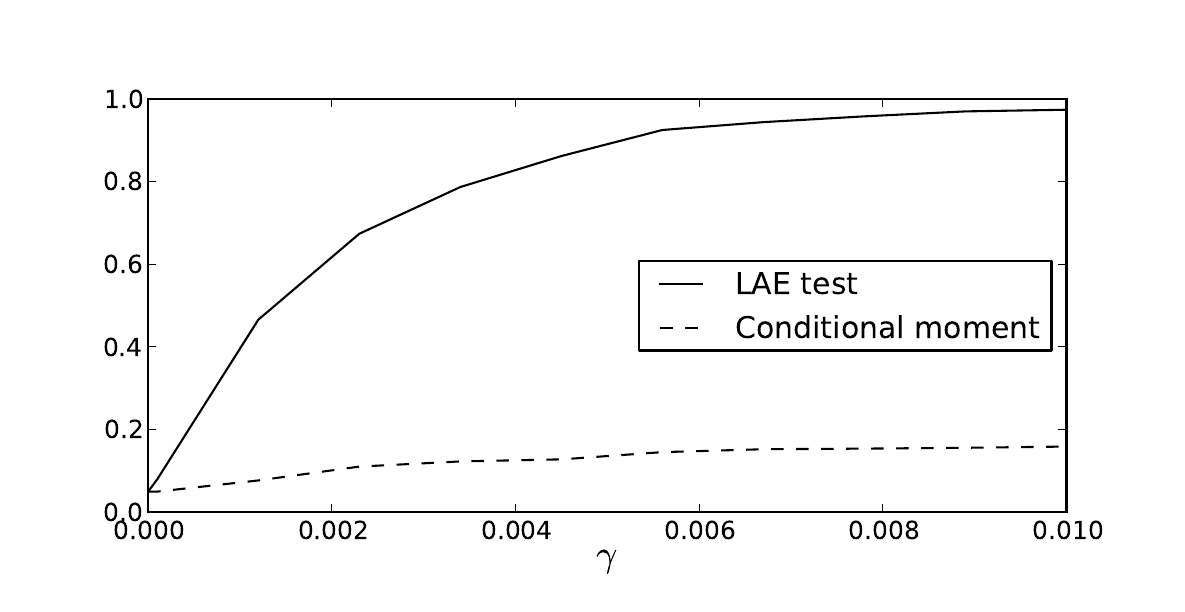}}}
  \end{center}
  \caption{Rejection frequency, regime switching alternative \label{f:rsw}}
\end{figure}

\subsection{Empirical Application}

In recent years, nonlinear business cycle models have been studied by many
authors.\footnote{See, for example, \cite{h89,pp97,ta92,hp02}.}  In this section, the LAE
test is
used to test univariate business cycle data $\{y_t\}$ for departures from a
linear Gaussian AR(1) null.  The data consists of quarterly
percentage growth rates in seasonally adjusted GDP for the US and Canada over
March 1950--September 2011.\footnote{GDP data are obtained from the IMF's
\textit{International Financial Statistics}.}
For comparison, results for the LST test of \cite{lst88}
 are also presented. The LST test is based on estimating an
AR(1) model for $y_t$ in the first stage, and then regressing the residuals
from the first stage on a polynomial in $y_{t-1}$.  The three versions of the
test we present correspond to a quadratic polynomial (LST 1), a cubic (LST
2), a quartic (LST 3), and a quartic without the cubic term (LST
4). The test statistic is $nR^{2}$, where $n$ is the sample size and $R^{2}$
  is the coefficient of determination from the second stage regression.

The $p$-values resulting from application of the LAE and LST tests to the data are
presented in table~\ref{t:bcycle}.  In the Canadian data, no tests reject the
Gaussian linear null hypothesis at either the 5\% or 10\% level.
In the US data, none of the LST tests reject the null hypothesis at either the
5\% or 10\% level, while the LAE test rejects the null hypothesis at both 
5\% and 10\%.  A likely interpretation is that the departures
from the Gaussian linear AR(1) model in the US data are other than those
featuring in the LST test.  For example, the LAE test can be sensitive to a
non-Gaussian error term, as shown in table~\ref{t:pp}.  On the other hand,
the LST test is directed only towards nonlinearities in the mean.

\begin{table}
    \centering
    \begin{tabular}{cccccc}
       \toprule
           & LAE & LST1 & LST2 & LST3 & LST4 \\
       \midrule
       Canada & 0.113 & 0.465 & 0.717 & 0.499 & 0.513 \\
       US & 0.013 & 0.672 & 0.688 & 0.800 & 0.867  \\
       \bottomrule
    \end{tabular}
    \vspace{1em}
    \caption{$p$-values for LAE and LST tests applied to rates of change in GDP}
    \label{t:bcycle}
\end{table}

\section{Conclusion}

In this paper we proposed a natural goodness of fit test for ergodic Markov
processes.  The test can be used to evaluate the hypothesis that a given time
series is generated by a parametric class of ergodic Markov models.  The power
of the test is concentrated against alternatives yielding stationary
distributions that fit poorly with the null hypothesis. No alternative needs
to be specified in order to implement the test.  Although the test is based on
a comparison of densities, the test statistic contains no smoothing
parameters, and the test has nontrivial power against $1/\sqrt{n}$ local
alternatives.  While the test is not distribution free, the critical value can
be computed consistently by Monte Carlo.  

In section~\ref{s:d} we studied the finite sample properties of our LAE test
relative to those of \cite{as96}, the Cram\'er von Mises test
and a conditional moment test designed to have power against a relatively
specific set of alternatives.  In revisiting an experiment of \cite{p98},
we found that the LAE test resolves a size distortion that occurs with
A\"it-Sahalia's test in a comparison of interest rate models.  We also
provided a new perspective on Pritsker's analysis of the power of
A\"it-Sahalia's test, and showed how goodness of fit tests based on the
stationary distribution can have good power properties relative to more
specific tests when the alternative is not completely certain.  Finally, in an empirical
application considering business cycle data, we showed how the LAE test can
detect departures from a linear AR(1) null not revealed by a more common test.

One idea not discussed above is the possibility of using weighting functions
to obtain additional power against certain alternatives.  This can be done by
adjusting the measure $\mu$ that defines the integrals in the test statistics.
(Recall from section~\ref{ss:su} that the dependence of integrals on $\mu$ is
suppressed in the notation.) The ability to apply different weighting
functions should add to the flexibility of the test.  Further investigation of
this topic is left to future research.

\section{Proofs}

\label{s:p}

In the proofs we will use the following facts without comment:  If $X$, $X_n$ and
$Y_n$ are $L_2$ random elements with $X_n \tod X$ in $L_2$ and $\|X_n
- Y_n\| = o_P(1)$ in $\RR$, then $Y_n \tod X$ (cf., e.g., \cite{d02},
lemma~11.9.4).  If $\alpha_n = o_P(1)$ in $\RR$ and $f \in L_2$, then $X_n :=
\alpha_n f$ is an $L_2$-valued random variable with $\| X_n \| = o_P(1)$ in
$\RR$.  If $X_n$ is an $L_2$-valued random variable, then the statement $X_n =
o_P(1)$ means that $\|X_n\| = o_P(1)$ in $\RR$.
The statement that $G \sim N(m, C)$ on $\hH$ is equivalent to the statement
$\bE \exp( i \la h, G \ra) = \exp\{ i \la h, m \ra - \la h, C h \ra / 2 \}$
for all $h \in \hH$, where $i$ is the imaginary unit (cf., e.g.,
\cite{p67}, theorem 6.4.), yielding the characterization
\begin{equation}
    \label{eq:ecf}
    G \sim N(m, C) \text{ on } \hH
    \; \iff \;
    \la G, h \ra \sim N( \la h, m \ra,  \la h, C h \ra )
    \text{ on $\RR$ for all } 
    h \in \hH
\end{equation}
Hence, the distribution of $G$ is defined by the values $\la h, m \ra$
and $\la h, C h \ra$ over $h \in \hH$.  

First we turn to the proof of lemma~\ref{l:dsn}.  In the proof we specialize to
the case $\hH = L_2$, which is sufficient for our purposes (and changes
nothing except notation).  As background to the proof,
note first that since any covariance operator $C$ is linear, positive, symmetric and Hilbert-Schmidt
on $L_2$ (see, for example, Bosq, 2000, theorem~1.7), we can
apply the spectral theorem in $L_2$ to obtain the decomposition 
\begin{equation}
    \label{eq:decomp}
    C h = \sum_{\ell = 1}^{\infty} \lambda_{\ell} \la h, v_{\ell} \ra v_{\ell}
    \qquad (h \in L_2)
\end{equation}
where $(v_{\ell})_{\ell \geq 1}$ is an orthonormal basis of $L_2$ consisting
of eigenfunctions of $C$, and $(\lambda_{\ell})_{\ell \geq 1}$ is the
corresponding eigenvalues (i.e., $C v_{\ell} = \lambda_{\ell}
v_{\ell}$ for all $\ell$).  The eigenvalues are real, nonnegative, and satisfy
$\sum_{\ell \geq 1} \lambda_{\ell} < \infty$. 

\begin{proof}[Proof of lemma~\ref{l:dsn}]
    As a first step, let us show that if $C$ is a covariance operator,
    $\{(\lambda_{\ell}, v_{\ell}\}_{\ell \geq 1}$ are its eigenelements and
    $\{Z_{\ell}\}$ are {\sc iid} and standard normal, then
    \begin{equation}
        \label{eq:crg}
        G_0 := \sum_{\ell = 1}^{\infty} \lambda_{\ell}^{1/2} Z_{\ell} v_{\ell}
        \sim N(0, C)
    \end{equation}
    To show that $G_0$ is zero-mean Gaussian on $L_2$, we must show
    that $\la G_0, h \ra$ is zero-mean Gaussian in $\RR$ for every $h \in L_2$.
    It suffices to show that this property holds on an orthonormal subset of
    $L_2$.\footnote{Let $(e_{\ell})$ be any orthonormal subset of $L_2$, and
    suppose that $\la G_0, e_{\ell} \ra$ is zero-mean Gaussian in $\RR$ for each
    $\ell \in \NN$.  Pick any $h \in L_2$.  Then $\la G_0, h \ra =
    \sum_{\ell} \la G_0, e_{\ell} \ra \la h, e_{\ell} \ra$.  The right-hand side
    is the almost sure limit of zero mean Gaussians, and hence is itself zero-mean
    Gaussian.}  Choosing $(v_{\ell})$ as our orthonormal subset, we have 
    \begin{equation*}
        \la G_0, v_{\ell} \ra 
        = \sum_k \lambda_k^{1/2} Z_k \la v_k, v_{\ell} \ra
        =  \lambda_{\ell}^{1/2} Z_{\ell} \sim N(0, \lambda_{\ell})
    \end{equation*}
    We also need to prove that $C$
    is the covariance operator of $G_0$, or $\bE \la G_0, g \ra \la G_0, h \ra = \la
    g, C h \ra$ for any $g, h \in L_2$.  It suffices to show the same for
    $h, g \in \{v_{\ell}\}_{\ell \in \NN}$.  Fixing $j, k \in \NN$, we have
    \begin{equation*}
        \bE \la G_0, v_j \ra \la G_0, v_k \ra 
        = \bE \lambda_j^{1/2} Z_j \lambda_k^{1/2} Z_k 
        = \lambda_k \1\{j = k\}
    \end{equation*}
    On the other hand, since $v_k$ is an eigenfunction of $C$ with
    eigenvalue $\lambda_k$, we have $\la v_j , C v_k \ra = \la v_j,
    \lambda_k v_k \ra = \lambda_k \1\{j = k\}$.  Hence
    $C$ is the covariance operator of $G_0$, and $G_0 \sim N(0, C)$ as
    claimed.

    Returning to the proof of lemma~\ref{l:dsn}, let $G \sim N(0, C)$ on
    $L_2$, and consider the distribution of $\|G\|^2$.  We
    have now shown that $G$ and $G_0$ have the same distribution, so $\|G\|^2$
    and $\|G_0\|^2$ also share the same distribution. To complete the proof of
    the lemma, observe that Parseval's
    identity gives
        $\| G_0 \|^2 
        = \sum_{\ell=1}^{\infty} \la G_0, v_{\ell} \ra^2
        = \sum_{\ell=1}^{\infty} \lambda_{\ell} Z_{\ell}^2$.
\end{proof}

In what follows, we make repeated use of the following Markov Hilbert space central
limit theorem, which is a simple corollary of \cite{s06}, theorem~3.1. 
In the statement of the theorem, $\{X_t\}$ is a stationary Markov process on
$\XX$, the function $F_0 \colon \XX \to L_2$ is Borel measurable, and $F := F_0 - \eE F_0(X_t)$. 

\begin{theorem}
    \label{t:hclt}
    If $\{X_t\}$ is geometrically ergodic and, for the function $V$ in
    \eqref{eq:vuedef}, there exists nonnegative constants
    $c_0$, $c_1$ and $\gamma$ such that $\gamma < 1$ and $\| F_0(x) \|^2 \leq
    c_0 + c_1 V(x)^{\gamma}$ for all $x \in S$,
    then $n^{-1/2} \sum_{t=1}^n F(X_t)$ converges to a
    centered Gaussian on $L_2$, with covariance operator $C$
    satisfying
    \begin{equation}
        \label{eq:adefc}
        \la h, C h \ra 
        = \bE \la F(X_1), h \ra^2
        +  2 \sum_{t=2}^{\infty} \bE \la F(X_1), h \ra \la F(X_t), h \ra
    \end{equation}
\end{theorem}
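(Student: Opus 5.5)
The plan is to derive Theorem~\ref{t:hclt} through a martingale (Poisson equation) approximation carried out directly in $L_2$, followed by a central limit theorem for stationary ergodic Hilbert-valued martingale differences; this is essentially the argument behind \cite{s06}, theorem~3.1.

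\textbf{Step 1: moments and weighted geometric ergodicity.} Since $\|F_0(x)\|^2 \leq c_0 + c_1 V(x)^{\gamma}$ and $\int V \psi < \infty$, we get $\bE \|F_0(X_t)\|^{2/\gamma} < \infty$. Hence $F_0(X_t)$ has Pettis mean $\eE F_0(X_t)$, and $F$ is well defined with a $2+\delta$ moment, where $\delta = 2/\gamma - 2 > 0$. Next, geometric ergodicity with weight $V$ implies geometric ergodicity in the $W$-norm with $W := 1 + V^{\gamma/2}$, a concave power of $V$ (cf.\ \cite{mt09}, chapter~15). That is,
\begin{equation*}
\sup_{|f| \leq W} \Bigl| \int p^t(x,y) f(y)\,dy - \int f \psi \Bigr| \leq R \rho^t W(x)
\end{equation*}
for some $\rho<1$ and $R<\infty$.

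\textbf{Step 2: Poisson equation in $L_2$.} For $h$ in the unit ball of $L_2$, the scalar function $f_h(x) := \la F_0(x), h\ra$ satisfies $|f_h| \leq (c_0 + c_1 V^{\gamma})^{1/2} \leq c\, W$, with $c$ not depending on $h$. Writing $P^tF(x) := \int p^t(x,y) F(y)\,dy$ as a Bochner integral, Step 1 gives, uniformly over $\|h\|\le 1$,
\begin{equation*}
\|P^t F(x)\| = \sup_{\|h\| \leq 1} |\la P^tF(x), h \ra| \leq c R \rho^t W(x).
\end{equation*}
Therefore $\hat G := \sum_{t \geq 0} P^t F$ converges in $L_2$ pointwise in $x$, satisfies $\|\hat G(x)\| \leq c' W(x)$, and solves $\hat G - P \hat G = F$. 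Because $W^2 \leq 2(1+V^{\gamma})$ is $\psi$-integrable, $\bE \|\hat G(X_t)\|^2 < \infty$.

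\textbf{Step 3: martingale CLT and covariance.} Set $M_t := \hat G(X_t) - P\hat G(X_{t-1})$. This is a stationary, ergodic, square-integrable $L_2$-valued martingale difference sequence, and
\begin{equation*}
\sum_{t=1}^n F(X_t) = \sum_{t=2}^{n} M_t + P\hat G(X_1) - P\hat G(X_n) + F(X_1) - \text{(telescoping terms)} ,
\end{equation*}
where every term outside the martingale sum is $O_P(1)$ in norm. The remainder is thus $o_P(1)$ after dividing by $\sqrt n$. A CLT for stationary ergodic martingale differences in a separable Hilbert space (e.g.\ Walk's theorem, or \cite{b00}, chapter~2) then gives $n^{-1/2}\sum_t M_t \tod N(0,C)$ with $\la h, Ch\ra = \bE \la M_1,h\ra^2$. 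Finally, I would expand $\bE\la M_1,h\ra^2 = \bE \la \hat G(X_1), h\ra^2 - \bE \la P\hat G(X_1),h\ra^2$ and substitute $\hat G = \sum_t P^tF$. Using $\bE \la F(X_1),h\ra \la P^{t-1}F(X_1),h\ra = \bE \la F(X_1),h\ra\la F(X_t),h\ra$, this should recover \eqref{eq:adefc}. The series there converges absolutely by the geometric bound of Step 2 together with Cauchy--Schwarz.

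\textbf{Main obstacle.} I expect Step 2 to be the delicate part: one needs a bound on $\|P^tF(x)\|$ that is uniform over the unit ball of $L_2$, so that the Hilbert-valued series converges in norm and not merely weakly. The point of passing to the weight $W = 1+V^{\gamma/2}$ is to dominate every $f_h$ by a single integrable weight. If that transfer of geometric ergodicity to $W$ turns out to be awkward, the fallback is to note that \eqref{eq:vuedef} yields geometric $\beta$-mixing. Combined with the $2+\delta$ moment from Step 1, one can then invoke a Hilbert-space CLT for strongly mixing sequences, with tightness controlled via covariance inequalities.
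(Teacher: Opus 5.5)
\textbf{The gap is in Step~1, and Step~2 depends on it.} The paper does not prove this statement; it only cites \cite{s06}, theorem~3.1, so your Poisson-equation route is your own. It is reasonable in outline, but Step~1 asserts something false. In \eqref{eq:vuedef}, $V$ is only a $\psi$-integrable multiplier in a \emph{total variation} bound. It is not assumed to satisfy a geometric drift condition, and the $V$-norm and $V^{\gamma}$-norm ergodicity results in \cite{mt09} apply to drift functions. A total variation bound gives no control over how much $W$-mass $p^t(x,\cdot)$ carries.

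Here is a counterexample. Let $\XX=\NN\times\{0,1\}$ with counting measure, $\psi(n,s)=c\,n^{-3}$, $\phi(n,s)=(n,1-s)$, and $p(x,y)=(1-a)\psi(y)+a\1\{y=\phi(x)\}$ with $a\in(0,1)$. Since $\psi\circ\phi=\psi$, we get $p^t(x,\cdot)=(1-a^t)\psi+a^t\1\{\cdot=\phi^t(x)\}$. Hence \eqref{eq:vuedef} holds with $\lambda=a$, $L=1$ for every $V\ge 1$, and \eqref{eq:bop} is trivial. Take $V(n,0)=1$, $V(n,1)=n$, and $F_0=V^{\gamma/2}e$ for a unit vector $e\in L_2$; all hypotheses hold. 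Yet for odd $t$, $\|P^tF(n,0)\|=a^t\,|n^{\gamma/2}-\int V^{\gamma/2}\psi|$ is unbounded in $n$, while $W(n,0)=2$. So neither your $W$-norm bound nor $\|\hat G(x)\|\le c'W(x)$ can hold.

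\textbf{How to repair it.} Use your fallback observation \emph{inside} the martingale argument rather than as a replacement for it. Step~3 never needs a pointwise bound on $\hat G$. It needs only $\sum_t \bigl(\bE\|P^tF(X_0)\|^2\bigr)^{1/2}<\infty$, so that $\hat G=\sum_t P^tF$ converges in $L^2(\psi;L_2)$. Two facts give this:
\begin{itemize}
\item By \eqref{eq:vuedef}, the $\beta$- (hence $\alpha$-) mixing coefficient between $\sigma(X_0)$ and $\sigma(X_t)$ for the stationary chain is at most $L\lambda^t\int V\psi$.
\item Conditional Jensen gives $\bE\|P^tF(X_0)\|^{2+\delta}\le \bE\|F(X_0)\|^{2+\delta}$.
\end{itemize}
The Hilbert-space form of Davydov's covariance inequality (Dehling and Philipp) then yields
\[ \bE\|P^tF(X_0)\|^2=\bE\la F(X_t),P^tF(X_0)\ra\le C\,\bigl(L\lambda^t \int V\psi\bigr)^{\delta/(2+\delta)}\bigl(\bE\|F(X_0)\|^{2+\delta}\bigr)^{2/(2+\delta)}, \]
which decays geometrically. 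With $\hat G\in L^2(\psi;L_2)$, the rest of your argument goes through unchanged and delivers \eqref{eq:adefc}: the telescoping identity, the Hilbert-space martingale CLT, and the covariance computation, whose series converges absolutely by the same bound.
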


The following results will also be needed in the proofs below:

\begin{lemma}
    \label{l:moco}
    Let $p$ be a density kernel, and let $\psi$ be its stationary
    density. If $p$ is $V$-mixing, then $\psi \in L_2$,
    $p(x,\cdot) \in L_2$ and $\bar p(x, \cdot) \in L_2$ for all $x \in \XX$.
    Moreover, if $X$ is any $\XX$-valued random variable, then  $y \mapsto
    p(X,y)$ is an $L_2$-valued random variable.\footnote{The statement that
    $p(X,\cdot)$ is an $L_2$-valued random variable includes the
    claim that $\Omega \ni \omega \mapsto p(X(\omega),\cdot) \in L_2$ is 
    measurable.  See the proof for details.}
\end{lemma}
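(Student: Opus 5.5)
Our plan is to obtain the integrability claims from \eqref{eq:bop} together with \eqref{eq:vuedef}, and to handle measurability by approximating the map $x \mapsto p(x,\cdot)$ through countably many scalar functionals.

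\emph{Step 1: $p(x,\cdot)$ and $\psi$ lie in $L_2$.} For each $x$, \eqref{eq:bop} gives $\|p(x,\cdot)\|^2 \le c_0 + c_1 V(x)^{\gamma} < \infty$. For $\psi$ we use the stationarity identity \eqref{eq:statden00}, $\psi(y) = \int p(x,y)\psi(x)\,dx$. By Minkowski's integral inequality (equivalently, Jensen applied to $\psi(x)dx$, which is a probability measure),
\begin{equation*}
\|\psi\| \le \int \|p(x,\cdot)\| \psi(x)\,dx \le \int \bigl(c_0 + c_1 V(x)^\gamma\bigr)^{1/2}\psi(x)\,dx .
\end{equation*}
Since $\gamma<1$, we have $V^{\gamma} \le 1 + V$, so the right-hand side is finite by the first half of \eqref{eq:vuedef}. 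Then $\bar p(x,\cdot) = p(x,\cdot) - \psi \in L_2$ by the triangle inequality. If we want to avoid Minkowski's integral inequality, we can instead expand $\int \psi(y)^2 dy = \int\int \psi(x)\psi(x') \la p(x,\cdot), p(x',\cdot)\ra\,dx\,dx'$ and bound the inner product by Cauchy--Schwarz; Tonelli applies because everything is nonnegative.

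\emph{Step 2: measurability.} We must show that $\omega \mapsto p(X(\omega),\cdot)$ is Borel measurable into $L_2$. It suffices to show that $\Phi\colon \XX \to L_2$, $\Phi(x) = p(x,\cdot)$, is $\xX$/Borel measurable, because composing with the measurable $X$ then gives the result. Since $L_2$ is separable (as $\xX$ is countably generated), Pettis' theorem reduces weak measurability to strong measurability. Equivalently, the Borel $\sigma$-algebra of a separable Hilbert space is generated by the functionals $h \mapsto \la h, g\ra$ for $g$ in a countable dense set. So it is enough to check that $x \mapsto \la p(x,\cdot), g\ra = \int p(x,y)g(y)\,dy$ is $\xX$-measurable for each $g \in L_2$.

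This last claim follows from joint $\xX\otimes\xX$-measurability of $p$ and Fubini--Tonelli. Split $g = g^+ - g^-$. Each $x \mapsto \int p(x,y) g^{\pm}(y)\,dy$ is measurable, with values in $[0,\infty]$, by Tonelli, using $\sigma$-finiteness of $\mu$. Both integrals are finite because $p(x,\cdot)\in L_2$, so their difference is a well-defined real measurable function. The same argument applied to $\bar p$, or simply subtracting the constant $\psi$, shows that $\bar p(X,\cdot)$ is also an $L_2$-valued random variable.

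\emph{Main obstacle.} The delicate point is Step 2: justifying that scalar (weak) measurability gives Borel measurability in $L_2$, and making sure that the pointwise function $p(x,\cdot)$ represents a well-defined element of $L_2$ despite a.e. identifications. We will rely on separability and the Pettis measurability theorem for this. The integrability in Step 1 is routine once we notice that $V^\gamma \le 1+V$ and that $V$ is $\psi$-integrable.
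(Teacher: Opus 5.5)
Your proposal is correct and follows essentially the paper's route: \eqref{eq:bop} for $p(x,\cdot)$, stationarity plus an averaging inequality and $\psi$-integrability of $V$ for $\psi$, the triangle inequality for $\bar p$, and separability with the Pettis measurability theorem for the last claim. The differences are minor. You use Minkowski's integral inequality and $V^{\gamma} \le 1 + V$, where the paper applies Jensen to $\bigl[\int p(x,y)\psi(x)\,dx\bigr]^2$ and then to the concave power $t \mapsto t^{\gamma}$. You also spell out the Tonelli argument on $g^{\pm}$ for weak measurability, which the paper leaves as ``easily verified.''
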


\begin{proof}
    Evidently (\ref{eq:bop}) implies that $p(x,\cdot) \in L_2$ for each $x \in
    \XX$. Regarding the claim that $\psi \in L_2$, the definition of
    stationarity and Jensen's inequality give
    \begin{equation*}
        \int \psi(y)^2 dy
        = \int \left[ \int p(x,y) \psi(x) dx \right]^2 dy 
        \leq \int \int p(x,y)^2 \psi(x) dx \, dy 
    \end{equation*}
    From \eqref{eq:bop} and (\ref{eq:vuedef}), we then have
    \begin{equation*}
        \int \psi(y)^2 dy
         \leq \int \int p(x,y)^2  dy \psi(x) dx 
         \leq c_0 + c_1 \int V(x)^{\gamma} \psi(x) dx 
         < \infty
    \end{equation*}
    Since $\gamma < 1$ we can apply Jensen's inequality to obtain
    \begin{equation*}
        \int V(x)^{\gamma} \psi(x) dx
        \leq \left[ \int V(x) \psi(x) dx \right]^{\gamma}
    \end{equation*}
    and this expression is finite by \eqref{eq:vuedef}.  We conclude that
    $\psi \in L_2$ as claimed.  Moreover,
    we can now see that $\bar p(x,\cdot) \in L_2$ for any $x \in \XX$, because
    \begin{equation*}
        \| \bar p(x, \cdot) \| 
        = \| p(x,\cdot) - \psi(\cdot) \| \leq  \| p(x,\cdot) \| + \| \psi \|
    \end{equation*}
    To show that
    $p(X,\cdot)$ is an $L_2$-valued random variable, we need to prove that
    $\Omega  \ni \omega \mapsto p(X(\omega),\cdot) \in L_2$ is also measurable, in the
    sense that preimages of Borel subsets of $L_2$ are measurable in $\Omega$.
    Since $L_2$ is separable, it follows from the Pettis measurability theorem
    that any mapping $\Omega \ni \omega \mapsto g(\omega) \in L_2$ is
    measurable whenever $\Omega \ni \omega \mapsto \la g(\omega), h \ra \in
    \RR$ is measurable for each $h \in L_2$.  Using this fact, the
    measurability of $\omega \mapsto p(X(\omega),\cdot)$ is easily verified.
    This concludes the proof of lemma~\ref{l:moco}.
\end{proof}

\begin{lemma}
    \label{l:ptbk2}
    If $p$ is $V$-mixing and $\{X_t\}$ is $p$-Markov, then $\eE \bar
    p(X_t,\cdot) = 0$ for all $t$.
\end{lemma}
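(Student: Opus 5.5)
The plan is to verify the claim through the defining property of the Pettis expectation. First we confirm that $\eE \bar p(X_t,\cdot)$ is well defined. Then we show that its inner product with every $h \in L_2$ vanishes, which forces it to be the origin of $L_2$.

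\emph{Existence.} By lemma~\ref{l:moco}, $\bar p(X_t,\cdot)$ is an $L_2$-valued random variable. We need $\bE \|\bar p(X_t,\cdot)\| < \infty$. Since $\|\bar p(x,\cdot)\| \leq \|p(x,\cdot)\| + \|\psi\|$ and $\|p(x,\cdot)\|^2 \leq c_0 + c_1 V(x)^{\gamma}$ by \eqref{eq:bop}, it suffices to bound $\bE \|p(X_t,\cdot)\|$. By stationarity $X_t \sim \psi$. Jensen's inequality then gives
\begin{equation*}
    \bE \|p(X_t,\cdot)\| \leq \left( c_0 + c_1 \left[ \int V \psi \right]^{\gamma} \right)^{1/2} < \infty ,
\end{equation*}
where finiteness follows from \eqref{eq:vuedef}, exactly as in the proof of lemma~\ref{l:moco}.

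\emph{Vanishing.} Fix $h \in L_2$. We compute
\begin{equation*}
    \bE \la \bar p(X_t,\cdot), h \ra = \int \int \{p(x,y) - \psi(y)\} h(y)\, dy\, \psi(x)\, dx .
\end{equation*}
The goal is to interchange the order of integration by Fubini. This needs the bound $\int\int |p(x,y) - \psi(y)|\,|h(y)|\, dy\, \psi(x)\, dx < \infty$. By Cauchy--Schwarz in $y$, the inner integral is at most $\|\bar p(x,\cdot)\|\,\|h\|$, and that is $\psi$-integrable by the existence step. After swapping, the inner integral becomes $\int \{p(x,y) - \psi(y)\}\psi(x)\,dx = \psi(y) - \psi(y) = 0$ for every $y$. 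This uses the stationarity identity \eqref{eq:statden00} and $\int \psi = 1$. Hence $\la \eE \bar p(X_t,\cdot), h\ra = 0$ for all $h$. Taking $h = \eE \bar p(X_t,\cdot)$ itself gives the result.

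The only delicate point is justifying Fubini, which amounts to the integrability bound above. That bound rests on $\gamma < 1$ together with $\int V\psi < \infty$, so no real obstacle is expected.
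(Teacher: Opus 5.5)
Your proof is correct and follows essentially the same route as the paper: fix $h \in L_2$, justify Fubini via the Cauchy--Schwarz bound $\int |\bar p(x,y) h(y)|\,dy \leq (\|p(x,\cdot)\| + \|\psi\|)\|h\|$ together with the moment bound from the proof of lemma~\ref{l:moco}, then apply the stationarity identity pointwise in $y$. You also check explicitly that $\bE \|\bar p(X_t,\cdot)\| < \infty$, so that the Pettis expectation exists; this is a small addition the paper leaves implicit.
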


\begin{proof}
    Fixing $t$ and letting $X = X_t$, this amounts to the claim that, for any
    $h \in L_2$ we have
    \begin{equation*}
        \bE \int \bar p(X,y) h(y) dy = 0
    \end{equation*}
    Fix $h \in L_2$.  Note that for each $y \in \XX$ we have
    \begin{equation}
        \label{eq:ebarpy}
        \bE \bar p(X,y) 
        = \int p(x,y) \psi(x)dx - \psi(y) = \psi(y) - \psi(y)  = 0
    \end{equation}
    As a consequence, $\bE \int \bar p(X,y) h(y) dy 
    = \int \bE \bar p(X,y) h(y) dy = 0$ whenever Fubini's theorem is
    valid. Fubini's theorem is valid whenever $\bE \int | \bar p(X,y) h(y) |
    dy < \infty$.  To check this, observe that, by the
    Cauchy-Schwartz and triangle inequalities,
    \begin{equation*}
        \int | \bar p(x,y) h(y) | dy 
         \leq \| \bar p(x,\cdot)  \| \| h \| 
         \leq (\| p(x,\cdot) \|  + \| \psi \| ) \| h \| 
    \end{equation*}
    Hence it suffices to show that $\bE \| p(X,\cdot) \|^2 = \int \int
    p(x,y)^2 dy \psi(x) dx < \infty$.  This
    claim was verified as part of the proof of lemma~\ref{l:moco}.
\end{proof}

\begin{lemma}
    \label{l:msl}
    Let $\{X_t\}$ be a geometrically ergodic Markov process on $\XX$ and let
    $V$ be the weight function in \eqref{eq:vuedef}. If $\hat X_t := (X_t,
    \ldots, X_{t+r})$, then $\{\hat X_t\}$ is Markov and
    geometrically ergodic on $\XX^{r+1}$ with weight function $\hat V
    (x_0,\ldots,x_r) := \sum_{k=0}^r V(x_k)$.  
\end{lemma}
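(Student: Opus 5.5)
The plan is to prove the Markov property directly, and then to deduce geometric ergodicity of $\{\hat X_t\}$ from that of $\{X_t\}$. The key fact is that the $t$-step transition of the block chain, for $t \geq r+1$, factors through a $(t-r)$-step transition of the original chain.

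\emph{Markov property.} Since $\hat X_{t+1} = (X_{t+1},\ldots,X_{t+r+1})$ shares its first $r$ coordinates with the last $r$ coordinates of $\hat X_t$, the Markov property of $\{X_t\}$ gives the conditional law of $\hat X_{t+1}$ given $\hat X_1,\ldots,\hat X_t$. It is the point mass on $(x_1,\ldots,x_r)$ times $p(x_r, y)\,dy$ in the last coordinate, and it depends only on $\hat X_t = (x_0,\ldots,x_r)$. Hence $\{\hat X_t\}$ is Markov with kernel $\hat P$. The stationary law is $\hat\pi(dx_0,\ldots,dx_r) = \psi(x_0)\prod_{k=1}^r p(x_{k-1},x_k)\,dx_0\cdots dx_r$, and its stationarity follows from that of $\psi$. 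The chain has no density kernel with respect to $\mu^{r+1}$, so I will phrase geometric ergodicity with kernels $\hat P^t(\hat x, B)$ and $\hat\pi(B)$ in place of the integrals in \eqref{eq:vuedef}.

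\emph{Geometric bound.} Fix $\hat x = (x_0,\ldots,x_r)$ and $t \geq r+1$. Then $\hat X_{t}$ (started from $\hat X_0=\hat x$) equals $(X_{t},\ldots,X_{t+r})$, and its law is obtained by running the original chain from $x_r$ for $s = t - r$ steps and then appending $r$ further transitions. So for any measurable $B \subset \XX^{r+1}$,
\begin{equation*}
\hat P^t(\hat x, B) - \hat\pi(B) = \int f_B(z)\,\bigl[p^{s}(x_r,z) - \psi(z)\bigr]\,dz,
\qquad f_B(z) := \bP\{(X_0,\ldots,X_r)\in B \given X_0 = z\} \in [0,1].
\end{equation*}
The set-wise bound in \eqref{eq:vuedef} extends to functions with values in $[0,1]$, at the cost of a factor of 2 via the total variation norm (splitting into positive and negative parts, or approximating by simple functions). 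This gives $|\hat P^t(\hat x,B)-\hat\pi(B)| \leq 2\lambda^{t-r} L V(x_r) \leq (2L\lambda^{-r})\,\lambda^t \hat V(\hat x)$.

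\emph{Small $t$ and integrability.} For $1 \leq t \leq r$ the difference is trivially at most $1$. To cover these cases by a bound of the form $\lambda^t \hat L \hat V(\hat x)$, I need $\hat V$ bounded below. This is the main technical obstacle, since $V$ may vanish (e.g.\ $V(x)=|x|$). I will handle it either by replacing $V$ with $V+1$, which keeps \eqref{eq:vuedef} valid and integrable because $\psi$ is a density, or by noting that $\hat V + 1$ is an admissible weight. Taking $\hat L := \max\{2L\lambda^{-r}, \lambda^{-r}\}$ then covers all $t$. Finally, $\int \hat V\, d\hat\pi = \sum_k \bE V(X_k) = (r+1)\int V\psi < \infty$ by stationarity. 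Uniqueness of the stationary law, which ergodicity requires, follows from the convergence $\hat P^t(\hat x,\cdot)\to\hat\pi$ for every starting point.
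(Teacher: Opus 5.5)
Your proposal is correct and is essentially the paper's argument. The paper also fixes $(x_0,\ldots,x_r)$ and writes the law of $\hat X_t$ as $p^{t-r}(x_r,\cdot)$ followed by $r$ further transitions. It reduces to $\bigl|\int g\,(p^{t-r}(x_r,\cdot)-\psi)\bigr|$ with $|g|\le 1$, obtains the constant $2L\lambda^{-r}$, and gets integrability of $\hat V$ from stationarity; your count $(r+1)\int V\psi$ is the correct one, where the paper writes $r$.

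Your separate treatment of $1\le t\le r$ is a genuine improvement. The paper skips these $t$ silently, since its formula needs $t>r$ and for $t\le r$ the block chain has no density with respect to $\mu^{r+1}$. The price you pay there is the weight $\hat V+1$ rather than $\hat V$ unless $V$ is bounded away from zero, and that price is unavoidable: if $V$ may vanish, the literal claim fails for small $t$, for example for an {\sc iid} chain with $V\equiv 0$ and $r\ge 1$.
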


\begin{proof}
    To see that $\{ \hat X_t \}$ is Markov, pick any bounded measurable $h
    \colon \XX^{r+1} \to \RR$.  We have
    \begin{align*}
        \bE [ h(\hat X_t) \given \hat X_{t-1}, \ldots, \hat X_1]
        & = \bE [ h(X_t, \ldots, X_{t+r}) \given X_{t+r-1}, \ldots, X_1] \\
        & = \bE [ h(X_t, \ldots, X_{t+r}) \given X_{t+r-1}, \ldots, X_{t-1}] 
    \end{align*}
    where the second equality is due to the Markov property of $\{X_t\}$.
    Changing notation we can write this as
    \begin{equation*}
        \bE [ h(\hat X_t) \given \hat X_{t-1}, \ldots, \hat X_1]
        = \bE [ h(\hat X_t) \given \hat X_{t-1}]
    \end{equation*}
    Since $h$ is an arbitrary bounded measurable function, we have shown that
    $\{\hat X_t\}$ is Markov as claimed.
    
    Next consider geometric ergodicity.  Let $V$ and $\hat V$ be as in the
    statement of the lemma.
    Pick any measurable $h \colon \XX^{r+1} \to \RR$ such that $|h| \leq 1$.  
    Fix $(x_0,\ldots, x_r) \in \XX^{r+1}$.  Let $\psi_t := p^{t-r}(x_r,
    \cdot)$, which is the density of $X_t$ given $X_r = x_r$.  Observing that 
    \begin{equation*}
        \hat \psi(x_t,\ldots,x_{t+r}) := \psi(x_t) p(x_t, x_{t+1}) \cdots
            p(x_{t+r-1}, x_{t+r})
    \end{equation*}
    is the stationary density of $\hat X_t$ and
    \begin{equation*}
        \hat \psi_t(x_t,\ldots,x_{t+r}) := \psi_t(x_t) p(x_t, x_{t+1}) \cdots
            p(x_{t+r-1}, x_{t+r})
    \end{equation*}
    is the density of $\hat X_t$ given $\hat X_0 = (x_0, \ldots, x_r)$, we then have
    \begin{equation}
        \label{eq:handg}
        \left| \int h \hat \psi_t - \int h \hat \psi \right|
            = \left| \int g \psi_t - \int g \psi \right|
    \end{equation}
    for $g(x_t) := \int \cdots \int h(x_t,\ldots,x_{t+r}) p(x_t,x_{t+1})
    \cdots p(x_{t+r-1}, x_{t+r}) dx_{t+1} \cdots dx_{t+r}$.
    Given that $|h| \leq 1$, we also have $|g| \leq 1$, and therefore
    \begin{equation*}
        \left| \int g \psi_t - \int g \psi \right|
        \leq 2 \sup_{B \in \xX} 
            \left| \int_B  \psi_t - \int_B \psi \right|
        = 2 \sup_{B \in \xX} 
        \left| \int_B  p^{t-r}(x_r,y) dy  - \int_B \psi(y) dy \right|
    \end{equation*}
    From this bound, \eqref{eq:handg} and \eqref{eq:vuedef}, we then have
    \begin{equation*}
        \left| \int h \hat \psi_t - \int h \hat \psi \right|
        \leq 2 \lambda^{t-r} L V(x_r)
        \leq 2 \lambda^{t-r} L  \sum_{k=0}^r V(x_k)
        = \lambda^t \left( \frac{2L}{\lambda^r} \right) \hat V(x_0,\ldots,x_r)
    \end{equation*}
    Specializing to $h = \1_B$ and recalling \eqref{eq:vuedef}, we see that
    $\{\hat X_t\}$ satisfies the condition on the right-hand side of
    \eqref{eq:vuedef}.  Regarding the finiteness condition on the left-hand
    side of \eqref{eq:vuedef}, observe that $\bE \hat V(\hat X_t) =
    \sum_{k=0}^r \bE V(X_{t+r}) = r \int V d \psi$.  This term is finite by
    geometric ergodicity of the original process $\{X_t\}$.  We conclude that
    $\{\hat X_t\}$ is geometrically ergodic with weight function $\hat V$.  
\end{proof}

\subsection{Theorems~\ref{t:bk2} and \ref{t:dms}}

\begin{proof}[Proof of theorem~\ref{t:bk2}]
    Let $p$ be $V$-mixing and let $\{X_t\}_{t=1}^n$ be $p$-Markov.  Define $F_0(X_t)
    := p(X_t, \cdot)$ and let $F(X_t) := \bar p(X_t, \cdot) = p(X_t, \cdot) -
    \psi$.  We saw in lemmas~\ref{l:moco} and \ref{l:ptbk2} that
    $F_0(X_t)$ is an $L_2$-valued random variable satisfying
    $\eE F_0(X_t) = \psi$. Moreover, $\| F_0(x) \|^2 \leq c_0 + c_1
    V(x)^{\gamma}$ for all $x \in \XX$ by \eqref{eq:bop}.  Applying
    theorem~\ref{t:hclt}, we then have the weak convergence $n^{-1/2}
    \sum_{t=1}^n F(X_t) \tod N(0,C)$, where $C$ is defined in
    \eqref{eq:adefc}.  It is straightforward to check that this expression and
    \eqref{eq:defc0} are identical, and hence $C = \Lambda$. In summary,
    $n^{-1/2} \sum_{t=1}^n \bar p(X_t,\cdot) \tod N(0,\Lambda)$ as claimed.
\end{proof}

\begin{proof}[Proof of theorem~\ref{t:dms}]
    Let $\{X_t\}_{t=1}^n$ be $p$-Markov.  Define
    \begin{equation}
        \label{eq:lae}
        \psi_n(y) := \frac{1}{n} \sum_{t=1}^n p(X_t, y) 
        \quad \text{and} \quad
        \psi_{kn}'(y) := \frac{1}{n} \sum_{t=1}^{kn} p(X'_t, y)
    \end{equation}
    Let $(U_{\ell})_{\ell \geq 1}$ and $(U_{\ell}')_{\ell \geq 1}$ be
    mutually independent {\sc iid} sequences of standard normal random
    variables.  Fix $k \in \NN$, and consider the decomposition
    \begin{equation*}
        n^{1/2} (\psi_n - \psi_{kn}') 
         = n^{1/2} (\psi_n - \psi) - k^{-1/2} (kn)^{1/2} (\psi_{kn}' - \psi) 
    \end{equation*}
    Note that $n^{1/2} (\psi_n - \psi)$ and $(kn)^{1/2} (\psi_{kn}' - \psi)$
    are independent random functions in $L_2$.  By theorem~\ref{t:bk2} and the
    representation \eqref{eq:crg} we have
    \begin{equation*}
        n^{1/2} (\psi_n - \psi) \tod \sum_{\ell} \lambda_{\ell}^{1/2} U_{\ell} v_{\ell}
        \quad \text{and} \quad
        (kn)^{1/2} (\psi_{kn}' - \psi) 
        \tod \sum_{\ell} \lambda_{\ell}^{1/2} U_{\ell}' v_{\ell}
    \end{equation*}
    By independence and continuity of addition and scalar multiplication in
    $L_2$, we then have
    \begin{equation*}
        n^{1/2} (\psi_n - \psi'_{nk})
        \tod 
        \sum_{\ell} \lambda_{\ell}^{1/2} U_{\ell} - 
            k^{-1/2} \sum_{\ell} \lambda_{\ell}^{1/2} U_{\ell}' v_{\ell}
         = \sum_{\ell} \lambda_{\ell}^{1/2} (U_{\ell} - k^{-1/2}U_{\ell}') v_{\ell}
    \end{equation*}
    Applying the continuous mapping theorem and the Pythagorean law, we obtain
    \begin{equation*}
        n \| \psi_n - \psi'_{nk} \|^2
        \tod 
        \| \sum_{\ell} \lambda_{\ell}^{1/2} 
            (U_{\ell} - k^{-1/2}U_{\ell}') v_{\ell} \|^2
        =  \sum_{\ell} \lambda_{\ell} (U_{\ell} - k^{-1/2}U_{\ell}')^2 
    \end{equation*}
    The left-hand side of this equation is equal to the left-hand side
    of (\ref{eq:dmsn}).  Moreover, if $Z_{\ell}$ is standard normal, then $(1
    + 1/k) Z_{\ell}^2$ and $(U_{\ell} - k^{-1/2}U_{\ell}')^2$ have the same
    law.  This completes the
    proof of (\ref{eq:dmsn}).
\end{proof}

\subsection{Local Alternatives: Theorem~\ref{t:la}}

Let $\hH$ be defined as $L_2 \times \RR$, with inner product
\begin{equation*}
    \la g, h \ra = \la g_1, h_1 \ra + g_2 h_2
    \qquad (g = (g_1, g_2) \text{ and } h = (h_1, h_2) )
\end{equation*}
(Here $\la g, h \ra$ is the inner product in $\hH$ and $\la g_1, h_1 \ra$ is
the inner product in $L_2$.  The notation does not distinguish between them,
but the meaning will be clear from context.)
With the norm $\| h \| = \sqrt{\la h, h \ra}$, the space $\hH$ is a Hilbert
space, and the norm topology of $\hH$ corresponds to the product topology of
$L_2 \times \RR$.
The next result is an extension of the Cram\'er-Wold theorem to $\hH$:

\begin{lemma}
    \label{l:acih}
    Let $U_n := (Y_n, \ell_n)$ be a random sequence in $\hH$, where $Y_n$ is a
    random element of $L_2$ and $\ell_n$ is a random variable for all $n$.
    Let $U$ be a Gaussian random element of $\hH$ with distribution $N(m, S)$.
    \begin{equation*}
        \la U_n, h \ra \tod N( \la h, m \ra, \la h, S h \ra)
        \text{ in } \RR \text{ for all } h \in \hH
        \implies
        U_n \tod U \text{ in } \hH
    \end{equation*}
\end{lemma}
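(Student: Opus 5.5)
The plan is to go through characteristic functionals, following the paper's characterization \eqref{eq:ecf} of Gaussians on $\hH$.

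\emph{Step 1: identify the limit of the characteristic functionals.} Fix $h \in \hH$. The hypothesis gives $\la U_n, h \ra \tod N(\la h, m \ra, \la h, S h \ra)$ in $\RR$. The scalar Lévy continuity theorem (evaluated at the point $1$) then gives
\begin{equation*}
    \bE \exp(i \la U_n, h \ra) \to \exp\{ i \la h, m \ra - \la h, S h \ra / 2 \}.
\end{equation*}
By \eqref{eq:ecf}, the right-hand side is exactly $\bE \exp(i \la U, h \ra)$. So the characteristic functionals of $U_n$ converge pointwise on $\hH$ to that of $U$. Since $\hH = L_2 \times \RR$ is separable and the characteristic functional determines a Borel law on $\hH$, any subsequential weak limit of $\{U_n\}$ must equal the law of $U$.

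\emph{Step 2: upgrade pointwise convergence to weak convergence.} The plan is to invoke the Hilbert-space Lévy continuity theorem (Parthasarathy 1967, ch.~VI). By Prohorov's theorem this reduces to relative compactness of the laws of $U_n$. Once that is available, Step~1 pins down the unique limit and gives $U_n \tod U$.

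I expect Step~2 to be the main obstacle, and it cannot be skipped. In infinite dimensions, pointwise convergence of characteristic functionals does not imply tightness. For example, take $U_n := (e_n, 0)$ with $(e_n)$ orthonormal in $L_2$. Then $\la U_n, h \ra \to 0 = N(0,0)$ for every $h$, yet $\|U_n\| = 1$, so $U_n \not\tod 0$.

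So the argument must extract compactness somewhere. What I would try first is Prohorov's criterion in the basis $(v_\ell)$ of eigenvectors of $S$: aim to show
\begin{equation*}
    \sup_n \bP\Bigl\{ \sum_{\ell > N} \la U_n, v_\ell \ra^2 > \epsilon \Bigr\} \to 0 \quad (N \to \infty)
\end{equation*}
for every $\epsilon > 0$, together with boundedness of the finite-dimensional coordinates. The latter already follows from the hypothesis.

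The tail condition cannot come from the bare hypothesis. As the example shows, the statement as literally worded is false without it. In the place the lemma is used, namely $U_n = (Y_n, \ell_n)$ with $Y_n = n^{-1/2}\sum \bar p(X_t,\cdot)$, the needed control holds because $Y_n \tod N(0,\Lambda)$ under $H_0$ by theorem~\ref{t:bk2}, so $\{Y_n\}$ is tight in $L_2$; and $\ell_n$ is tight in $\RR$ by its own CLT. A product of tight marginals is tight in $\hH$. This closes Step~2 in every application in the paper.

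I would therefore present the proof as Steps~1--2 and state this tightness input explicitly at that point. The implication is true only with it.
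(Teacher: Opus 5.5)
Your argument is essentially the paper's: tightness of $\{U_n\}$ comes from tightness of the two marginals via compact rectangles $K_a \times K_b$, and the limit is identified through the one-dimensional projections $\la U_n, h \ra$. The paper cites Bosq's theorem~2.3 where you use characteristic functionals and uniqueness, which amounts to the same thing; tightness of $\{\ell_n\}$ also already follows from the hypothesis with $h=(0,1)$, as the paper notes, so no separate CLT is needed. You are right that tightness of $\{Y_n\}$ cannot be derived from the stated hypothesis, and your example $U_n=(e_n,0)$ is a valid counterexample; the paper's proof in fact uses this tightness without justification (``since $\{Y_n\}$ and $\{\ell_n\}$ are both tight''), supplying it only where the lemma is applied in lemma~\ref{l:joco}, via theorem~\ref{t:bk2}, exactly as you propose.
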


\begin{proof}
    Suppose for the moment that $\{ U_n \}$ is tight in $\hH$.  In this case,
    to show that $U_n$ converges in distribution to $U$ in $\hH$, we need only
    show that $\la U_n, h \ra$ converges in distribution to $\la U, h \ra$ in
    $\RR$ for all $h \in \hH$ (Bosq, 2000, theorem~2.3).  This is immediate
    from \eqref{eq:ecf},
    which tells us that $\la U, h \ra$ has distribution $N( \la h, m \ra, \la
    h, S h \ra)$.

    It remains to show that $\{ U_n \}$ is tight in $\hH$.  To see that this
    is so, note that, as required in the lemma, $\la U_n, h \ra$ converges in
    distribution for all $h \in \hH$.  Choosing $h = (0, 1)$, we see that
    $\ell_n$ converges in distribution, and is therefore tight.
    Now fix $\epsilon > 0$.  Since $\{Y_n\}$ and $\{ \ell_n \}$ are both
    tight, we can find compact sets $K_a \subset L_2$ and $K_b \subset \RR$
    with $\bP \{ Y_n \notin K_a \} < \epsilon / 2$ and $\bP \{ \ell_n \notin
    K_b \} < \epsilon / 2$ for all $n$.  The set $K_a \times K_b$ is compact
    in the product topology on $\hH$, and we have
    \begin{equation*}
        \bP \{ U_n \notin K_a \times K_b \}
         \leq \bP \{ Y_n \notin K_a \} \cup \{ \ell_n \notin K_b \} 
         \leq \bP \{ Y_n \notin K_a \} + \bP \{ \ell_n \notin K_b \} 
        < \epsilon
    \end{equation*}
    We conclude that $\{U_n\}$ is tight in $\hH$, completing the proof of
    lemma~\ref{l:acih}.
\end{proof}

\begin{lemma}
    \label{l:joco}
    Let $Y_n$, $\bQ_n$ and $\bP_n$ be as defined in section~\ref{ss:la}, let
    \begin{equation*}
        r(x,y) := \frac{k(x,y)}{p(x,y)},
        \qquad 
        \sigma^2 
        := \bE r(X_t, X_{t+1})^2 
        = \int r(X_t,X_{t+1})^2 d \bP_n
    \end{equation*}
    and let $\ell_n \colon \XX^n \to \RR$ be the log likelihood ratio
    \begin{equation*}
        \ell_n 
        = \log \frac{d \bQ_n}{d \bP_n} 
        = \log \left\{
            \frac{ \prod_{t=2}^n p_n(X_{t-1}, X_t) }
                { \prod_{t=2}^n p(X_{t-1}, X_t) }
              \right\}
    \end{equation*}
    If $U_n := (Y_n, \ell_n)$ and $U = (Y, \ell)$ has distribution $N(m, S)$
    for $m = (0, -\sigma^2/2)$ and $S$ satisfying
    \begin{equation*}
        \la h, S h \ra 
            = \la h_1, \Lambda h_1 \ra + 2 \la \tau, h_1 \ra h_2 + h_2^2 \sigma^2
            \qquad (h = (h_1, h_2) \in \hH)
    \end{equation*}
    then $U_n \tod U$ under $\bP_n$.
\end{lemma}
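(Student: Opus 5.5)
By lemma~\ref{l:acih}, it suffices to fix $h = (h_1, h_2) \in \hH$ and show that, under $\bP_n$,
\begin{equation*}
    \la U_n, h \ra = \la Y_n, h_1 \ra + h_2 \ell_n \tod N\bigl(-h_2 \sigma^2/2, \; \la h_1, \Lambda h_1 \ra + 2 h_2 \la \tau, h_1 \ra + h_2^2 \sigma^2\bigr).
\end{equation*}
The plan is to expand $\ell_n$ to second order, which yields an asymptotically linear statistic, and then apply a central limit theorem to an additive functional of the pair chain $\hat X_t := (X_t, X_{t+1})$.

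First, the expansion. Set $r_t := r(X_{t-1}, X_t)$, so that $\ell_n = \sum_{t=2}^n \log(1 + r_t/\sqrt n)$. A Taylor expansion with the remainder written in Lagrange form gives
\begin{equation*}
    \log(1+u) = u - \tfrac{u^2}{2} + \tfrac{u^3}{3(1+\xi)^3}, \qquad \xi \text{ between } 0 \text{ and } u .
\end{equation*}
Since $r_t/(1 + \delta r_t) = k/(p + \delta k)$, assumption~\ref{a:laa} bounds the cubic remainder summand by $n^{-3/2}$ times an integrable variable with finite expectation. The total remainder is therefore $O_P(n^{-1/2})$, which is $o_P(1)$. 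Under stationarity and ergodicity of $\{\hat X_t\}$ (lemma~\ref{l:msl} with $r = 1$), the law of large numbers gives $\frac{1}{n}\sum r_t^2 \toprob \sigma^2$; note that $\sigma^2 < \infty$ follows from the third moment condition. Hence
\begin{equation*}
    \ell_n = n^{-1/2}\sum_{t=2}^n r_t - \sigma^2/2 + o_P(1).
\end{equation*}
We also have $\bE[r(X_{t-1}, X_t) \given X_{t-1}] = \int k(X_{t-1}, y)\, dy = 0$, so $\{r_t\}$ is a martingale difference sequence.

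Second, the CLT. We need
\begin{equation*}
    \la Y_n, h_1 \ra + h_2 n^{-1/2} \sum_t r_t = n^{-1/2}\sum_t \bigl[\la \bar p(X_t, \cdot), h_1 \ra + h_2 r(X_{t-1}, X_t)\bigr] + o_P(1)
\end{equation*}
to be asymptotically normal. I would apply theorem~\ref{t:hclt} (with $\hH = \RR$) to the pair chain $\{\hat X_t\}$ with weight $\hat V(x_0, x_1) = V(x_0) + V(x_1)$.

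The main obstacle lies here. The moment bound $|F_0|^2 \le c_0 + c_1 \hat V^\gamma$ is satisfied by the $\bar p$ part via \eqref{eq:bop} and Cauchy--Schwarz, but it need not be satisfied pointwise by $r$. If theorem~\ref{t:hclt} does not apply directly, the fallback is to use the martingale structure. I would write the $\bar p$ part as a martingale plus a coboundary via the Poisson equation, which is available under geometric ergodicity. The sum then becomes a stationary ergodic martingale with square-integrable increments, and the Billingsley--Ibragimov martingale CLT applies. The $r$ part needs only $\bE r^2 < \infty$ for this.

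Third, the variance. The limiting variance is $\la h_1, \Lambda h_1 \ra + h_2^2 \sigma^2 + 2 h_2 c$, where $c$ is the long-run cross covariance. Because $r_t$ is a martingale difference with respect to $\mathcal F_t = \sigma(X_1, \ldots, X_t)$, we get $\bE\, \bar p(X_s, \cdot)\, r_t = 0$ for $s < t$, so only terms with $s \ge t$ survive. By stationarity,
\begin{equation*}
    c = \sum_{j \ge 1} \bE\, \la \bar p(X_{j+1}, \cdot), h_1 \ra\, r(X_1, X_2) = \la \tau, h_1 \ra ,
\end{equation*}
where the interchange with the $L_2$ inner product is justified by Fubini, and summability follows from geometric ergodicity. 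Finally, the mean is $-h_2\sigma^2/2$, which matches $m = (0, -\sigma^2/2)$, and this completes the argument via lemma~\ref{l:acih}.
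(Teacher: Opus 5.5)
Your proof is correct and has the same skeleton as the paper's. Both reduce to scalar projections via lemma~\ref{l:acih}, expand $\ell_n$ to second order with a Lagrange cubic remainder controlled by assumption~\ref{a:laa}, replace $\frac{1}{n}\sum r_t^2$ by $\sigma^2$, and obtain the cross term $\la \tau, h_1 \ra$ from the zero conditional mean of $r(X_{t-1},X_t)$ given the past. Your Markov-inequality bound giving an $O_P(n^{-1/2})$ remainder is slightly cleaner than the paper's appeal to ergodicity.

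The genuine difference is the CLT step. The paper applies the scalar CLT for geometrically ergodic chains (Meyn--Tweedie, theorem~17.0.1) directly to $q(X_{t-1},X_t)=\la h_1,\bar p(X_t,\cdot)\ra+h_2 r(X_{t-1},X_t)$ on the pair chain of lemma~\ref{l:msl}. That theorem needs a pointwise bound of the form $q^2\le \hat V$. The paper does not verify it, and, as you observe, it need not hold for $r$.

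Your route combines a Poisson-equation martingale approximation for $f(x):=\la h_1,\bar p(x,\cdot)\ra$ with the fact that $r_t$ is already a martingale difference, then applies Billingsley--Ibragimov. It needs only $\bE r^2<\infty$ on the $r$ side, so it is more robust. Its cost is that you must produce a Poisson solution $\hat f\in L_2(\psi)$, and this does not follow at once from the total-variation definition \eqref{eq:vuedef}. You should supply it along these lines:
\begin{itemize}
\item $|f|^2\le C(1+V^{\gamma})$ gives $f\in L_{2/\gamma}(\psi)$ with $2/\gamma>2$;
\item \eqref{eq:vuedef} with $\int V\psi<\infty$ makes the stationary chain geometrically $\beta$-mixing;
\item a McLeish-type mixingale inequality then gives $\sum_k\|P^kf\|_{L_2(\psi)}<\infty$, where $P^kf(x):=\int p^k(x,y)f(y)\,dy$.
\end{itemize}
A shorter repair that stays closer to the paper: assumption~\ref{a:laa} at $\delta=0$ gives $\bE|r|^3<\infty$. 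Hence $q$ has $2+\epsilon$ moments, and a CLT for geometrically mixing sequences applies to $q$ directly.

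One small omission: lemma~\ref{l:acih}'s proof uses tightness of $\{Y_n\}$ in $L_2$. Convergence of the scalar projections does not supply this in infinite dimensions. Add, as the paper does, that tightness holds because $Y_n\tod N(0,\Lambda)$ under $\bP_n$ by theorem~\ref{t:bk2}.
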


\begin{proof}
    In what follows, all probabilities and expectations are evaluated under
    $\bP_n$.  We begin by obtaining a more convenient expression for the likelihood
    ratio $\ell_n$.  Writing $p_t$ for $p(X_{t-1}, X_t)$ and 
    $k_t$ for $k(X_{t-1}, X_t)$, we have
    \begin{equation*}
        \ell_n = \sum_{t=2}^n \{ \log(p_t + k_t / \sqrt{n}) - \log(p_t) \}
    \end{equation*}
    Expanding the $\log$ function around $p_t$ yields
    \begin{equation*}
        \ell_n 
        = \frac{1}{\sqrt{n}} \sum_{t=2}^n \frac{k_t}{p_t}
        - \frac{1}{2 n} \sum_{t=2}^n \frac{k_t^2}{p_t^2}
        +  \frac{1}{3 n^{3/2}} \sum_{t=2}^n 
            \frac{ k_t^3 }
                { [p_t + \lambda n^{-1/2} \, k_t]^3 }
    \end{equation*}
    For some $\lambda \in [0,1]$.  Since $(X_{t-1}, X_t)$ is itself ergodic
    (see lemma~\ref{l:msl}) and
    \begin{equation*}
        \frac{ k_t^3 }
            { [p_t + \lambda n^{-1/2} \, k_t]^3 }
        \leq 
        \sup_{\delta \in [0,1]} 
        \frac{ |k(X_{t-1}, X_t)|^3 }{ |p(X_{t-1}, X_t) + \delta \, k(X_{t-1}, X_t)|^3 }
    \end{equation*}
    it follows from assumption~\ref{a:laa} that
        $\frac{1}{n} \sum_{t=2}^n
            k_t^3 / [p_t + \lambda n^{-1/2} \, k_t]^3  = O_P(1)$,
    and hence
    \begin{equation}
        \label{eq:elln}
        \ell_n 
        = \frac{1}{\sqrt{n}} \sum_{t=2}^n r(X_{t-1}, X_t)
        - \frac{1}{2 n} \sum_{t=2}^n r(X_{t-1}, X_t)^2
        + o_P(1)
    \end{equation}
    Now we return to the proof of lemma~\ref{l:joco}.  Taking into account
    lemma~\ref{l:acih} and the fact that $\{Y_n\}$ is tight in $L_2$---as
    implied by the convergence in theorem~\ref{t:bk2}---it suffices to show
    that
    \begin{equation}
        \label{eq:sfdp}
        \la U_n, h \ra
        \tod
        N( - h_2 \sigma^2 / 2, 
            \la h_1, \Lambda h_1 \ra + 2 \la \tau, h_1 \ra h_2 + h_2^2 \sigma^2)
    \end{equation}
    for arbitrary $h \in \hH$.  Fixing such an $h = (h_1, h_2)$, the
    definition of $U_n$ and our expression for $\ell_n$ in \eqref{eq:elln}
    gives
    \begin{equation*}
        \la U_n, h \ra
        = \frac{1}{\sqrt n} \sum_{t=1}^n \la h_1, \bar p(X_t, \cdot) \ra
        + h_2 \frac{1}{\sqrt{n}} \sum_{t=2}^n r(X_{t-1}, X_t)
        - h_2 \frac{1}{2 n} \sum_{t=2}^n r(X_{t-1}, X_t)^2
        + o_P(1)
    \end{equation*}
    Since $(X_{t-1}, X_t)$ is ergodic (lemma~\ref{l:msl}) we have
    \begin{equation*}
        \frac{h_2}{2} \frac{1}{n} \sum_{t=2}^n r(X_{t-1}, X_t)^2
        \to h_2 \frac{\sigma^2}{2}
        \quad \text{in probability}
    \end{equation*}
    As a result of this convergence and Slutsky's theorem, the result
    \eqref{eq:sfdp} will be confirmed if we show that
    \begin{equation}
        \label{eq:cq}
        \frac{1}{\sqrt n} \sum_{t=2}^n q(X_{t-1}, X_t)
        \tod
        N(0, \la h_1, \Lambda h_1 \ra + 2 \la \tau, h_1 \ra h_2 + h_2^2 \sigma^2)
    \end{equation}
    for $q(X_{t-1}, X_t) := \la h_1, \bar p(X_t, \cdot) \ra + h_2 r(X_{t-1}, X_t)$.
    To see that this is indeed the case, observe first that $\bE \la h_1, \bar
    p(X_t, \cdot) \ra = 0$ under $H_0$ by lemma~\ref{l:ptbk2}. Also,
    \begin{equation*}
        \bE r(X_{t-1}, X_t)
        = \int \int \frac{k(x,y)}{p(x,y)}  \psi(x) p(x,y) dx dy
        = \int \left[ \int k(x,y) dy \right]  \psi(x) dx 
        = 0
    \end{equation*}
    It follows that $\bE q(X_{t-1}, X_t) = 0$, and, as a result of
    geometric ergodicity of $(X_{t-1}, X_t)$ (lemma~\ref{l:msl}) and the
    scalar CLT for geometrically ergodic Markov processes (e.g., \cite{mt09},
    theorem~17.0.1), we have
    \begin{equation*}
        \frac{1}{\sqrt n} \sum_{t=2}^n q(X_{t-1}, X_t)
        \tod
        N(0, v),
        \quad
        v := \bE q(X_1, X_2)^2 
            + 2 \sum_{t=2}^{\infty} \bE q(X_1, X_2) q(X_t, X_{t+1})
    \end{equation*}
    It remains only to show that $v =  \la h_1, \Lambda h_1 \ra + 2 \la \tau, h_1
    \ra h_2 + h_2^2 \sigma^2$, which is the right-hand side of the variance in
    \eqref{eq:cq}.  Prior to proving this, we observe that all of the
    following statements are valid, and will be used without comment below:
    \begin{itemize}
        \item $\bE r(X_1, X_2) = 0$ and $\bE [ r(X_t, X_{t+1}) \given X_1] =
            0$ for all $t \geq 1$.
        \item $\bE [ r(X_1, X_2)r(X_t, X_{t+1}) ] = 0$ for all $t \geq 2$.
        \item $\bE [ \la h_1, \bar p(X_1, \cdot) \ra r(X_t, X_{t+1}) = 0$ 
            for all $t \geq 1$.
    \end{itemize}
    (The first of these statements has already been established above, and the
    proofs of the rest are similar.)  Turning now to the evaluation of $v$,
    note that
    \begin{align*}
        \bE q(X_1, X_2)^2 
        & = \bE \{ \la h_1, \bar p(X_1, \cdot) \ra^2 
            + 2 \la h_1, \bar p(X_1, \cdot) \ra h_2 r(X_1, X_2)
            +  h_2^2 r(X_1, X_2)^2 \}
            \\
        & = \bE  \la h_1, \bar p(X_1, \cdot) \ra^2  +  h_2^2 \sigma^2 
    \end{align*}
    while, for any given $t \geq 2$,
    \begin{equation*}
        \bE q(X_1, X_2)q(X_t, X_{t+1}) 
        = \bE  \la h_1, \bar p(X_1, \cdot) \ra \la h_1, \bar p(X_t, \cdot) \ra  
            +  \bE \la h_1, \bar p(X_t, \cdot) h_2 r(X_1, X_2)
    \end{equation*}
    As a result, we have
    \begin{multline*}
        v = 
        \bE  \la h_1, \bar p(X_1, \cdot) \ra^2  
        + 2 \sum_{t=2}^{\infty} 
            \bE  \la h_1, \bar p(X_1, \cdot) \ra \la h_1, \bar p(X_t, \cdot) \ra  
            \\
        + 2 \sum_{t=2}^{\infty}
            \bE \la h_1, \bar p(X_t, \cdot) h_2 r(X_1, X_2)
        +  h_2^2 \sigma^2
    \end{multline*}
    In view of \eqref{eq:defc0}, we have
    \begin{equation*}
        \bE  \la h_1, \bar p(X_1, \cdot) \ra^2  
        + 2 \sum_{t=2}^{\infty} 
            \bE  \la h_1, \bar p(X_1, \cdot) \ra \la h_1, \bar p(X_t, \cdot) \ra  
        = \la h_1, \Lambda h_1 \ra
    \end{equation*}
    Finally, using the definition of $\tau$, we obtain 
    $v =  \la h_1, \Lambda h_1 \ra + 2 \la \tau, h_1 \ra h_2 + h_2^2 \sigma^2$.
    This verifies \eqref{eq:cq}, and completes the proof of
    lemma~\ref{l:joco}.
\end{proof}

\begin{lemma}
    \label{l:aell}
    For $\ell_n, \ell$ defined in lemma~\ref{l:joco}, we have $\ell_n
    \tod \ell$ and $\bE \exp(\ell) = 1$ under $\bP_n$.
\end{lemma}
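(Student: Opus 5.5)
The plan is to read both claims off lemma~\ref{l:joco} by projecting onto the second coordinate. That lemma gives $U_n = (Y_n, \ell_n) \tod U = (Y, \ell)$ in $\hH$ under $\bP_n$, where $U \sim N(m, S)$ with $m = (0, -\sigma^2/2)$. The coordinate map $\pi_2 \colon \hH \to \RR$, $\pi_2(h_1, h_2) = h_2$, is continuous and linear, and it coincides with $h \mapsto \la h, (0,1) \ra$. By the continuous mapping theorem, $\ell_n = \pi_2(U_n) \tod \pi_2(U) = \ell$ under $\bP_n$. This is the first claim.

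For the second claim, we identify the law of $\ell$. Take $h = (0,1)$ in the characterization \eqref{eq:ecf}. Then $\la U, h \ra = \ell$ is normal with mean $\la h, m \ra = -\sigma^2/2$. Its variance is $\la h, S h \ra$, and in the formula of lemma~\ref{l:joco} with $h_1 = 0$ and $h_2 = 1$ the $\Lambda$ and $\tau$ terms vanish, leaving $\sigma^2$. Hence $\ell \sim N(-\sigma^2/2, \sigma^2)$. The moment generating function of a normal variable gives
\begin{equation*}
    \bE \exp(\ell) = \exp\left( -\frac{\sigma^2}{2} + \frac{\sigma^2}{2} \right) = 1 .
\end{equation*}

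The one point that needs checking is $\sigma^2 < \infty$, so that the Gaussian $U$ in lemma~\ref{l:joco} is well defined. This follows from assumption~\ref{a:laa}: taking $\delta = 0$ inside the supremum shows $\bE |r(X_1, X_2)|^3 < \infty$, and Lyapunov's inequality then gives $\bE r(X_1,X_2)^2 < \infty$. The degenerate case $\sigma^2 = 0$ causes no difficulty, since then $\ell = 0$ almost surely and $\bE \exp(\ell) = 1$ trivially. Beyond this finiteness check, the lemma is a direct corollary of lemma~\ref{l:joco}. The conclusion $\bE \exp(\ell) = 1$ is exactly the hypothesis needed in Le Cam's first lemma to conclude contiguity of $\bQ_n$ with respect to $\bP_n$, which will be used next in the proof of theorem~\ref{t:la}.
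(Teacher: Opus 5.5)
Your proposal is correct and matches the paper's proof. Both specialize lemma~\ref{l:joco} to $h=(0,1)$ to get $\ell_n \tod \ell$, then read off $\ell \sim N(-\sigma^2/2,\sigma^2)$, whence $\bE\exp(\ell)=1$; your extra check that $\sigma^2<\infty$ (assumption~\ref{a:laa} at $\delta=0$, then Lyapunov) is a correct supplement that the paper leaves implicit.
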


\begin{proof}
    We saw in lemma~\ref{l:joco} that, under $\bP_n$, we have $\la h, U_n \ra
    \tod \la h, U \ra$ for all $h \in \hH$, where $U \sim N(m, S)$ for $m$ and
    $S$ defined in lemma~\ref{l:joco}.  Specializing to $h = (0,1)$ obtains
    the first claim in lemma~\ref{l:aell}.
    Regarding the second claim in lemma~\ref{l:aell}, for this same $h$ we
    have $\ell = \la h, U \ra = N( \la m, h \ra, \la h, S h \ra)$, and
    given the definitions of $m$ and $S$ in
    lemma~\ref{l:joco}, 
    \begin{equation*}
        N( \la m, h \ra, \la h, S h \ra)
        = N( - \sigma^2/2, \sigma^2)
    \end{equation*}
    \begin{equation*}
        \fore
        \bE \exp(\ell) 
        = \exp \left( -\frac{\sigma^2}{2} + \frac{\sigma^2}{2} \right)
        = 1
    \end{equation*}
    when expectation is taken under $\bP_n$.  This completes the proof.
\end{proof}

We are now ready to complete the proof of theorem~\ref{t:la}.

\begin{proof}[Proof of theorem~\ref{t:la}]
    We saw in lemmas~\ref{l:joco} and \ref{l:aell} that if $\ell_n = d \bQ_n /
    d \bP_n$ is the log likelihood ratio, then under $\bP_n$ we have 
    \begin{equation*}
        \begin{pmatrix}
            Y_n \\
            \ell_n
        \end{pmatrix}
        \tod 
        \begin{pmatrix}
            Y \\
            \ell
        \end{pmatrix}
        \sim N(m, S)
    \end{equation*}
    and, moreover, $\bE \exp(\ell) = 1$.  Applying the abstract version of Le
    Cam's third lemma presented in \cite{vw96},
    theorem~3.10.7, we then have
    \begin{equation*}
        Y_n \tod \pi \text{ under } \bQ_n
    \end{equation*}
    when $\pi$ is the probability measure on $L_2$ defined by
    \begin{equation*}
        \pi(f) = \bE \exp(\ell) f(Y)
        \quad \text{for all bounded measurable } \;
        f \colon \XX \to \RR
    \end{equation*}
    To complete the proof of theorem~\ref{t:la}, we need only show that
    \begin{equation}
        \label{eq:pif}
        \pi := N(\tau, \Lambda)
    \end{equation}
    To see that this equality holds, let $V$ be a random element on $L_2$ with
    $V \sim \pi$.  In view of \eqref{eq:ecf}, to verify \eqref{eq:pif} it
    sufficies to show that, for arbitrary fixed $a \in L_2$, we have
    \begin{equation}
        \label{eq:dv}
        \la a, V \ra \sim N( \la a, \tau \ra, \la a, \Lambda a \ra)
    \end{equation}
    To establish \eqref{eq:dv}, observe that, from the definition of $\pi$,
    the moment generating function of $\la a, V \ra$ is 
    \begin{equation*}
        M(t) 
        := \bE \exp(t \la a, V \ra)
        = \bE \exp(\ell) \exp(t \la a, Y \ra)
        = \bE \exp(t \la a, Y \ra + \ell)
    \end{equation*}
    If $h \in \hH$ is defined as $h := (ta, 1)$ and $U := (Y, \ell)$, then
    $\la h, U \ra$ is precisely $t \la a, Y \ra + \ell$.  Since $U$ is
    Gaussian, we know that $\la h, U \ra$ is Gaussian, and therefore $t \la a,
    Y \ra + \ell$ is Gaussian in $\RR$.  Its expectation and variance are
    given by
    \begin{equation*}
        \bE (t \la a, Y \ra + \ell) 
        = \bE \la (ta, 1), U \ra
        = \la (ta, 1), m \ra
        = \la (ta, 1), (0, -\sigma^2/2) \ra
        = - \sigma^2/2
    \end{equation*}
    and
    \begin{equation*}
        \Var \la (ta, 1), U \ra
        = \la (ta, 1) , S (ta, 1) \ra
        = t^2 \la a, \Lambda a \ra + 2 t \la \tau, a \ra + \sigma^2
    \end{equation*}
    where the final expression follows from the definition of $S$ given in the
    statement of lemma~\ref{l:joco}.
    To finish the proof, we observe that, since 
    $t \la a, Y \ra + \ell$ is Gaussian with mean and variance as derived
    above, we must have
    \begin{equation*}
        \bE \exp(t \la a, Y \ra + \ell)
        = \exp \left\{ 
            - \frac{\sigma^2}{2}  
            + t \la \tau, a \ra 
            + t^2 \frac{\la a, \Lambda a \ra}{2}
            + \frac{\sigma^2}{2}
            \right\}
    \end{equation*}
    Cancelling the two instances of $\sigma^2/2$, we find that the moment
    generating function of $\la a, V \ra$ is
    \begin{equation*}
        M(t)
        = \exp \left\{ 
            t \la \tau, a \ra 
            + t^2 \frac{\la a, \Lambda a \ra}{2}
            \right\}
        \qquad (t \in \RR)
    \end{equation*}
    This is precisely the moment generating function for the $N( \la a, \tau
    \ra, \la a, \Lambda a \ra)$ distribution, and hence we have established
    \eqref{eq:dv}.  This completes the proof of theorem~\ref{t:la}.
\end{proof}

\subsection{Theorems~\ref{t:bk3} and \ref{t:cbk3}}

We begin by defining $\Sigma_{\theta}$.  Given $\theta \in \Theta$ and
$p_{\theta}$-Markov sequence $\{X_t\}$, we let $\Sigma_{\theta}$ be the
operator defined by
\begin{equation}
    \label{eq:defc1}
    \la h, \Sigma_{\theta} h \ra 
    = \la h, \Lambda_{\theta} h \ra + 2 \bE P_1 Q_1 + \bE Q_1^2
    + 2 \sum_{t = 2}^{\infty} \bE  
            \left\{ P_1 Q_t + Q_1 P_t + Q_1 Q_t \right\}
\end{equation}
for $P_j := \int \bar p(\theta, X_j, y) h(y) \, dy$ and $Q_j := \int \bE \{ D
\bar p(\theta, X_1, y) \}^{\top} g(X_j,\ldots,X_{j+r}) h(y) \, dy$.  Here
$\Lambda_{\theta}$ is the operator (\ref{eq:defc0}) corresponding to
$p_{\theta}$.

\begin{proof}[Proof of theorem~\ref{t:bk3}]
    Assume the conditions of theorem~\ref{t:bk3}.  Fix $\theta \in \Theta$ and
    let $\{X_t\}$ be $p_{\theta}$-Markov.  We need to prove the
    statement
    \begin{equation}
        \label{eq:bk3c}
        \hat Y_n := n^{-1/2} \sum_{t=1}^n \bar p(\hat \theta_n, X_t, \cdot) 
        \tod N(0, \Sigma_{\theta})
    \end{equation}
    in $L_2$.  Throughout the proof, we use the notation $\rho(y) := \bE D
    \bar p(\theta, X_t, y)$ and $\rho_m(y) := \bE D_m \bar p(\theta, X_t, y)$.
    By differentiability (assumption~\ref{a:coid0}) we can expand $p$
    around $\theta$ to get
    \begin{equation}
        \label{eq:bbp}
        \bar p(\hat \theta_n, x, y)   
        = \bar p(\theta, x, y) 
            + D \bar p(\theta, x, y)^{\top} (\hat \theta_n - \theta)
            + R(\hat \theta_n, x, y)
    \end{equation}
    where $R$ is the remainder term and $\top$ indicates inner product in
    $\RR^M$.  We then have
    \begin{equation*}
        \hat Y_n(y)
        = n^{-1/2} \sum_{t=1}^n 
            \left\{
                \bar p(\theta, X_t, y) +
                D \bar p(\theta, X_t, y)^{\top} (\hat \theta_n - \theta)
                + R(\hat \theta_n, X_t, y) 
            \right\}
    \end{equation*}
    Adding and subtracting $\rho(y)^{\top}(\hat \theta_n - \theta)$, we can
    write this last expression as
    \begin{equation}
        \label{eq:cdpop}
        \hat Y_n(y)
        = n^{-1/2} \sum_{t=1}^n 
            \left\{
                \bar p(\theta, X_t, y) +
                \rho(y)^{\top} (\hat \theta_n - \theta)
            \right\}
            + I_n(y) + J_n(y)
    \end{equation}
    where $I_n := n^{-1/2} \sum_{t=1}^n [D \bar p(\theta, X_t, \cdot) -
    \rho]^{\top} (\hat \theta_n - \theta)$ and $J_n := n^{-1/2} \sum_{t=1}^n
    R(\hat \theta_n, X_t, \cdot)$.  As a first step of the proof, we show that
    $I_n = J_n = o_P(1)$ in $L_2$.  Beginning with
    $I_n$, observe that
    \begin{equation}
        \label{eq:inn}
        \| I_n \|
        = 
            \sum_{m=1}^M
            |\hat \theta_n^m -\theta^m|
        \left\| 
            n^{-1/2} \sum_{t=1}^n 
            \{ D_m \bar p(\theta, X_t, \cdot) - \rho_m \}
        \right\|
    \end{equation}
    Fix $m \in \{1, \ldots, M\}$.  An application of the definition of
    $\eE$ verifies that $\eE D_m \bar p(\theta, X_t,
    \cdot) = \rho_m$.  Moreover, assumption~\ref{a:coid0} gives
    \begin{equation*}
        \| D_m \bar p(\theta, x, \cdot) \|^2
        = \int  D_m \bar p(\theta, x, y)^2 dy
        \leq V_{\theta}(x)^{1/2}
    \end{equation*}
    As a result, theorem~\ref{t:hclt} applies, and hence
    $n^{-1/2} \sum_{t=1}^n \{ D_m \bar p(\theta, X_t, \cdot) -
    \rho_m\}$ converges in distribution to a centered Gaussian in $L_2$.
    Applying the continuous mapping theorem, the norm of this random function
    also converges in distribution, and hence
    \begin{equation*}
        \left\| 
             n^{-1/2} \sum_{t=1}^n 
             \left\{
                 D_m \bar p(\theta, X_t, \cdot) - \rho_m
             \right\}
        \right\|
        = O_P(1)
    \end{equation*}
    Since $|\hat \theta_n^m -\theta^m| = o_P(1)$ by assumption, we then have
    \begin{equation*}
        |\hat \theta_n^m -\theta^m|
        \left\| 
             n^{-1/2} \sum_{t=1}^n 
             \left\{
                 D_m \bar p(\theta, X_t, \cdot) - \rho_m
             \right\}
        \right\|
        = o_P(1) O_P(1)  = o_P(1)
    \end{equation*}
    for each $m \in \{1, \ldots, M\}$.  Returning to \eqref{eq:inn} we see
    that $I_n = o_P(1)$ as claimed.  
    
    Turning to the case of $J_n$, we claim that
    \begin{equation}
        \label{eq:remainder}
        \| J_n \| 
        =
        \left\| 
            n^{-1/2} \sum_{t=1}^{n} R(\hat \theta_n, X_t, \cdot) 
        \right\| 
        = o_{p}(1)
    \end{equation}
    Using the mean value theorem, we can write
    \begin{equation*}
        R(\hat \theta_n, X_t, y)
        = \{D \bar p(\tilde \theta,X_{t},y)
        - D\bar p(\theta,X_{t},y)\}^{\top}(\hat \theta_n -\theta)
    \end{equation*}
    where $\tilde \theta$ lies on the line segment between $\theta$ and
    $\hat \theta_n$.  It follows that
    \begin{equation*}
         n^{-1/2} \sum_{t=1}^n R(\hat \theta_n, X_t, y) 
        = 
        \left[
        \frac{1}{n}\sum_{t=1}^{n}
        \{D \bar p(\tilde \theta,X_{t},y) - D\bar p(\theta,X_{t},y)\}
        \right]^{\top}
        n^{1/2} (\hat \theta_n -\theta)
    \end{equation*}
    Applying the Cauchy-Schwartz inequality in $\RR^M$, we obtain
    \begin{equation}
        \label{eq:rocs}
        \left| n^{-1/2} \sum_{t=1}^n R(\hat \theta_n, X_t, y) \right|
        \leq
        H_n(y) \, n^{1/2} \| \hat \theta_n -\theta \|_E
    \end{equation}
    where $\| \cdot \|_E$ is the norm in $\RR^M$, and
    \begin{equation*}
        H_n(y) := 
        \left\|
        \frac{1}{n}\sum_{t=1}^{n}
        \{D \bar p(\tilde \theta,X_{t},y) - D\bar p(\theta,X_{t},y)\}
        \right\|_E
    \end{equation*}
    From (\ref{eq:rocs}) we obtain the $L_2$ norm inequality
    \begin{equation*}
        \left\| n^{-1/2} \sum_{t=1}^n R(\hat \theta_n, X_t, \cdot) \right\|
        \leq \| H_n \| \cdot O_P(1)
    \end{equation*}
    Hence, to establish (\ref{eq:remainder}), it suffices to prove that $\|
    H_n \| = o_P(1)$.  By the definition of $H_n$ and
    assumption~\ref{a:coidp2}, we have
    \begin{align*}
        \| H_n \| 
        & \leq \frac{1}{n} \sum_{t=1}^n 
        \left[
        \int 
        \| D \bar p(\tilde \theta,X_{t},y) 
            - D\bar p(\theta,X_{t},y) \|_E^2 \, dy
        \right]^{1/2}
        \\
        & \leq 
            \| \tilde \theta - \theta \|_E^{\alpha}
            \frac{1}{n} \sum_{t=1}^n 
                \left[
                    \int K_2(X_t, y)^2 \, dy   
                \right]^{1/2}
    \end{align*}
    By assumption~\ref{a:le}, $\| \hat \theta_n - \theta \|_E^{\alpha} =
    o_P(1)$.  Moreover, by Jensen's inequality,
    \begin{equation*}
        \bE \,
            \left[
                \int K_2(X_t, y)^2 \, dy   
            \right]^{1/2}
        \leq 
            \left[
                \bE
                \int K_2(X_t, y)^2 \, dy   
            \right]^{1/2}
        =
            \left[
                \int \int K_2(x, y)^2 \, dy \, \psi_{\theta}(x) dx
            \right]^{1/2}
    \end{equation*}
    This expression is finite by assumption~\ref{a:coidp2}.
    Applying the scalar law of large nameyear for ergodic
    Markov processes (e.g., \cite{mt09}, theorem~17.1.7), we have 
    \begin{equation*}
        \frac{1}{n} \sum_{t=1}^n 
            \left[
                \int K_2(X_t, y)^2 \, dy   
            \right]^{1/2}
            = O_P(1)
    \end{equation*}
    We conclude that $\| H_n \| \leq o_P(1)O_P(1)  = o_P(1)$, and hence (\ref{eq:remainder}) is
    valid.  

    Returning now to \eqref{eq:cdpop}, we have shown that the last two terms
    on the right-hand side are $o_P(1)$, while assumption~\ref{a:le} and
    simple manipulations show that the first term can be expressed as 
    \begin{equation*}
        n^{-1/2} \sum_{t=1}^n 
            \left\{
                \bar p(\theta, X_t, y) +
                \rho(y)^{\top} g_{\theta}(X_t,\ldots,X_{t+r}) 
            \right\}
            + o_P(1)
    \end{equation*}
    Define $M_t := (X_t,\ldots, X_{t+r})$,
    \begin{equation*}
        F_0(M_t) := p(\theta, X_t, \cdot) + \rho(\cdot)^{\top} g_{\theta}(M_t)
        \quad \text{and} \quad
            F(M_t) := \bar p(\theta, X_t, \cdot) + \rho(\cdot)^{\top}
            g_{\theta}(M_t)
    \end{equation*}
    We see that \eqref{eq:bk3c} will be established if we can show that
    \begin{equation}
        \label{eq:bk3c2}
        n^{-1/2} \sum_{t=1}^n F(M_t) 
        :=
        n^{-1/2} \sum_{t=1}^n 
            \left\{
                \bar p(\theta, X_t, \cdot) +
                \rho^{\top} g_{\theta}(M_t) 
            \right\}
            \tod N(0, \Sigma_{\theta})
    \end{equation}
    We will use theorem~\ref{t:hclt}.  As a first step, we claim that $\eE
    F_0(M_t) = \psi$.  Since $F(M_t) = F_0(M_t) - \psi$, it suffices to show
    that $\eE F(M_t) = 0$. To see that this is so, pick any $h \in L_2$.  From the
    definition and Fubini's theorem we have
    \begin{align*}
        \bE \la F(M_t), h \ra
        & = \bE \int \bar p(\theta, X_t, y) h(y) dy +
        \bE \int \rho(y)^{\top} g_{\theta}(M_t) h(y) dy \\
        & = \int \bE \bar p(\theta, X_t, y) h(y) dy +
        \int \rho(y)^{\top} \bE[g_{\theta}(M_t)] h(y) dy 
    \end{align*}
    Since $\{X_t\}$ is $p_{\theta}$-Markov, both of these expectations are
    zero (by the definition of $\bar p$ and assumption~\ref{a:le}
    respectively), and hence $\eE F(M_t) = 0$ as claimed.

    Let $\hat V(x_0,\ldots,x_r) := \sum_{k=0}^r V(x_k)$.  We saw in
    lemma~\ref{l:msl} that $\{M_t\}$ is geometrically ergodic with weight
    function $\hat V$. In order to apply theorem~\ref{t:hclt}, it remains to
    show that there exists constants $c_0$, $c_1$, $\gamma$ with $\gamma < 1$
    and
    \begin{equation}
        \label{eq:bntac}
        \| F_0(x_0, \ldots, x_r) \|^2 
            \leq c_0 + c_1 \hat V(x_0, \ldots x_r)^{\gamma}
        \qquad \text{for all }  (x_0, \ldots, x_r) \in \XX^{r+1}
    \end{equation}
    To establish \eqref{eq:bntac}, observe first that
    \begin{align*}
        \| F_0(x_0, \ldots, x_r) \|^2
        & = \| p(\theta, x_0, \cdot) + \rho(\cdot)^{\top} g_{\theta}(x_0, \ldots, x_r) \|^2
           \\
           & \leq 2 \int p(\theta, x_0, y)^2 dy + 
           2 \int [ \rho(y)^{\top} g_{\theta}(x_0, \ldots, x_r) ]^2 dy
            \\
           & \leq 2 \int p(\theta, x_0, y)^2 dy + 2 \int \| \rho(y) \|_E^2 dy \;
           \| g_{\theta}(x_0, \ldots, x_r) \|_E^2
    \end{align*}
    Note that $\int \| \rho(y) \|_E^2 dy$ is finite.  Indeed, using Jensen's
    inequality and assumption~\ref{a:coid0}, we have 
    \begin{align*}
      \| \rho(y) \|_E^2 dy  
      & = \sum_{m=1}^M \int \{ \bE D_m \bar p(\theta, X_t, y) \}^2 dy
      \\
      & \leq \sum_{m=1}^M \bE \int \{ D_m \bar p(\theta, X_t, y) \}^2 dy
      \\
      & \leq \sum_{m=1}^M \bE (V_{\theta}(X_t)^{1/2})
           \leq \sum_{m=1}^M (\bE V_{\theta}(X_t))^{1/2}
    \end{align*}
    The final expression is finite by \eqref{eq:vuedef}, and hence $\int \| \rho(y)
    \|_E^2 dy$ is finite as claimed.  As a result, combining
    \eqref{eq:bop} and assumption~\ref{a:le}, there are
    nonnegative constants $c_0$, $a_1$, $a_2$ and $\alpha < 1$ with
    \begin{align*}
        \| F_0(x_0, \ldots, x_r) \|^2
          & \leq c_0 + a_1 V(x_0)^{\alpha} + 
                a_2 \hat V(x_0, \ldots x_k)^{2/(2+\delta)}
            \\
          & \leq c_0 + a_1 \hat V(x_0, \ldots x_k)^{\alpha} + 
                a_2 \hat V(x_0, \ldots x_k)^{2/(2+\delta)}
    \end{align*}
    Setting $\gamma := \max\{\alpha, 2/(2+\delta)\}$ and $c_1 := \max\{a_1,
    a_2\}$ yields \eqref{eq:bntac}.
    The conditions of theorem~\ref{t:hclt} are now verified, and from that
    theorem we obtain $n^{-1/2} \sum_{t=1}^n F(M_t) \tod N(0, S)$ with 
    \begin{equation}
        \label{eq:L}
        \la h, S h \ra 
        = \bE \la F(M_1), h \ra^2
        +  2 \sum_{t=2}^{\infty} \bE \la F(M_1), h \ra \la F(M_t), h \ra
    \end{equation}
    for arbitrary $h \in L_2$.
    Thus \eqref{eq:bk3c2} will be established if we can show that 
    $\la h, S h \ra = \la h, \Sigma_{\theta} h \ra$, which is to say
    that the
    right-hand side of \eqref{eq:L} agrees with the right-hand side of
    \eqref{eq:defc1}. Observe that $\la F(M_j), h \ra = P_j + Q_j$, where
    $P_j$ and $Q_j$ are defined immediately after \eqref{eq:defc1}.  As a
    result, we can write
    \begin{equation*}
        \la h, S h \ra 
        = \bE P_1^2 + 2 \bE P_1 Q_1 + \bE Q_1^2
        +  2 \sum_{t=2}^{\infty} \bE \left\{ P_1 P_t + P_1 Q_t + Q_1 P_t + Q_1 Q_t \right\}
    \end{equation*}
    Since $\la h, \Lambda_{\theta} h \ra = \bE P_1^2 + 2 \sum_{t=2}^{\infty} \bE
    P_1 P_t$ it follows that $\la h, S h \ra = \la h, \Sigma_{\theta} h
    \ra$ as claimed.  Hence \eqref{eq:bk3c2} is valid, completing the proof of
    theorem~\ref{t:bk3}.
\end{proof}

\begin{proof}[Proof of theorem~\ref{t:cbk3}]
    The claim in the theorem is that under $H_0$ we have
    \begin{equation}
        \label{eq:cicbk3}
        \lim_{n \to \infty} 
        \bP \left\{ \hat T_n \leq c^{\Sigma}_{\alpha}(\hat \theta_n) \right\}
        \geq 1 - \alpha
    \end{equation}
    If $H_0$ holds, then $\hat T_n \tod \sum_{\ell} \sigma_{\ell}(\theta_0) Z_{\ell}^2$ and
    $c^{\Sigma}_{\alpha}(\hat \theta_n) \toprob c^{\Sigma}_{\alpha}(\theta_0)$,
    where the first result is due to \eqref{eq:adts3} and the the second
    is due to consistency of $\hat \theta_n$ and continuity of
    $c^{\Sigma}_{\alpha}$ at $\theta_0$.  Slutsky's theorem yields
    $\hat T_n - c^{\Sigma}_{\alpha}(\hat \theta_n) + c^{\Sigma}_{\alpha}(\theta_0)
        \tod \sum_{\ell} \sigma_{\ell}(\theta_0) Z_{\ell}^2$.
    As a result,
    \begin{align*}
        \lim_{n \to \infty} 
        \bP\{ \hat T_n \leq c^{\Sigma}_{\alpha}(\hat \theta_n) \} 
        & = 
        \lim_{n \to \infty} 
        \bP\{\hat T_n - c^{\Sigma}_{\alpha}(\hat \theta_n) + c^{\Sigma}_{\alpha}(\theta_0) 
        \leq c^{\Sigma}_{\alpha}(\theta_0) \}
        \\
        & = \bP \left\{ \sum_{\ell} \sigma_{\ell}(\theta_0) Z_{\ell}^2
        \leq c^{\Sigma}_{\alpha}(\theta_0) \right\}
    \end{align*}
    By the definition of $c^{\Sigma}_{\alpha}(\theta_0)$ this probability is $1-\alpha$.
\end{proof}

\subsection{Consistency: Theorem~\ref{t:consthm}}

\begin{proof}[Proof of theorem~\ref{t:consthm}]
    Assume the conditions of the theorem.  Let $\theta_1$ be the probability
    limit of $\hat \theta_n$.  Recalling that the test statistic is $n \|
    n^{-1} \sum_{t=1}^{n}\bar{p}(\hat{\theta}_{n},X_{t},\cdot)\|^2$ and
    observing that
    \begin{equation*}
      \left\| 
            \frac{1}{n} \sum_{t=1}^n \bar p(\hat \theta_n, X_t,\cdot)
                - \eE \bar p(\hat \theta_n, X_t, \cdot)
      \right\|
            + \left\| 
            \frac{1}{n}\sum_{t=1}^n \bar{p}(\hat{\theta}_{n},X_{t},\cdot)
              \right\|
      \geq \| \eE \bar p(\hat \theta_n, X_t,\cdot)\|
      \geq \epsilon
    \end{equation*}
    where $\epsilon > 0$ is the value of the infimum in the definition of
    $H_1$, we see that the claim in the theorem will be valid whenever
    \begin{equation}
        \label{eq:sfcons}
        \left\| 
            \frac{1}{n} \sum_{t=1}^n \bar p(\hat \theta_n, X_t,\cdot)
                - \eE \bar p(\hat \theta_n, X_t, \cdot)
        \right\|
        =
        \left\| 
            \frac{1}{n} \sum_{t=1}^n  p(\hat \theta_n, X_t,\cdot)
                - \eE p(\hat \theta_n, X_t, \cdot)
        \right\|
    \end{equation}
    converges in probability to zero as $n \to \infty$.  The term in
    \eqref{eq:sfcons} is bounded above by $(I) + (II) + (III)$ where
    \begin{equation*}
        (I) := 
        \left\| 
            \frac{1}{n} \sum_{t=1}^n \{ p(\hat \theta_n, X_t,\cdot)
            - p(\theta_1, X_t, \cdot) \}
        \right\|, \quad
        (II) := 
        \left\| 
            \frac{1}{n} \sum_{t=1}^n p(\theta_1, X_t,\cdot)
                - \eE p(\theta_1, X_t, \cdot)
        \right\|
    \end{equation*}
    and $(III) := \left\| \eE p(\theta_1, X_t, \cdot) - \eE p(\hat \theta_n,
    X_t, \cdot) \right\|$.  Here $\theta_1$ is the in probability limit of
    $\hat \theta_n$.  We claim that all of these terms converge to zero.
    To begin, consider first the term $(I)$.  Evidently
    \begin{equation*}
        (I) \leq
            \frac{1}{n} \sum_{t=1}^n
        \left\| 
             p(\hat \theta_n, X_t,\cdot) - p(\theta_1, X_t, \cdot) 
        \right\|
    \end{equation*}
    Fix $y \in \XX$.  By the mean value theorem in $\RR^M$, there exists a $\bar \theta \in
    \RR^M$ such that
    \begin{equation*}
        | p(\hat \theta_n, X_t, y) - p(\theta_1, X_t, y) |
            \leq \| Dp(\bar \theta, X_t, y) \|_E \cdot 
                \| \hat \theta_n - \theta_1 \|_E
    \end{equation*}
    Applying assumption~\ref{a:cons3} now yields
    \begin{equation}
        \label{eq:bmvt}
        | p(\hat \theta_n, X_t, y) - p(\theta_1, X_t, y) |
            \leq \eta(X_t, y) \| \hat \theta_n - \theta_1 \|_E
            \qquad \text{for all } y \in \XX
    \end{equation}
    Taking the the $L_2$ norm of this expression and averaging over $t$, we obtain
    \begin{equation*}
        (I) \leq 
            \| \hat \theta_n - \theta_1 \|_E
            \;
            \frac{1}{n} \sum_{t=1}^n
            h(X_t)
            \quad \text{where} \quad
            h(x)
            :=
            \left[ 
                \int \eta(x, y)^2 \, dy 
            \right]^{1/2}
    \end{equation*}
    By Jensen's inequality and assumption~\ref{a:cons3},
    \begin{equation*}
        \bE h(X_t)
        = \bE
            \left[ 
                \int \eta(X_t, y)^2 \, dy 
            \right]^{1/2}
        \leq
            \left[ \bE \int \eta(X_t,y)^2 \, dy \right]^{1/2} < \infty
    \end{equation*}
    Since this expectation is finite
    and $\{X_t\}$ is assumed to be stationary and ergodic, 
    assumption~\ref{a:cons1} implies that
    $n^{-1} \sum_{t=1}^n h(X_t)$ converges to $\bE h(X_t)$ in probability.
    Hence we have $(I) \leq o_P(1) O_P(1) = o_p(1)$ as claimed.
    
    Turning to the term $(II)$, the claim that this is $o_P(1)$ follows
    directly from assumption~\ref{a:cons1}, provided that the expectation
    $\eE p(\theta_1, X_t, \cdot)$ exists.  This $L_2$ expectation exists
    whenever the scalar expectation of the norm of $p(\theta_1, X_t, \cdot)$
    is finite.  Finiteness of this scalar expectation is a direct consequence
    of assumption~\ref{a:cons4}.

    Regarding $(III)$, another application of \eqref{eq:bmvt}  gives
    \begin{equation*}
        (III)
         \leq 
            \bE \, \| p(\theta_1, X_t, \cdot) - p(\hat \theta_n, X_t, \cdot) \|
         \leq \bE \,
            \| \hat \theta_n - \theta_1 \|_E
            \;
            \left[ 
                \int \eta(X_t,y)^2 \, dy 
            \right]^{1/2}
    \end{equation*}
    Using the Cauchy-Schwartz inequality, we obtain
    \begin{equation*}
        (III)
        \leq
            \left[ 
            \bE \,
            \| \hat \theta_n - \theta_1 \|_E^2
            \;
            \bE \, \int \eta(X_t,y)^2 \, dy 
            \right]^{1/2}
    \end{equation*}
    The constant term $\bE \, \int \eta(X_t,y)^2 \, dy$ is finite by
    assumption~\ref{a:cons3}.  Moreover $\| \hat \theta_n - \theta_1 \|_E =
    o_P(1)$ implies $\| \hat \theta_n - \theta_1 \|_E^2 = o_P(1)$.  Moreover,
    $\| \hat \theta_n - \theta_1 \|_E^2$
    is uniformly bounded as a result of the boundedness of $\Theta$.
    Hence $\bE \, \| \hat \theta_n - \theta_1 \|_E^2$ converges to zero.

    We conclude that $(I) + (II) + (III) = o_P(1) + o_P(1) + o(1) = o_P(1)$.
    Hence the term in \eqref{eq:sfcons} is $o_P(1)$ and the proof is done.
\end{proof}

\end{document}